\def\mc {\mathcal}
\def\mk {\mathfrak}
\def\RR {\mathbb{R}}
\def\CC {\mathbb{C}}
\def\PP {\mathbb{P}}
\def\EE {\mathbb{E}}
\def\ZZ {\mathbb{Z}}
\def \PPH {\PP(\mc{H})}
\def\mH {\mc{H}}
\def\mF {\mc{F}}
\def\mC {\mc{C}}
\def\mP {\mc{P}}
\def\mkgl {\mk{gl}}
\def\mkg {\mk{g}}
\def\mkk {\mk{k}}
\def\mku {\mk{u}}
\def\mkt {\mk{t}}
\def\kpsi {\ket{\Psi}}
\def\bpsi {\bra{\Psi}}
\def\klambda {\ket{\lambda}}
\def\blambda {\bra{\lambda}}
\def\bone {\mathbbm{1}}
\def \i {\iota}
\def \poly {\mc{P}(\mH)}
\def \vac {\ket{\Omega}}
\def\SpC#1{\bigg{\langle}#1\bigg{\rangle}_\CC}
\def\conv#1{{\rm Conv}\left(#1\right)}
\newcommand{\bra}[1]{\mbox{$\langle #1 |$}}
\newcommand{\ket}[1]{\mbox{$| #1 \rangle$}}
\newcommand{\bk}[2]{\ensuremath{\langle #1 | #2 \rangle}}
\newcommand{\kb}[2]{\ensuremath{| #1 \rangle\!\langle #2 |}}
\newcommand{\spC}[1]{\mbox{$\langle #1 \rangle_\CC$}}
\newcommand{\spR}[1]{\mbox{$\langle #1 \rangle_\RR$}}
\newcommand{\floor}[1]{\ensuremath\lfloor #1\rfloor}
\newtheorem{theorem}{Theorem}
\newtheorem{lemma}[theorem]{Lemma}
\newtheorem{definition}[theorem]{Definition}
\newtheorem{remark}{Remark}
\begin{document}

\title{Quantum marginals from pure doubly excited states}

\author{Tomasz Maci\k{a}\.{z}ek}
\address{Center for Theoretical Physics, Polish Academy of
Sciences, Al. Lotnik\'ow 32/46, 02-668 Warszawa, Poland}

\author{Valdemar Tsanov}
\address{Mathematisches Institut, Universit\"at G\"ottingen, Bunsenstra{\ss}e 3-5, 37073 G\"ottingen, Germany}
\date{\today}

\begin{abstract}
The possible spectra of one-particle reduced density matrices that are compatible with a pure multipartite quantum system of finite dimension form a convex polytope. We introduce a new construction of inner- and outer-bounding polytopes that constrain the polytope for the entire quantum system. The outer bound is sharp. The inner polytope stems only from doubly excited states. We find all quantum systems, where the bounds coincide giving the entire polytope. We show, that those systems are: i) any system of two particles ii) $L$ qubits, iii) three fermions on $N\leq 7$ levels, iv)  any number of bosons on any number of levels and v) fermionic Fock space on $N\leq 5$ levels. The methods we use come from symplectic geometry and representation theory of compact Lie groups. In particular, we study the images of proper momentum maps, where our method describes momentum images for all representations that are spherical.
\end{abstract}

\maketitle

\section{Introduction}\label{sec:introduction}
The quantum marginal problem is the problem of describing the set of all possible reduced density matrices that correspond to some pure state of a system of many particles. This is a difficult and fundamental question, which occurs in many branches of physics and quantum chemistry. In 1995 it was designated by National Research Council of USA as one of ten most prominent research challenges in quantum chemistry \cite{raport}. Let us next briefly describe the main points of this problem.  A pure quantum state can be described by a normalised vector from a Hilbert space. Equivalently, in the density matrix formalism a pure quantum state is a rank-one projector on the corresponding vector. However, one usually does not need the whole density matrix in order to describe some properties of a quantum system and the knowledge of reduced density matrices is sufficient. The reduced density matrix associated to a subsystem $A'$ of the entire quantum system $A$ is defined by averaging the whole density matrix $\rho=\kb{\Psi}{\Psi}
$ over the subsystem $A-A'$. The average is done via computing the partial trace over  the Hilbert space corresponding to $A-A'$, which we denote by $\rho_{A'}=\tr_{A-A'}\rho$. In general, one can consider reduced density matrices that correspond to subsystems that overlap, i.e. $\rho_{A'}$ and $\rho_{A''}$, where $A'\cap A''\neq\emptyset$. Then, $\rho_{A'}$ and $\rho_{A''}$ are called overlapping marginals. Such marginals occur naturally in quantum chemistry. For example, while calculating energy of a system with a finite number of particles, where the interactions occur only in pairs of particles, the sufficient information is contained in the two-particle reduced density matrices. Indeed, for 
such a system the Hamiltonian is a sum of one- and two-particle 
Hamiltonians, $H=\sum_i H_i+\sum_{i,j} H_{i,j}$. Furthermore, if the quantum system is in a state $\ket{\Psi}$, then the density matrix reads $\rho=\kb{\Psi}{\Psi}$ and the energy is given by
\[\fl E=\tr\left(H\rho\right)=\sum_{i}\tr\left(H_{i}\rho\right)+\sum_{i,j}\tr\left(H_{i,j}\rho\right)=\sum_{i}\tr\left(H_{i}\rho_{i}\right)+\sum_{i,j}\tr\left(H_{i,j}\rho_{i,j}\right),\]
where $\rho_i$ is the one-particle reduced density matrix for $i$th particle and $\rho_{i,j}$ is the two-particle reduced density matrix for the subsystem of $i$-th and $j$-th particle. While modelling molecules, the stable electronic configuration is described by a state that minimises the energy. In other words, we are interested in finding the following minimum:
\[\fl E_0=\min_{\rho=\kb{\Psi}{\Psi}}\tr\left(H\rho\right)=\min_{\{\rho_i=\rm{1PRDM},\rho_{i,j}=\rm{2PRDM}\}}\left(\sum_{i}\tr\left(H_{i}\rho_{i}\right)+\sum_{i,j}\tr\left(H_{i,j}\rho_{i,j}\right)\right).\]
Such a problem can in principle be solved numerically. However, taking the domain of minimisation to be the entire space of pure states is very inefficient, as the dimension of such a space grows exponentially with the number of particles. The space of the one- and two-particle reduced density matrices ($\rm{1PRDM},\ \rm{2PRDM}$) is much smaller, therefore an algorithm that uses the optimal domain of minimisation would be much more robust. However, the problem of describing the set of two-particle reduced density matrices seems to be intractable and there are no solutions known, even for low-dimensional systems. From the point of view of computational complexity, the problem of deciding whether a set of two-particle density matrices is compatible with some pure $N$ particle state is QMA complete, i.e. is expected to be intractable even for a quantum computer \cite{LCV,Liu}. One way to simplify this problem is to approximate the original hamiltonian by a sum of one-particle hamiltonians, as, for example, in the Hartree-
Fock method. Then,
 the minimisation is done over the set of one-particle reduced density matrices, which is easier to describe. The problem of deciding compatibility of a set of one particle reduced density matrices with some pure state is called the one-body quantum marginal problem and it is NP-hard \cite{BCMW15}. This problem is computationally simpler mainly because the one-body quantum marginals are non-overlapping. In this paper, we only consider the one-body quantum marginal problem. The solution of this problem does not seem to be directly applicable to the description of overlapping marginals. 
 
 Note that the task of energy minimisation can be reduced to the problem of finding the possible spectra of the one-particle reduced density matrices. This is because any quantum state can be transformed by a proper change of basis to a state, whose one-particle reduced density matrices are diagonal. The problem of finding the criteria that allow one to decide whether a given spectrum is a spectrum of some one-particle reduced density matrix has a long history. The desired criteria have a form of polygonal inequalities for the eigenvalues of the one-particle reduced density matrices. In other words, the set of solutions is a convex polytope. This is a consequence of a deep theorem about convexity properties of momentum maps, where this polytope is called the momentum 
polytope, in a theory developed simultaneously from a symplectic and algebraic point of view around 1980, in particular in works of Atiyah, Kirwan, Mumford, Ness, culminating in a general theorem on symplectic manifold by Kirwan, see \cite{K84} and the references therein. For example, in a quantum system with a fixed number of fermions, the Pauli exclusion principle \cite{Pauli} says that the spectrum of the one-particle reduced density matrix is a set of numbers between $0$ and $1$. In other words, the 
fermionic occupation numbers cannot be greater than $1$. However, these are not all the constraints. In 1970 systems of two and three fermions have been considered by Ruskai, Borland and Dennis, who have found other inequalities for the spectra \cite{BD,Ruskai70}. The first general algorithm for solving this problem has been presented by Klyachko \cite{Klyachko04,Klyachko05,AK08}. Recently, another algorithms have been presented in \cite{CDKW,VW}. Unfortunately, the computational complexity of these algorithms is still significant and they produce many redundant inequalities. The largest systems, for which the inequalities have been computed include the system of $3$ and $4$ fermions on $8$ levels \cite{Klyachko05} and the system of $3$ distinguishable particles on $4$ levels \cite{VW}. There also exists the solution for a system with an arbitrary number of qubits \cite{HSS03}.

The polytope, which is described by the polygonal inequalities for the spectra of the one-particle reduced density matrices, known as the momentum polytope in symplectic geometry, is also called the {\it spectral polytope} in our context. Spectral polytopes are also relevant for studying entanglement in pure quantum systems. This is because certain subpolytopes of the spectral polytope correspond to $SLOCC$ (Stochastic Local Operations assisted by Classical Communication) classes of entanglement. We say that two multipartite states are in the same $SLOCC$ entanglement class if there exists a local linear, invertible operation that transforms one state onto another. The spectral polytopes for different $SLOCC$ classes (in this context they are called entanglement polytopes) arise, when we allow the operations that transform the states asymptotically. Namely, the one-particle spectra corresponding to states that belong to the closure of a single $SLOCC$ entanglement class, form a subpolytope of the entire 
spectral polytope. This fact provides a necessary criterion for two chosen states to belong to the same $SLOCC$ class. For the general theory of this phenomenon with examples, see \cite{WDGC12,SOK12,MS15} and \cite{Ness} for the mathematical description.

In this paper, we introduce new lower and upper bounds for the spectral polytope in the form of polytopes contained in it, or containing it, respectively. These polytopes arise from geometric constructions around the locus of non-entangled states, viewed as a submanifold of the total space. We explain the general representation theoretic framework for a system with a compact symmetry group acting irreducibly. The manifolds of non-entangled states for this class of systems are fundamental objects in representation theory and algebraic geometry, where they are usually viewed in the projective state space, and constitute exactly the class of homogeneous projective varieties, also known as flag varieties. We focus on several physical scenarios including systems of many distinguishable particles, systems of many bosons, systems of many fermions and fermionic Fock space with a finite number of modes. In these scenarios, the loci of non-entangled states are: i) the separable states for systems of distinguishable particles, a.k.a. Segre variety; ii) spin coherent states for the angular momentum operator or, more generally, permamental states of bosons, a.k.a. Veronese variety; iii) Slater determinantal states for the fixed number of fermions, a.k.a. Gra{\ss}mann variety in its Pl\"ucker embedding iv) fermionic gaussian states in the fermionic Fock space, a.k.a. variety of pure spinors. Systems of distinguishable particles are particularly important in quantum information theory \cite{Horodecki}, as many quantum information protocols consider a situation, where 
distant parties are allowed to use arbitrary local quantum operations and send classical information. Scenarios with a fixed number of fermions are mainly considered in quantum chemistry, where the variational methods are widely used. In this context, the inequalities for the facets of the spectral polytope are called the generalised Pauli constraints and are useful in finding ground states of fermionic systems \cite{Klyachko09,SGC13,S15}. Finally, there has been a growing interest in the scenarios involving the fermionic Fock space in the context of quantum computations with noise \cite{Bravyi,MCT13,OGK14} and quantum entanglement \cite{SL14,LH14,OK13, OK14}. Despite such an interest, spectral polytopes in the context of fermionic Fock spaces have not been considered anywhere in the literature. We treat all the scenarios using the language of the representation theory, which allows us to state the results for all the scenarios simultaneously. In the same time, it provides proper geometric tools for dealing 
with the quantum marginal problem and the description of spectral polytopes.
In particular, the interpretation of the problem in terms of the representation theory of the symmetry group, allows one to use the momentum map \cite{GS}. As we explain in section \ref{sec:momentum_map}, the spectral polytope constitutes the image of the momentum map. Both bounds that we discuss in this work are given by cones whose vertex is the {\it highest weight} of the considered representation. As we explain in section \ref{sec:coh_orbit}, the highest weight can be understood as the set of spectra of one-particle reduced density matrices corresponding to a non-entangled state of a system of many-particles. Equivalently, the non-entangled states are the ground states of some noninteracting systems of many particles for hamiltonians that have non-degenerate spectra. To see this, consider a system of $L$ distinguishable particles with local hamiltonian $H=\sum_{k=1}^L H_k$. For each $H_k$ we compute its eigenstates that form the local one-particle bases $\ket{i},\ 0\leq i< n_k$, where $n_k$ is the 
dimension of the Hilbert space of the $k$-th particle. We arrange the states in each basis according to their energies, i.e. $\bra{i}H_k\ket{i}<\bra{j}H_k\ket{j}$ for $i<j$. Then, the ground state of the entire system is given by the product of the local ground states $\ket{00\dots 0}$, which is non-entangled. The construction of the cones bounding the spectral polytope uses the decomposition of the Hilbert space of the considered quantum system into spaces spanned by vectors obtained by exciting the ground state a certain number of times. By an excited state we understand a state which is obtained from the ground state by a sequence of creation operators from the Lie algebra of the symmetry group. Geometrically, the locus of states obtained by all possible sequences of $k$ excitations is known as the $k$-th osculating space to the orbit of non-entangled states. In the above example, the basis states with a single excitation are those, where one particle is in an excited state. Such states are of the form $\ket{
0\dots0i0\dots0}$, $i>0$. Similarly, the doubly excited states are of the form $\ket{0\dots0i0\dots0\dots0j0\dots0}$, $i,j>0$. We explain these concepts in Section \ref{sec:N2_poly}.

It turns out, that the information obtained by considering a relatively small number of excitations is sufficient to determine completely the spectral polytope for many quantum systems. Indeed, doubly excited states suffice in a certain ``generic case'' from a geometric point of view, the case where the locus of non-entangled states does not contain linear spaces of high dimension, i.e. the Gau{\ss}ian second fundamental form is nondegenerate. The question arises: {\it how many excitations are needed to determine the spectral polytope near the image of the ground state?} Larger physical systems do not always fall in this generic category, but doubly excited states suffice in some cases nonetheless. Nongeneric cases require new methods, and we are lead to an interesting observation concerning {\it spherical actions} of the symmetry group on the state space. In Section \ref{sec:computational_results} we list explicitly all quantum systems, whose spectral polytope 
corresponding to pure states is 
obtained form the doubly excited states. We also characterise these polytopes in terms of their vertices and facets. 

The manuscript is organised as follows. In sections \ref{sec:preliminaries} and \ref{sec:momentum_map} we describe the relevant representations and introduce the necessary notions from representation theory. Our aim is to provide a self-contained introduction for a reader, who is a non-specialist, so that the main results, which are formulated in sections \ref{sec:N2_poly} and \ref{sec:computational_results}, can be accessed more easily easier. In section \ref{sec:cone_at_hw} we introduce the mathematical background for the main results. Most of section \ref{sec:cone_at_hw} can be skipped by a reader, who is not interested in the mathematical details.

\section{Preliminaries}\label{sec:preliminaries}
We are studying the finite-dimensional systems of distinguishable and indistinguishable particles. Each scenario we introduce below concerns a compact, connected semisimple group $K$ and its complexification $G$ that act on a finite dimensional Hilbert space $\mH$, which is the space of pure (unnormalised) states of a quantum system. Let us next specify the scenarios. Firstly, we give a detailed description for the distinguishable case and the indistinguishable setting will be described via analogy. The Hilbert space for distinguishable particles is the tensor product of one-particle spaces
\[\mH_D=\CC^{N_1}\otimes\CC^{N_2}\otimes\dots\otimes\CC^{N_L},\]
The product basis of $\mH_D$ consists of tensor products of one-particle basis vectors. 
\begin{equation}\label{basis_d}
\mH_D=\SpC{\ket{i_1}\otimes\ket{i_2}\otimes\dots\otimes\ket{i_L}:\ 1\leq i_k\leq N_k},
\end{equation}
where $\langle \cdotp\rangle_{\CC}$ denotes the linear span over $\CC$. The allowed operations are local and invertible. In particular, we consider the action of local unitary (LU) operations, which is linear on $\mH_D$. A tuple of local unitary operators $U=(U_1,U_2,\dots,U_L)$, $U_k\in U(N_k)$ acts on a basis vector in the following way:
\[U\ket{i_1}\otimes \ket{i_2}\otimes\dots\otimes \ket{i_L}=U_1\ket{i_1}\otimes U_2\ket{i_2}\otimes\dots\otimes U_L\ket{i_L}.\]
Group $U(N_k)$ acts on $\CC^{N_k}$ via $N_k\times N_k$ matrices with complex entries that satisfy the condition $UU^\dagger=\bone_{N_k}$. We denote the group of $LU$ operators by $K$. Another group of local operations that we use is the group of invertible $SLOCC$ operations, which are the general linear operators, $G=GL(N_1)\times GL(N_2)\times\dots\times GL(N_L)$.

Let us next set up the notation and introduce the basic notions regarding the Lie algebras $\mkgl(N)$ and $\mku(N)$, which are the building block in all of the above cases. A fixed unitary basis $|1\rangle,...,\ket{N}$ allows us to represent $\mkgl(N)$ as represented by all $N\times N$ matrices with complex entries, whereas $\mku(N)$ is represented by antihermitian complex matrices, i.e. $X=-X^\dagger$. A natural basis for $\mk{gl}(N)$ consists of the matrices with a single nonzero entry equal to 1:
\[E_{i,j}:=\kb{i}{j}\in\mk{gl}(N),\ 1\leq i,j\leq N.\]
The hermitean conjugates of the basis vectors are simply their transpositions, $E_{i,j}^{\dagger}=E_{j,i}$, and $\mk{u}(N)$ has a basis given by
\begin{eqnarray*}
X_{i,j}:=\i(E_{i,j}+E_{j,i}),\ Y_{i,j}:=E_{i,j}-E_{j,i},\ 1\leq i<j\leq N,\\
\iota H_i , \; 1\leq i\leq N \;, 
\end{eqnarray*}
where
\begin{equation}\label{unitary_diag_op}
H_i:=\kb{i}{i},\ 1\leq i\leq N.
\end{equation}
denotes the basis of the diagonal matrices. These form a maximal abelian subalgebra consisting of semisimple elements, denoted $\mk t_N$, called a Cartan subgalgebra or a maximal torus. The rest of the basis vectors are eigenvectors for $\mk t_N$ with respect to conjugation (the adjoint action). The eigenvalues $\alpha_{i,j}$, viewed as functionals on $\mk t_N$, are called the roots of $\mk{gl}(N)$. Correspondingly, $E_{i,j}$ are called root-operators. The upper triangular matrices form a maximal solvable subalgebra of $\mk{gl}(N)$, called a Borel subalgebra. The corresponding root $\alpha_{i,j}$, $i<j$, are called positive roots, while the roots of the lower triangular matrices are called negative roots. We have indeed $-\alpha_{i,j}=\alpha_{j,i}$.


Thus we have
\[\mku(N)=\bigoplus_{i=1}^{N} \langle \i H_i\rangle_\RR\oplus\bigoplus_{1\leq i< j\leq N}\langle X_{i,j},Y_{i,j}\rangle_\RR,\]
and $\mkgl(N)$ is the complexification of $\mku(N)$. Equivalently,
\[\mkgl(N)=\bigoplus_{i=1}^{N} \langle H_i\rangle_\CC\oplus\bigoplus_{1\leq i< j\leq N}\langle E_{i,j},E_{i,j}^\dagger\rangle_\CC.\]
 Both algebras are naturally equipped with a non-degenerate bilinear form, which is the Hilbert-Schmidt product:
\[(X,Y)=\tr(X^\dagger Y).\]
The Hilbert-Schmidt product is positive definite, when restricted to $\mku(N)$. Note that the introduced basis of $\mku(N)$ is orthogonal with respect to the Hilbert-Schmidt product, hence we have an isomorphism $\mku(N)\simeq \RR^N\oplus\RR^{N\choose 2}$.

In the case of distinguishable particles, the considered algebras are direct sums of the algebras of the components, i.e.
\[\mkg=\bigoplus_{k=1}^L \mkgl(N_k),\ \mkk=\bigoplus_{k=1}^L \mku(N_k).\]
An algebra element $X=(X_1,\dots,X_L)$, where $X_k\in\mkgl(N_k)$ or $X_k\in\mku(N_k)$ respectively, is represented on $\mH_D$ as matrix of the form
\begin{equation}\label{algebra_d}
X_1\otimes\bone\otimes\dots\otimes\bone+\bone\otimes X_2\otimes\bone\dots\otimes\bone+\dots+\bone\otimes\dots\otimes\bone\otimes X_L.
\end{equation}
Choosing $\{X_i\}_{i=1}^L$ to be elements of the introduced basis for the components of the respective algebras, we obtain a basis of $\mkg$ and $\mkk$.

The case of indistinguishable particles covers fermions and bosons. The Hilbert space for $L$ bosons or fermions on $N$ modes (levels) are respectively the symmetric and antisymmetric tensors from $H_D=\left(\CC^N\right)^{\otimes L}$. We denote those spaces in the following way
\[\mH_B=S^L(\CC^N),\ \mH_F=\Lambda^L(\CC^N).\]
Note that the number of fermions cannot be greater than the number of modes, $L\leq N$. The basis of $\mH_B$ consists of symmetric products of basis vectors from $\CC^N$, while the basis of $\mH_F$ consists of exterior products of basis vectors from $\CC^N$.
\begin{eqnarray}\label{basis_b}
\mH_B=\SpC{ \ket{i_1}\vee\ket{i_2}\vee\dots\vee\ket{i_L}:\ i_1\leq i_2\leq\dots\leq i_L}, \\
\mH_F=\SpC{\ket{i_1}\wedge\ket{i_2}\wedge\dots\wedge\ket{i_L}:\ i_1< i_2<\dots< i_L}. 
\label{basis_f}
\end{eqnarray}
For bosons, we alternatively denote the basis vectors by counting the mode polulations. Vector $\ket{n_1,n_2,\dots,n_N}$ corresponds to $\ket{i_1}\vee\ket{i_2}\vee\dots\vee\ket{i_L}$ such that $n_k:=\#\{l:i_l=k\}$. In both cases the considered groups are $K=U(N)$ and $G=GL(N)$, which act via the diagonal action, i.e.
\[U\ket{i_1}\vee \ket{i_2}\vee\dots\vee \ket{i_L}=U\ket{i_1}\vee U\ket{i_2}\vee\dots\vee U\ket{i_L},\]
and the same for fermions. The Lie algebra elements are represented by the following matrices
\begin{equation}\label{algebra_bf}
X\otimes\bone\otimes\dots\otimes\bone+\dots+\bone\otimes\dots\otimes\bone\otimes X,\ X\in\mkk{\rm\ or\ }X\in\mkg.
\end{equation}

\paragraph*{Fermionic Fock space} 
The last scenario that we consider is a bit different from the previous ones, as group $K$ is the real spin group $K=Spin(2N)$ and $G$ is the complex spin group, $G=Spin(2N,\CC)$. We direct the reader to \cite{MichalPhD} for a detailed description of this setting from the physical point of view. Groups $K$ and $G$ act on Majorana operators defined on the fermionic Fock space. The fermionic Fock space is the direct sum of all possible fermionic spaces in a $N$ mode system, i.e.
\[\mF=\bigoplus_{L=0}^N \Lambda^L(\CC^N),\]
where $\Lambda^0(\CC^N)=\SpC{\ket{\Omega}}\simeq \CC$ is the vacuum. Recall that a basis of the Fock space can be constructed by acting with a sequence of creation operators on the vacuum state, i.e. 
\begin{equation}\label{basis_fock}
\fl \mF=\SpC{\vac}\oplus\SpC{a_{i_1}^\dagger a_{i_2}^\dagger\dots a_{i_L}^\dagger\vac:\ 1\leq i_1<i_2<\dots<i_L\leq N,\ 1\leq L\leq N}.
\end{equation}
The elements of the above basis are the same as vectors defined in equation (\ref{basis_f}), i.e. $a_{i_1}^\dagger a_{i_2}^\dagger\dots a_{i_L}^\dagger\vac=\ket{i_1}\wedge\ket{i_2}\wedge\dots\wedge\ket{i_L}$. The creation operators satisfy the anticommutation relations $\{a_i,a_j\}=0$, $\{a_i,a_j^\dagger\}=\delta_{ij}\bone$. The $Spin(2N)$ group acts via unitary operations on the $2N$ {\it Majorana operators}, which are
\[c_{2i-1}=a_i+a_i^\dagger,\ c_{2i}=\i (a_i-a_i^\dagger),\ i\in\{1,2,\dots,N\}.\]
The Majorana operators anticommute, i.e. $\{c_i,c_j\}=2\delta_{ij}\bone$. The action reads
\[Uc_i U^\dagger=\sum_j R_{i,j}c_j,\]
where the matrix $R$ belongs to the special orthogonal group $SO(2N)$, i.e. $R^TR=\bone$, ${\rm det} R=1$. Any state from the Fock space can be also written in terms of Majorana operators, following \cite{MCT13}, as $\kpsi=\hat\gamma\vac$, where
\[\hat\gamma=\alpha_0\bone+\sum_{k=1}^N\i^k\sum_{1\leq i_1<i_2<\dots<i_{2k}\leq 2N}\alpha_{i_1,i_2,\dots,i_{2k}}c_{i_1}c_{i_2}\dots c_{i_{2k}}.\]
The action of $U\in Spin(2N)$ on $\kpsi$ is defined via the action of $U$ on $\hat\gamma$ as $U\hat\gamma U^\dagger$, which boils down to the following action on the components of $\hat\gamma$
\[(Uc_{i_1}U^\dagger)(Uc_{i_2}U^\dagger) \dots (Uc_{i_{2k}}U^\dagger).\]
There are two irreducible components of the representation of $Spin(2N)$. The first irreducible component is the subspace with an even number of fermions, $\mF_e=\bigoplus_{K=0}^{\floor{N/2}}\Lambda^{2K}(\CC^N)$ and the second irreducible component is the subspace with an odd number of fermions $\mF_o=\bigoplus_{K=0}^{\floor{N/2}}\Lambda^{2K+1}(\CC^N)$. The Lie algebra of $Spin(2N)$ is the same as the Lie algebra of $\mk{o}(2N)$ and is given by antisymmetric $2N\times 2N$ matrices with real entries. The complexified algebra $\mk{o}(2N)^\CC$ is represented on the Fock space, generated by the operators $\frac{1}{2}c_ic_j,\ 1\leq i<j\leq 2N$ giving a basis. These operators form an orthogonal basis of $\mk{o}(2N)^\CC$ with respect to the Hilbert-Schmidt norm, and the elements have squared norm
\[\frac{1}{4}\tr(c_ic_j(c_ic_j)^\dagger)=\frac{1}{4}\tr(c_ic_jc_jc_i)=\frac{1}{4}\tr\bone=\frac{1}{4}2^N.\]
The action of $\mk{o}(2N)$ on the components of $\hat\gamma$ is given by commutators
\begin{equation}\label{algebra_fock}
([c_{i_1},X])([c_{i_2},X]) \dots ([c_{i_{2k}},X]).
\end{equation}
In order to introduce the root decomposition of $\mk{o}(2N)$ and its complexification $\mk{o}^\CC(2N)$, we will use the creation and annihilation operators instead of the Majorana operators. The basis of the nondiagonal part given by root operators, which are pairs of annihilation operators of the form $a_i a_j,a_ia_j^\dagger,\ 1\leq i<j\leq N$ and their hermitian conjugates. The diagonal part of $\mk{o}(2N)$ is constructed from the shifted occupation number operators
\begin{equation}\label{ffs-diagonal}
H_i:=\frac{1}{2}(2a_i^\dagger a_i-\bone)=-\frac{\i}{2}c_{2i-1}c_{2i},\ 1\leq i\leq N.
\end{equation}
In such a basis, the complex Lie algebra is given by
\[\mk{o}(2N)^\CC=\bigoplus_{i=1}^N\spC{H_i}\oplus\bigoplus_{1\leq i<j\leq N}\spC{a_j^\dagger a_i^\dagger,a_ia_j,a_ia_j^\dagger,a_ja_i^\dagger}.\]
The nondiagonal part of $\mk{o}(2N)$ is spanned by 
\begin{eqnarray*}
A_{i,j}=\i(a_j^\dagger a_i^\dagger+a_i a_j)=\frac{\i}{2}(c_{2i}c_{2j}-c_{2i-1}c_{2j-1}),\\ B_{i,j}=a_j^\dagger a_i^\dagger-a_i a_j=-\frac{\i}{2}(c_{2i}c_{2j-1}+c_{2i-1}c_{2j}), \\
C_{i,j}=a_ia_j^\dagger-a_ja_i^\dagger=\frac{1}{2}(c_{2i-1}c_{2j-1}+c_{2i}c_{2j}), \\
D_{i,j}=\i(a_ia_j^\dagger+a_ja_i^\dagger)=\frac{1}{2}(c_{2i}c_{2j-1}+c_{2i-1}c_{2j}).
\end{eqnarray*}
Hence, 
\[\mk{o}(2N)=\bigoplus_{i=1}^N\spR{\i H_i}\oplus\bigoplus_{1\leq i<j\leq N}\spR{A_{i,j},B_{i,j},C_{i,j},D_{i,j}}.\]

\paragraph*{The common description} 
Let us formulate the above scenarios in the language of representation theory of compact and complex reductive Lie groups and Lie algebras. The results of this paper will be formulated in these terms, as they allow to treat all the scenarios simultaneously. The necessary theory for compact groups can be found for instance in the book \cite{Broe-TomDie}, while the Lie algebra aspect - in \cite{Humphreys}. In each of the scenarios we had a compact connected group $K$ and its complexification $G$ acting by an irreducible representation on a Hilbert space $\mH$, $K$ by unitary and $G$ by complex linear transformations. Although we are primarily interested in irreducible representations, we shall introduce most part of the setting without assuming irreducibility. In fact some intermediate results about reducible representations are needed for our considerations, concerning for instance the behaviour of $\mH$ with respect to subgroups of $K$, under which $\mH$ might be reducible. We will return to the notation for 
the specific cases in Section \ref{sec:computational_results}.

We consider a finite dimensional unitary representation of a compact connected Lie group $K\to U(\mH)$. The complexification of $K$ is a complex reductive Lie group with a complex (algebraic) representation $G\to GL(\mH)$. There are derived Lie algebra representations $\mk k\to\mk u(\mH)$ and $\mk g\to\mk{gl}(\mH)$. Let us fix a Cartan subgroup $T\subset K$, characterised as a maximal torus, with Lie algebra $\mkt\subset\mk k$. $T$ is a (maximal) diagonalizable subgroup in any (infinitesimally injective) representation of $K$, so $\mH$ admits a basis of $T$-eigenvectors, which are also $\mk t$-eigenvectors, called weight vectors. (For our scenarios, these are the basis vectors defined in (\ref{basis_d}), (\ref{basis_b}), (\ref{basis_f}) and (\ref{basis_fock}).) The eigenvalue is given by a function $T\to\CC^*$, which is a group homomorphism, called a character. The derived Lie algebra map is a complex linear functional $\mk t\to\CC$, called a weight, with values in $i\RR$ since the group $K$ acts unitarily. 
Identifying $\mk t$ with $i\mk t^*$ via the Hilbert-Schmidt inner product and using the Dirac notation $\vert\eta\rangle$ for weight vectors of weight $\eta$, we write
\[H_\xi\ket{\eta}=(\eta,\xi)\ket{\eta}.\]
with $\xi,\eta\in\mk{t}\simeq\RR^r$ representing respectively the acting element and the weight ($r$ is the rank of the Lie algebra). The identification $\mk{t}\simeq\RR^r$ of diagonal operators with an abstract Cartan algebra element is made via choosing an orthogonal basis of $\mkt$. The basis diagonal operators for different scenarios are the following: i) Eq.(\ref{unitary_diag_op}) for a single particle, ii) distinguishable particles -- Eq.(\ref{algebra_d}) wich each $X_k$ equal to some $H_i$ from equation (\ref{unitary_diag_op}), iii) bosons and fermions -- Eq.(\ref{algebra_bf}) with $X=H_i$ for $H_i$ from (\ref{unitary_diag_op}), iv) fermionic Fock space -- Eq.(\ref{ffs-diagonal}). With an abuse of notation, let us reenumerate the diagonal operators by $(H_i)_{i=1}^r$. Then, for $\xi=(\xi_1,\dots,\xi_r)$, we have
\[H_\xi:=\xi_1H_1+\dots\xi_r H_r.\]
The elements of $\mk t^*\cong\mk t$ arising as derivations of characters of the torus $T$ form a lattice $\Lambda$, naturally embedded as a copy of $\ZZ^r$ inside the Lie algebra $\mk t$. All weights of finite dimensional representations of $K$ are elements of $\Lambda$. We denote by $\mH^\eta$ the weight space in $\mH$ of weight $\eta$ and $Supp(\mH)\subset\Lambda$ we denote the set of weights occuring in $\mH$. Thus the action of $T$ on $\mH$ determines the weight space decomposition of the representation:
$$
\mH \cong \bigoplus\limits_{\eta\in Supp(\mH)} \mH^\eta \;.
$$
The dimension $\dim \mH^\eta$ is called the multiplicity of the weight. If $\dim \mH^\eta=1$ the weight vector is unique up to scalar; we say that the weight is multiplicity-free. We call a representation weight-multiplicity-free\footnote{We use the longer name to avoid confusion with the several existing notions of ``multiplicity-free representations'', depending on which multiplicity one refers to. For instance, in \cite{Knop-MultFreeSpa} the author calls ``multiplicity-free'' what we call here by its other popular name ``spherical''.} if all of its weight spaces are one-dimensional. Note that all our scenarios have this property. For such representations, the basis of weight vectors in $\mH$ is uniquely defined by $T$, up to independent scaling, This is not the case, for instance, with the adjoint representation where the role of $\mH$ is taken by the complex Lie algebra $\mk g$, whose weight space decomposition and basic properties we recall next.

The structure of a compact connected Lie group is encoded in its adjoint representation on the Lie algebra, its complexification $Ad:G\to GL(\mk g)$. (For a group linearly represented on a vector space $\mH$, provided the representation is injective on the Lie algebra, the adjoint action can be expressed by conjugation of the linear operators, $Ad_g(H_\xi)=gH_\xi g^{-1}$, for $g\in G$ and $\xi\in\mk g$, where we abuse notation identifying the group elements with the linear operators on $\mH$ given by the representation. The result is independent on the choice of $\mH$.) The weight space decomposition of $\mk g$ with respect to the fixed Cartan subgroup $T$ is called the root space decomposition and the nonzero weights are called roots; these were already mentioned for our examples. The weight $0$ occurs and its weight space is the complexified Cartan subalgebra $\mk t^\CC = \mk g^0$. The set of nonzero weights $\Delta=Supp(\mk g)\setminus\{0\}$ is called the root system. The root spaces $\mk g^\alpha$, $\alpha\in\Delta$ are one-
dimensional, and we pick generators $E_\alpha$, called root-vectors, or root-operators when a representation $\mH$ is given. Thus the root space decomposition has the form
\[ \mk g = \mk t^\CC \oplus (\bigoplus\limits_{\alpha\in\Delta}) \langle E_\alpha\rangle_\CC .\]
It is known from the structure theory of semisimple Lie groups that the negative ones can be chosen so that $E_\alpha^\dagger=E_{-\alpha}$, so that
\[ \mk k = \mk t \oplus (\bigoplus\limits_{\alpha\in\Delta_+} \langle E_{\alpha}-E_{-\alpha} , \iota(E_\alpha+E_{-\alpha})\rangle_\CC) \;. \]

The most basic property of root-operators is that, for any representation $\mk g \to \mk{gl}(\mH)$ and any weight $\eta$, with corresponding weight space $\mH^\eta=\SpC{\ket{\eta}}$, we have
\[ E_\alpha : \mH^\eta \to \mH^{\eta+\alpha} \quad,\quad E_\alpha\ket{\eta}\propto\ket{\eta+\alpha}.\] 
Any oriented hyperplane in $\mk t\setminus\Delta$ defines a decomposition of the root system into positive and negative roots, $\Delta=\Delta_+\cup\Delta_-$, which satisfies $\Delta_+=-\Delta_-$. We fix one such decomposition. The set of root-operators is accordingly split in two subsets, known in physics as creation and annihilation operators. Our convention is to associate positive roots to annihilation operators; in this way the ground state of a system, defined by the vanishing of the annihilation operators will correspond to a highest weight vector with the standard conventions of representation theory. The positive root-operators, together with the Cartan subalgebra span a subalgebra of $\mk g$,
\[ \mk b =  \mk t^\CC \oplus \SpC{E_\alpha ,\; \alpha\in\Delta_+} \]
which is a maximal solvable subalgebra, called a Borel subalgebra.

The Cartan-Weyl theorem, classifying the irreducible representations of reductive complex Lie algebras (and, consequently, of compact connected groups), states that in every irreducible representation $\mk g\to\mk{gl}(\mH)$, the Borel subalgebra $\mk b$ has a unique, up to scalar, eigenvector, called the highest weight vector of $\mH$, denoted here by $|\lambda\rangle$. Its weight $\lambda$ is called the highest weight and determines the irreducible representation up to isomorphism. (See e.g. \cite{Broe-TomDie},\cite{Humphreys}.)

Let us go back to our examples, starting with distinguishable particles. Taking $\xi$'s as standard basis vectors in $\RR^m$, it is easy to see that the weights for the distinguishable scenario are of the form
\begin{eqnarray*}
\left((\eta^{(1)}_1,\eta^{(1)}_2,\dots,\eta^{(1)}_{N_1}),\dots,(\eta^{(L)}_1,\eta^{(L)}_2,\dots,\eta^{(L)}_{N_L})\right),
\end{eqnarray*}
where for each $k$ we have $\eta_i^{(k)}\in\{0,1\}$ and $\#\{i:\ \eta_i^{(k)}=1\}=1$. For $S^L(\CC^N)$, we have
\[(\eta_1,\eta_2,\dots,\eta_{N}):\ \eta_i\in\{0,1,\dots,L\}{\rm\ and\ }\sum_i \eta_i=L.\]
For $\Lambda^L(\CC^N)$, we have
\[(\eta_1,\eta_2,\dots,\eta_{N}):\ \eta_i\in\{0,1\}{\rm\ and\ }\#\{i:\ \eta_i=1\}=L.\]
The positive root operators have the form
\begin{eqnarray}
E_{i_1,j_1}^{(1)}\otimes\bone\otimes\dots\otimes\bone,\dots,\bone\otimes\dots\otimes\bone\otimes E_{i_L,j_L}^{(L)},\ 1\leq i_k<j_k\leq N_k, \\
E_{i,j}\otimes\bone\otimes\dots\otimes\bone+\dots+\bone\otimes\dots\otimes\bone\otimes E_{i,j},\ 1\leq i<j\leq N, \\
a_i a_j,\ a_i a_j^\dagger\ 1\leq i<j\leq N.
\end{eqnarray}
for the distinguishable, indistinguishable and Fock space scenarios, respectively.

For the algebra $\mku(N)$, all the roots are of the form
\[(\alpha_1,\dots,\alpha_N):\ \alpha_i\in\{-1,0,1\},\ \#\{i:\ \alpha_i=1\}=\#\{i:\ \alpha_i=-1\}=1,\]
and the positive roots satisfy the condition that if $\alpha_i=1$ and $\alpha_j=-1$, then $i<j$.
The roots of $\mk{o}(2N)$ algebra read 
\begin{eqnarray*}
(\alpha_1,\dots,\alpha_N):\ \alpha_i\in\{-1,0,1\},\ \#\{i:\ \alpha_i\neq 0\}=2,
\end{eqnarray*}
i.e. signs of the nonzero components vary independently. The positive roots are such that if $\alpha_i\neq 0$ and $\alpha_j\neq 0$ for $i<j$, then $\alpha_i=1$ and $\alpha_j=\pm1$.

There are two properties of the representations corresponding to scenarios that involve fermions and distinguishable particles that make our considerations simpler. Namely, all the aforementioned representations are {\it minuscule} and {\it weight-multiplicity-free}. The latter ones were introduced above. Minuscule representations are representations, for which all weights are extreme points of $\conv{Supp(\mH)}$, cf. e.g. \cite{LM}). Minuscule representations are weight-multiplicity-free, since the extreme weights are multiplicity-free for all irreducible representations.

Let us next define a subgroup of $K$, which will play a central role in the general construction of the spectral polytope from doubly excited states.
\begin{definition}[Projective stabiliser of $\kpsi$]
The projective stabiliser of $\kpsi$ is the subgroup $K_{[\Psi]}\subset K$, which consists of elements of $K$, for which $\kpsi$ is an eigenvector, i.e.
\[K_{[\Psi]}:=\{k\in K:\ k\kpsi=c\kpsi{\rm\ for\ some\ }c\in\CC\}.\]
\end{definition}
\noindent The name {\it projective} stems from the fact that subgroup $K_{[\Psi]}$ does not change under the complex scaling of $\kpsi$. In other words, $K_{[\Psi]}$ is the stabiliser of the complex line over $\kpsi$, which is a point in the complex projective space $\PPH$.

\section{Spectral polytope via the momentum map}\label{sec:momentum_map}
In this section, we introduce a geometric description of the one-particle reduced density matrices. These matrices appear naturally when one considers the {\it momentum map}, which we denote by $\mu$. The momentum map assigns to a state $\ket{\Psi}$ an element of the Lie algebra $\mkk$ in the following way.
\begin{equation}\label{momentum_def}
\mu:\ \left(\mu(\ket{\Psi}),X\right)=\frac{1}{i}\frac{\bra{\Psi}X\ket{\Psi}}{\bk{\Psi}{\Psi}}{\rm\ for\ all\ }X\in\mkk,
\end{equation}
where $(\cdotp,\cdotp)$ denotes the Hilbert-Schmidt product. Note that the momentum map is invariant under complex scaling of $\kpsi$, i.e. $\mu(c\kpsi)=\mu(\kpsi)$, $c\in\CC-\{0\}$. Therefore, we can restrict to considering only normalised states $\bk{\Psi}{\Psi}=1$. For a normalised state, equation (\ref{momentum_def}) can be rephrased in terms of the expectation values of hermitian observables $\i X$
\[\mu:\ \tr\left(\mu(\ket{\Psi})X\right)=\EE_\Psi(\i X){\rm\ for\ all\ }X\in\mkk.\]
Therefore, $\mu(\ket{\Psi})$ is the unique operator in $i\mathfrak{k}$ whose expectation values in $i\mathfrak{k}$ coincide with those of $\kpsi$. In the distinguishable scenario, when $\i X$ are from the set of local observables (see formula (\ref{algebra_d})), the above equation becomes nothing but the definition of the one-particle reduced density matrices. In other words,
\[\mu:\ \ket{\Psi}\mapsto \left(\i \rho_1(\Psi),\i \rho_2(\Psi),\dots,\i \rho_L(\Psi)\right)\in\i\mkk,\]
where
\[\fl \rho_k(\Psi):\ \EE_\Psi(\bone\otimes\dots\otimes\bone\otimes \i X_k\otimes\bone\otimes\dots\otimes\bone)=\tr(\rho_k(\Psi)X_k){\rm\ for\ all\ }\i X_k\in\mku(N_k).\]
By choosing $\i X_k=\bone$, we see that $\tr\rho_k(\Psi)=1$. Similarly, for the scenario with indistinguishable particles we get that the momentum map assigns to a state its one-particle reduced density matrix
\[\mu:\ket{\Psi}\mapsto \i \rho_1(\Psi),\]
where
\begin{eqnarray*}
\rho_1(\Psi):\ \EE_\Psi(\i X\otimes\bone\otimes\dots\otimes\bone+\dots+\bone\otimes\dots\otimes\bone\otimes \i X)= \\ =\tr(\rho_1(\Psi)\i X){\rm\ for\ all\ }X\in\mku(N).
\end{eqnarray*}
Here, choosing $\i X=\bone$ yields $\tr\rho_1(\Psi)=L$. In the fermionic Fock space scenario, the momentum map gives the coefficients of the {\it correlation matrix} of state $\ket{\Psi}$, which is a real, antisymmetric matrix with entries given by 
\[M_{k,l}(\kpsi):=\i\frac{\bpsi c_kc_l\kpsi}{\bk{\Psi}{\Psi}}.\]
Then, using the Hilbert-Schmidt product, we have the identification
\begin{eqnarray*}
\mu(\kpsi)=\frac{1}{2^{N-2}}\Bigg{(}\sum_{i=1}^N M_{2i-1,2i}\i H_i+\sum_{1\leq i<j\leq N}2\left(M_{2i,2j}-M_{2i-1,2j-1}\right)X_{i,j}+\\ +\sum_{1\leq i<j\leq N}2\left(M_{2i,2j-1}-M_{2i-1,2j}\right)Y_{i,j}\Bigg{)}.
\end{eqnarray*}
Therefore, the problem of determining the set of possible one-particle reduced density matrices for systems of distinguishable or indistinguishable particles and the problem of finding the set of possible correlation matrices for the fermionic Fock space, is equivalent to the problem of describing the image of the momentum map defined in (\ref{momentum_def}) stemming from a representation of a compact connected group $K$. So far, we haven't used the most important property of the momentum map, namely, its $K$-equivariance. Indeed,
\[ \mu : \mH\setminus \{0\} \to \mk{k} \]
is compatible with the $K$-actions given by the representation on $\mH$ and the adjoint action on $\mk{k}$, respectively, in the sense that
\[\mu(U\ket{\Psi})=U^\dagger\mu(\ket{\Psi})U,\ U\in K.\]
This $K$-conjugation in $\mk k$ preserves momentum image. This property allows us to parametrise the image of $\mu$ by its intersection with the subspace $\mk{t}$ of the diagonal matrices, since any matrix from the Lie algebra $\mkk$ is $K$-conjugate to an element of $\mk{t}$. In the scenarios with a fixed number of particles, the image of $\mu$ is parametrised by the spectra of the one-particle reduced density matrices. We call these spectra the {\it local spectra}. For the Fock space scenario, recall that any real antisymmetric matrix $M$ can be brought by the orthogonal transformations to a form, where the only nonzero entries are the occupation numbers $M_{2i-1,2i}$. This is equivalent to diagonalising the matrix $\mu(\kpsi)$. In order to make the parametrisation unambiguous, we order the eigenvalues decreasingly. The set of diagonal matrices with decreasingly ordered local spectra is called the {\it positive Weyl chamber}, which we denote by $\mkt_+$.
\begin{eqnarray*}\fl
\mkt_+=\{(\eta_1,\eta_2,\dots,\eta_N):\ \eta_1\geq\eta_2\geq\dots\geq\eta_N\geq0\}{\rm\ \ for\ \ }\mkk=\mku(N), \\ \fl
\mkt_+=\mkt_+^{(1)}\oplus\mkt_+^{(2)}\oplus\dots\oplus\mkt_+^{(L)}{\rm\ \ for\ \ }\mkk=\mku(N_1)\oplus\dots\oplus\mku(N_L),\\ \fl
\mkt_+=\{(\eta_1,\eta_2,\dots,\eta_N):\ \eta_1\geq\eta_2\geq\dots\geq\eta_N,\ \eta_{N-1}+\eta_N\geq0\} {\rm\ \ for\ \ }\mkk=\mk{o}(2N).
\end{eqnarray*}
In this notation, the image of the momentum map can be described as
\[\mu(\mH)=K^\dagger(\mu(\mH)\cap\mkt_+)K.\]
The celebrated convexity theorem, which has been first stated for abelian groups by Atiyah \cite{atiyah}, and later developed by many others \cite{K84,Sjamaar,Brion,BS,Smirnov,HH96} states that (in a setting much more general than the one introduced in this paper) the intersection of the momentum image with the Weyl chamber is a convex polytope, called the momentum polytope. We denote this polytope by 
\[\poly = \mu(\mH)\cap\mkt_+.\]
and refer to it as the {\it spectral polytope} according to the physical interpretation explained in Section \ref{sec:introduction}. Elements in the spectral polytope are one-body quantum marginals of pure states that are diagonal with ordered eigenvalues.

Let us next point out a key technical ingredient of our method. It concerns a distinguished class of subspaces of $\mH$, whose momentum images are easy to find, and are in fact convex polytopes spanned by sets of weights. These subspaces are spanned by weight vectors, whose corresponding weights have the property of {\it root-distinctness} defined below. The basic idea is due to Wildberger, \cite{Wildberger}, and has been developed further by Sjamaar, \cite{Sjamaar}.

\begin{lemma}\cite{Sjamaar,Wildberger}
The momentum image of a state $\ket{\Psi}$ lies in the Cartan subalgebra $\mk t$ if and only if $\bra{\Psi}E_\alpha\ket{\Psi}=0$ for all roots $\alpha\in\Delta$.
\end{lemma}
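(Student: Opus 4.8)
The statement is essentially an unwinding of the definition (\ref{momentum_def}) of the momentum map, and the plan is the following. The first step is to recall the orthogonal decomposition of the compact Lie algebra $\mkk$ relative to the Hilbert-Schmidt product $(\cdot,\cdot)$, which is positive definite on $\mkk$:
\[ \mkk = \mkt \oplus \mk{m}, \qquad \mk{m} := \bigoplus_{\alpha\in\Delta_+}\langle\, E_\alpha - E_{-\alpha},\ \i(E_\alpha + E_{-\alpha})\,\rangle_\RR, \]
the real form of the root-space decomposition recalled above, with the conventions $E_{-\alpha}=E_\alpha^\dagger$ and $\Delta_+$ a fixed choice of positive roots; the indicated basis of $\mkk$ is orthogonal. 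Since the decomposition is orthogonal and $(\cdot,\cdot)$ is definite on $\mkk$, the condition $\mu(\kpsi)\in\mkt$ is equivalent to $(\mu(\kpsi),Y)=0$ for every $Y\in\mk{m}$, i.e. to $(\mu(\kpsi),\,E_\alpha-E_{-\alpha})=0$ and $(\mu(\kpsi),\,\i(E_\alpha+E_{-\alpha}))=0$ for all $\alpha\in\Delta_+$.

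The second step is to evaluate these two pairings using (\ref{momentum_def}). Put $z_\alpha:=\bra{\Psi}E_\alpha\ket{\Psi}/\bk{\Psi}{\Psi}$; since $E_{-\alpha}=E_\alpha^\dagger$, one has $\bra{\Psi}E_{-\alpha}\ket{\Psi}/\bk{\Psi}{\Psi}=\overline{z_\alpha}$, hence
\[ (\mu(\kpsi),\,E_\alpha-E_{-\alpha})=\frac{1}{\i}\,(z_\alpha-\overline{z_\alpha})=2\,\mathrm{Im}\,z_\alpha, \qquad (\mu(\kpsi),\,\i(E_\alpha+E_{-\alpha}))=z_\alpha+\overline{z_\alpha}=2\,\mathrm{Re}\,z_\alpha. \]
Thus for each $\alpha\in\Delta_+$ the vanishing of both pairings is equivalent to $z_\alpha=0$, i.e. to $\bra{\Psi}E_\alpha\ket{\Psi}=0$. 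Ranging over $\alpha\in\Delta_+$ shows that $\mu(\kpsi)\in\mkt$ if and only if $\bra{\Psi}E_\alpha\ket{\Psi}=0$ for every positive root $\alpha$. Finally, every negative root is of the form $\alpha=-\beta$ with $\beta\in\Delta_+$, and then $E_\alpha=E_{-\beta}=E_\beta^\dagger$, so $\bra{\Psi}E_\alpha\ket{\Psi}=\overline{\bra{\Psi}E_\beta\ket{\Psi}}$; hence vanishing on $\Delta_+$ is equivalent to vanishing on all of $\Delta$, which is the claimed equivalence.

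I do not expect a genuine obstacle: the argument is a direct computation. The two points requiring care are the correct description of the orthogonal complement of $\mkt$ inside $\mkk$ --- that it is spanned by the antihermitian combinations $E_\alpha-E_{-\alpha}$ and $\i(E_\alpha+E_{-\alpha})$ of the root operators --- and the bookkeeping of the factor $1/\i$ together with the elementary identities $w-\overline w=2\i\,\mathrm{Im}\,w$ and $w+\overline w=2\,\mathrm{Re}\,w$. It is worth noting that it is precisely the positive-definiteness of the Hilbert-Schmidt form on $\mkk$ that makes the equivalence ``$\mu(\kpsi)\in\mkt$ if and only if $\mu(\kpsi)$ is orthogonal to $\mk{m}$'' valid.
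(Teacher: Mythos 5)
Your proof is correct and follows essentially the same route as the paper's: both use the orthogonal decomposition of $\mkk$ into $\mkt$ and the span of the antihermitian combinations $E_\alpha-E_{-\alpha}$, $\i(E_\alpha+E_{-\alpha})$, and then reduce the orthogonality conditions to the vanishing of $\bra{\Psi}E_\alpha\ket{\Psi}$ via the defining formula of the momentum map. Your version merely spells out the real/imaginary-part bookkeeping that the paper leaves implicit.
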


\begin{proof}
Recall that operators $X_\alpha=\i(E_\alpha+E_{-\alpha})$ and $Y_\alpha=E_\alpha-E_{-\alpha}$ form an orthogonal basis of the non-diagonal part of $\mkk$. Hence, $\mu(\kpsi)$ is diagonal if and only if $(\mu(\kpsi),X_\alpha)=(\mu(\kpsi),Y_\alpha)=0$ for all $\alpha$. This is equivalent to $(\mu(\kpsi),E_\alpha)=(\mu(\kpsi),E_{-\alpha})=0$ for all $\alpha$. By formula (\ref{momentum_def}) we obtain the claim.
\end{proof}

\begin{definition}[A root-distinct subset of weights]
A subset of weights $\Lambda'\subset\Lambda$ is called root-distinct, if for every pair $\{\eta,\nu\}\subset\Lambda'$ we have 
\[\eta-\nu\neq\alpha{\rm\ for\ all\ roots\ }\alpha\in\Delta.\]
\end{definition}

\begin{definition}[The support of a state]
The support of a state $\Sigma(\Psi)\subset\Lambda$ is the set of weights such that the corresponding weight vectors occur with nonzero coefficients in the decomposition of $\ket{\Psi}$. In other words,
\[\Sigma(\Psi):=\{\eta\in\Lambda:\ \bk{\eta}{\Psi}\neq 0\}.\]
\end{definition}

\begin{lemma}\label{lemma:torus_poly}{\cite{Atiyah}}
Consider $\mH$ as a representation of the maximal torus $T\subset K$. The torus momentum map $\mu_T$ is the $K$-momentum map composed with the projection on $\mkt$, i.e. $\mu_T={\rm pr}_\mkt\circ\mu$. In other words,
\[(\mu_T(\ket{\Psi}),\i H)=(\mu(\ket{\Psi}),\i H),\ \i H\in\mkt\subset\mkk.\]
The image of $\mu_T$ is the convex hull of weights with 
\begin{equation}\label{mu-diagonal}
\mu_T\left(\sum_{\eta\in Supp(\mH)}a_\eta\ket{\eta}\right)=\sum_{\eta\in Supp(\mH)}|a_\eta|^2\eta.
\end{equation}
\end{lemma}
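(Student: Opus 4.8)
The plan is to establish both parts of the statement by a direct computation from the defining property (\ref{momentum_def}) of $\mu$, with no appeal to the general convexity theorem of Atiyah--Guillemin--Sternberg, since on projective space the momentum map is completely explicit. First I would expand an arbitrary state in the weight basis, $\kpsi=\sum_{\eta\in Supp(\mH)}a_\eta\ket{\eta}$, where the weight vectors $\ket{\eta}$ are chosen to be of unit norm. Since $T$ acts unitarily, weight vectors of distinct weights are orthogonal, hence $\bk{\Psi}{\Psi}=\sum_{\eta}|a_\eta|^2$. Evaluating $\mu$ on the diagonal generators $\i H_1,\dots,\i H_r$, which span $\mkt\subset\mkk$, and using $H_j\ket{\eta}=\eta_j\ket{\eta}$ (with $\eta=(\eta_1,\dots,\eta_r)$ under the identification $\mkt\simeq\RR^r$ given by the chosen orthogonal basis of diagonal operators), one gets $\bra{\Psi}H_j\ket{\Psi}=\sum_{\eta}|a_\eta|^2\eta_j$ and therefore, by (\ref{momentum_def}), $(\mu_T(\kpsi),\i H_j)=(\mu(\kpsi),\i H_j)=\frac{1}{i}\frac{\bra{\Psi}\i H_j\ket{\Psi}}{\bk{\Psi}{\Psi}}=\frac{\sum_{\eta}|a_\eta|^2\eta_j}{\sum_{\eta}|a_\eta|^2}$. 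Reading this off for every $j$ identifies $\mu_T(\kpsi)$ with the vector $\frac{\sum_{\eta}|a_\eta|^2\eta}{\sum_{\eta}|a_\eta|^2}\in\RR^r\simeq\mkt$; restricting to normalised states yields (\ref{mu-diagonal}). I would also note in passing that $\mu_T={\rm pr}_\mkt\circ\mu$ really is the $T$-momentum map, because the latter is characterised by its pairings with the elements $\i H\in\mkt$, and those are unchanged when $\mu$ is projected onto $\mkt$.

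For the description of the image, formula (\ref{mu-diagonal}) exhibits $\mu_T(\kpsi)$, for every normalised $\kpsi$, as the convex combination $\sum_{\eta}|a_\eta|^2\eta$ of the weights in $\Sigma(\Psi)\subseteq Supp(\mH)$, the coefficients $|a_\eta|^2$ being nonnegative and summing to $1$; hence $\mu_T(\mH\setminus\{0\})\subseteq\conv{Supp(\mH)}$. Conversely, for any point $p=\sum_{\eta\in Supp(\mH)}c_\eta\eta$ with $c_\eta\geq 0$ and $\sum_{\eta}c_\eta=1$, the normalised state $\kpsi=\sum_{\eta}\sqrt{c_\eta}\,\ket{\eta}$ satisfies $\mu_T(\kpsi)=p$ by (\ref{mu-diagonal}), so the image also contains all of $\conv{Supp(\mH)}$. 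Since $Supp(\mH)$ is a finite subset of the weight lattice, $\conv{Supp(\mH)}$ is a polytope, and the two inclusions give the claim.

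The one place where I expect to have to be careful is bookkeeping rather than substance: one must keep the several identifications mutually consistent --- $\mkk\cong\mkk^*$ via the Hilbert--Schmidt form, $\mkt\cong\RR^r$ via the chosen orthogonal basis of diagonal operators, weights regarded as lattice points of $\mkt$, and the factor $1/i$ in the convention (\ref{momentum_def}) --- so that the coordinate read-off in the first step produces exactly $\sum_{\eta}|a_\eta|^2\eta$ with no stray normalisation constants. In all the scenarios of Section~\ref{sec:preliminaries} the chosen diagonal basis is orthonormal once the factor $2^{-(N-2)}$ built into $\mu$ in the Fock space case is taken into account, so this is automatic; in general it amounts to a harmless rescaling of coordinates. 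Finally, if some weight space had multiplicity larger than one (as happens for the adjoint representation, though not in the scenarios of interest here) the same argument goes through after replacing $a_\eta\ket{\eta}$ by the weight-$\eta$ component of $\kpsi$ and $|a_\eta|^2$ by its squared norm.
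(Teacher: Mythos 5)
Your proposal is correct and follows essentially the same route as the paper: expand $\kpsi$ in the weight basis, pair $\mu(\kpsi)$ with the diagonal generators, and use orthogonality of distinct weight spaces to reduce to $\sum_\eta|a_\eta|^2\eta$. The only difference is that you explicitly verify surjectivity onto $\conv{Supp(\mH)}$ by exhibiting the preimage $\sum_\eta\sqrt{c_\eta}\ket{\eta}$, a step the paper leaves implicit.
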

\begin{proof}
Decompose $\ket{\Psi}$ with respect to the weight spaces
\[\ket{\Psi}=\sum_{\eta\in Supp(\mH)}a_\eta\ket{\eta}.\]
Equality $\mu_T={\rm pr}_\mkt\circ\mu$ is just a consequence of the definition of the momentum map (equation \ref{momentum_def}). For the elements of the Cartan algebra, we have for normalised $\kpsi$
\[(\mu(\kpsi),\i H_\xi)=\sum_{\eta,\nu\in\Sigma(\Psi)}\overline a_\nu a_\eta\bra{\nu}H_\xi\ket{\eta}=\sum_{\eta\in\Sigma(\Psi)}|a_\eta|^2 (\eta,\xi).\]
Hence, by the non-degeneracy of the Killing form $(\cdotp,\cdotp)$ we obtain (\ref{mu-diagonal}).
\end{proof}

\begin{lemma}\label{lemma:rd}
For a state, whose support is root-distinct, we have $\mu(\ket{\Psi})\in\mkt$. In particular, the weight vectors are mapped by $\mu$ to their weights.
\end{lemma}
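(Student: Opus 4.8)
The plan is to reduce the statement to the criterion of Sjamaar and Wildberger recalled just above, namely that $\mu(\kpsi)\in\mk t$ if and only if $\bpsi E_\alpha\kpsi=0$ for every root $\alpha\in\Delta$. So the whole argument amounts to checking that root-distinctness of $\Sigma(\Psi)$ forces all of these off-diagonal expectation values to vanish.

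First I would expand $\kpsi=\sum_{\eta\in\Sigma(\Psi)}a_\eta\ket{\eta}$ in the weight basis, with all $a_\eta\neq 0$ by definition of the support, and compute
\[\bpsi E_\alpha\kpsi=\sum_{\eta,\nu\in\Sigma(\Psi)}\overline{a_\nu}\,a_\eta\,\bra{\nu}E_\alpha\ket{\eta}.\]
By the basic property of root-operators recalled in Section~\ref{sec:preliminaries}, $E_\alpha\ket{\eta}\propto\ket{\eta+\alpha}$, so the matrix element $\bra{\nu}E_\alpha\ket{\eta}$ is nonzero only when $\nu=\eta+\alpha$, i.e. when $\nu-\eta=\alpha\in\Delta$. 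By the definition of a root-distinct set, no two distinct elements of $\Sigma(\Psi)$ differ by a root, and for $\eta=\nu$ the difference is $0\notin\Delta$. Hence every summand vanishes, $\bpsi E_\alpha\kpsi=0$ for all $\alpha\in\Delta$, and the Sjamaar--Wildberger lemma yields $\mu(\kpsi)\in\mk t$. For the final assertion, apply this to a single weight vector $\ket{\eta}$: its support is the singleton $\{\eta\}$, which is vacuously root-distinct, so $\mu(\ket{\eta})\in\mk t$, whence $\mu(\ket{\eta})=\mathrm{pr}_{\mk t}(\mu(\ket{\eta}))=\mu_T(\ket{\eta})$; by Lemma~\ref{lemma:torus_poly} (formula (\ref{mu-diagonal}) with a single coefficient of modulus one) this equals $\eta$.

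There is essentially no deep obstacle here; the only point requiring care is the bookkeeping: the double sum must be taken over $\Sigma(\Psi)$, where the coefficients are nonzero, rather than over all of $Supp(\mH)$, and the diagonal terms $\eta=\nu$ must be separated out (since $0$ is not a root) — so that the root-distinctness hypothesis, which is a statement about \emph{pairs} of distinct weights, applies precisely to the terms that could otherwise survive.
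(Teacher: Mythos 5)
Your proof is correct and follows essentially the same route as the paper's: both reduce to the Sjamaar--Wildberger criterion by observing that $E_\alpha$ shifts weights by $\alpha$, so root-distinctness of the support forces $\bpsi E_\alpha\kpsi=0$ for every root, and the diagonal part is then read off from formula (\ref{mu-diagonal}). Your explicit double-sum expansion is just a more spelled-out version of the paper's one-line observation that $\Sigma(E_\alpha\kpsi)\cap\Sigma(\kpsi)=\emptyset$.
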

\begin{proof}
Firstly, note that if $\Sigma(\kpsi)$ is root-distinct, then $\Sigma(E_\alpha(\kpsi))\cap\Sigma(\kpsi)=\emptyset$ for all $\alpha$. This in turn means that $\bra{\Psi}E_\alpha\ket{\Psi}=0$ for all $\alpha$, hence $\mu(\kpsi)$ is diagonal. For the diagonal part, we have formula (\ref{mu-diagonal}).
\end{proof}

\noindent In systems of distinguishable qudits, root-distinctness of set $\Lambda'\subset\Lambda$ means that for every pair $\mu(\ket{i_1,\dots,i_L}),\ \mu(\ket{j_1,\dots,j_L})\in\Lambda'$, we have $\#\{k: i_k\neq j_k\}>1$. For fermions, the pairs must be of the form $\mu(\ket{i_1}\wedge\ket{i_2}\wedge\dots\wedge \ket{i_L}),\ \mu(\ket{j_1}\wedge\ket{j_2}\wedge\dots\wedge \ket{j_L})\in\Lambda'$, with the condition $\#\{k: i_k\neq j_k\}>1$. The root-distinct sets of a maximal size have proven to be useful in finding the spectral polytopes for small fermionic systems, namely for $\Lambda^3\left(\CC^6\right)$ and $\Lambda^3\left(\CC^7\right)$ \cite{BD}. For $\Lambda^3\left(\CC^6\right)$ the weight vectors corresponding to weights from the maximal root-distinct sets are $\ket{1}\wedge\ket{2}\wedge\ket{3}$, $\ket{1}\wedge\ket{4}\wedge\ket{5}$, $\ket{2}\wedge\ket{4}\wedge\ket{6}$, $\ket{3}\wedge\ket{5}\wedge\ket{6}$. For $\Lambda^3\left(\CC^7\right)$
the weight vectors corresponding to weights from the maximal root-
distinct sets are the same as for $\Lambda^3\left(\CC^6\right)$ plus the weight vectors $\ket{1}\wedge\ket{6}\wedge\ket{7}$, $\ket{2}\wedge\ket{5}\wedge\ket{7}$ and $\ket{3}\wedge\ket{4}\wedge\ket{7}$. Using Lemma \ref{lemma:rd}, one can easily compute the part of the spectral polytope stemming from the maximal root-distinct sets of weights. The authors of \cite{BD} have shown that the local spectra of states, whose support is a maximal root-distinct set of weights, yield the entire spectral polytope for $\Lambda^3\left(\CC^6\right)$ and $\Lambda^3\left(\CC^7\right)$. To prove that, they used the knowledge of the universal space for the action of the local unitary group on the respective Hilbert spaces, which gives a canonical form of a quantum state up to the action of local unitaries.
Recently, the universal $LU$-spaces have been described for any system of three fermions \cite{universal-fermions}. However, it is difficult to apply the approach from \cite{BD} for systems larger than $\Lambda^3\left(\CC^7\right)$, as the states from the universal space have non-diagonal 1PRDMs and the number of equations one has to take into account to make the 1PRDMs diagonal, grows fast with the number of modes.

Let us next briefly review a result regarding the structure of the spectral polytope \cite{Sjamaar}. Namely, the spectral polytope can be written as an intersection of {\it local cones}
\begin{equation*}
\poly=\bigcap_{\ket{\Psi}\in\mu^{-1}(\mkt_+)}\mC(Y_\Psi),
\end{equation*}
where a local cone $\mC(Y_\Psi)$ is the polyhedral cone at vertex $\mu(\kpsi)$, spanned by the momentum polytope stemming from the action of the centraliser of $\mu(\ket{\Psi})$ on space $Y_\Psi$. Space $Y_\Psi$ is defined via the symplectic slice at $\kpsi$ (see Theorem 6.3 and Definition 6.4 in  \cite{Sjamaar}). We will not go into the details of the construction of symplectic slices, as in this paper we only consider a specific local cone, for which the description of $Y_\Psi$ is much simpler. We only mention, that the above intersection of cones is locally finite, which means that any point $\xi\in\poly$ belongs to a finite number of local polytopes.

Another important property of local cones is that if we choose $\mu(\Psi)$ to be a vertex of $\poly$, then the cone at vertex $\mu(\Psi)$ spanned by $\poly$ is precisely the local cone $\mC(Y_\Psi)$. In other words, such a local cone describes the spectral polytope around the chosen vertex \cite{Sjamaar}. In particular, this means that 
\begin{equation}\label{hw_cone_containing_poly}
\poly\subset\mC(Y_\Psi)\cap\mkt_+
\end{equation}
for $\kpsi$ being mapped to a vertex of $\poly$. The vertices of the spectral polytope $\mu(\mH)\cap\mk{t}_+$ fall into two types: ones inherited from the full momentum image and ones resulting from the intersection with the Weyl chamber. To this end there is the following theorem.

\begin{theorem}\label{thm:vertices}\cite{Sjamaar}
If $\poly$ is of full dimension and $v$ is a vertex of $\poly$ that lies in the interior of the positive Weyl chamber, then the projective stabiliser of any $\kpsi\in\mu^{-1}(v)$ is equal the maximal torus. In particular, $\kpsi$ is a weight vector.
\end{theorem}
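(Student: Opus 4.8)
The plan is to establish the two inclusions $K_{[\Psi]}\subseteq T$ and $T\subseteq K_{[\Psi]}$ separately; the first is elementary and uses only equivariance, while the vertex hypothesis enters only in the second.

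First I would prove $K_{[\Psi]}\subseteq T$. Let $k\in K_{[\Psi]}$, so $k\kpsi=c\kpsi$ for some $c\in\CC$. Since $\mu$ is invariant under rescaling of $\kpsi$ and satisfies $\mu(U\kpsi)=U^\dagger\mu(\kpsi)U$, we get
\[
v=\mu(\kpsi)=\mu(k\kpsi)=k^{-1}\mu(\kpsi)\,k=\mathrm{Ad}_{k^{-1}}v ,
\]
hence $k\in Z_K(v)$. Because $v$ lies in the interior of $\mkt_+$, no root vanishes on $v$, i.e. $v$ is a regular element of $\mkt$; for such $v$ the centraliser $Z_K(v)$ equals $T$, since its Lie algebra is $\mkt$ (as $[v,E_\alpha]=\alpha(v)E_\alpha\ne 0$ for every $\alpha\in\Delta$) and $Z_K(v)/T$ embeds into the Weyl group $N_K(T)/T$, which acts freely on regular elements. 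Therefore $K_{[\Psi]}\subseteq T$.

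For the reverse inclusion it suffices to show $\kpsi$ is a weight vector, since a weight vector is a common eigenvector of $\mkt$ and hence is fixed, projectively, by all of $T$. Writing $H:=K_{[\Psi]}$ and $\mk h:=\mathrm{Lie}(H)\subseteq\mkt$, it is even enough to prove $\mk h=\mkt$: a closed subgroup of the torus $T$ has identity component a subtorus, so $\dim H=\dim T$ forces $H^\circ=T$ and hence $H=T$. To prove $\mk h=\mkt$ I would invoke the local cone. By the property recalled just before the statement — and using that $\poly$ is full dimensional and $v$ is a vertex — the cone at $v$ spanned by $\poly$ (its tangent cone at $v$) coincides with $\mC(Y_\Psi)$, where $Y_\Psi$ comes from the symplectic slice at $\kpsi$ and $\mC(Y_\Psi)$ is the cone with apex $v$ over the momentum polytope for the action of $Z_K(v)=T$ on $Y_\Psi$; since $v$ is regular this is a torus action, so that momentum polytope is the convex cone generated by the weights of $Y_\Psi$. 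Unwinding the construction of $Y_\Psi$ from the normal form underlying \cite{Sjamaar} — and using that near the regular element $v$ the only part of a nearby coadjoint orbit that meets $\mkt$ is its Weyl-chamber representative — one finds that this cone decomposes as the Minkowski sum
\[
\mC(Y_\Psi)=\bigl(\mkt\ominus\mk h\bigr)+\Gamma ,
\]
where $\mkt\ominus\mk h$ is the linear subspace of $\mkt$ spanned by the tangent directions at $[\Psi]$ to the torus orbit $T\cdot[\Psi]\subset\PPH$ (a subspace of dimension $\dim T\cdot[\Psi]$, since $\mathrm{Stab}_T([\Psi])=T\cap K_{[\Psi]}=H$), and $\Gamma\subseteq\mk h$ is the momentum cone of the genuine $H$-symplectic slice. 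In particular the tangent cone of $\poly$ at $v$ contains the whole linear subspace $\mkt\ominus\mk h$.

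To conclude, recall that a point $v$ of a polytope is a vertex if and only if the cone at $v$ spanned by that polytope is pointed, i.e. contains no affine line. Since this cone contains $\mkt\ominus\mk h$, pointedness forces $\mkt\ominus\mk h=0$, hence $\mk h=\mkt$, hence $K_{[\Psi]}=T$ and $\kpsi$ is a weight vector. I expect the only genuine work to be the middle step: extracting the decomposition $\mC(Y_\Psi)=(\mkt\ominus\mk h)+\Gamma$ from the slice. The subtlety is that $\kpsi$ is not assumed $T$-fixed a priori, so $Y_\Psi$ is only an $H$-representation; the extra linear directions $\mkt\ominus\mk h$ appear because the local model at the regular value $v$ is the space induced from $H$ up to $Z_K(v)=T$, whose cotangent-type factor contributes a full linear subspace — rather than a pointed cone — to the momentum image. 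Everything else is bookkeeping with the normal form and with the elementary facts about regular elements and subtori used above.
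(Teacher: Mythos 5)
Your two-inclusion strategy is sound, and it is worth noting at the outset that the paper itself offers no proof of this statement --- it is quoted from \cite{Sjamaar} --- so what you have written is essentially a reconstruction of Sjamaar's own argument rather than an alternative to anything in the text. The first inclusion $K_{[\Psi]}\subseteq T$ is complete and correct: equivariance plus scale-invariance of $\mu$ puts $K_{[\Psi]}$ inside the centraliser of $v$, and the regularity of an interior point of $\mkt_+$ forces that centraliser to be $T$ by the standard Lie-algebra-plus-Weyl-group argument you give. For the second inclusion, the one load-bearing claim is the Minkowski decomposition $\mC(Y_\Psi)=(\mkt\ominus\mk h)+\Gamma$, which you assert by ``unwinding'' the slice rather than deriving; this is indeed the content of the Marle--Guillemin--Sternberg local normal form (the model $K\times_{K_{[\Psi]}}(\mk m^*\oplus V)$ with $\mk m=\mkt\ominus\mk h$ has momentum image containing $v+\mk m^*$), so the gap is one of citation rather than of substance. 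In fact you can shortcut the local-cone bookkeeping entirely: the normal form already shows that $\mu(\PPH)$ contains a relatively open neighbourhood of $v$ inside the affine subspace $v+(\mkt\ominus\mk h)$, and since $v$ is interior to $\mkt_+$ this neighbourhood lies in $\poly$; an extreme point of a convex set cannot be interior to a segment of that set, so $\mkt\ominus\mk h=0$ directly, without invoking the equality of the tangent cone with $\mC(Y_\Psi)$ or the pointedness criterion. Finally, observe that your argument never uses the full-dimensionality hypothesis; that assumption is inherited from Sjamaar's formulation and plays no role in the direction you prove, which is consistent with the direct argument above.
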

\noindent Polytope $\poly$ is of full dimension iff its dimension is equal to the dimension of $\mkt$ minus the constraints for the normalisation of the image of $\mu$, see Section \ref{sec:momentum_map}. This is true iff the derivative $d\mu$ at a generic state is surjective or, equivalently, the stabiliser of a generic state is discrete.

By Theorem \ref{thm:vertices}, vertices of the spectral polytope are either weights, or belong to the boundary of the positive Weyl chamber. The situation simplifies, when the representation of $K$ on $\mH$ is irreducible, as is the case with all our scenarios. Then the non-entangled states form a single projective $K$-orbit, and $\mc P(\mH)$ has a distinguished vertex - the highest weight (see Fig.\ref{fig:local-cone}). The local cone at this vertex is the subject of the next section. It can be studied using the following lemma, since its hypothesis is satisfied by the ground state, as we shall show below.
\begin{figure}[h]
\centering
\includegraphics[width=.6\textwidth]{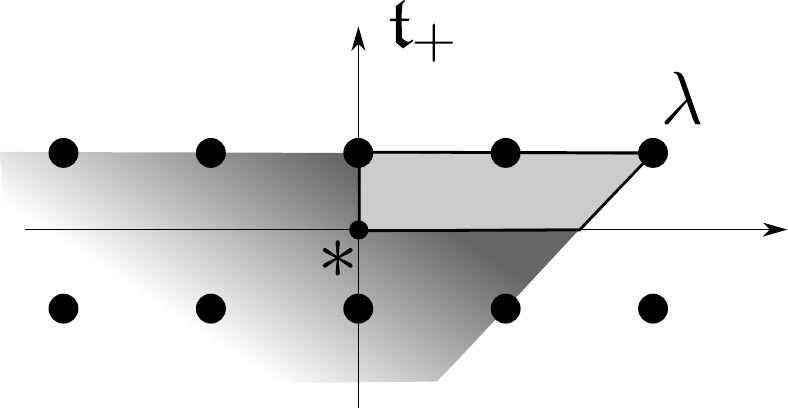}
\caption{The spectral polytope (light grey) for the representation of $SU(2)\times SU(2)$ on $\mH=\CC^2\otimes S^4\left(\CC^2\right)$, which is the system that consists of one qubit and a system of four bosons occupying two modes. In the mode population notation, the highest weight vector is $\ket{\lambda}=\ket{0}\otimes\ket{4,0}$. The entire gradient-grey shaded area is the local cone at the highest weight.}
\label{fig:local-cone}
\end{figure}

\begin{lemma}\label{lemma:Sja:locconelambda}\cite{Sjamaar}
If the centraliser of $\mu(\kpsi)$ is equal to the projective stabiliser of $\kpsi$, then the symplectic slice is isomorphic to the normal space at $\kpsi$, i.e.
\[Y_\Psi=N_\Psi:=T_{\ket{\Psi}}\mH/T_{\ket{\Psi}}(K\ket{\Psi}).\]
As a consequence, the local cone at $\kpsi$ is the cone at $\mu(\kpsi)$ spanned by the polytope $\mP(N_\Psi)$, where $N_\Psi$ is regarded as a (possibly reducible) representation of $K_{\mu\left(\kpsi\right)}$.
\end{lemma}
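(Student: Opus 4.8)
The plan is to unwind the definition of the symplectic slice $Y_\Psi$ and to show that the hypothesis $K_{\mu(\kpsi)}=K_{[\Psi]}$ is precisely what makes it coincide with the normal space. I will work in $\PPH$ with its Fubini-Study symplectic form $\omega$, for which $\mu$ is the momentum map (the normalisation by $\bk{\Psi}{\Psi}$ in (\ref{momentum_def}) is what makes this true), identify $\kpsi$ with its class $[\Psi]$, and read $N_\Psi=T_{\kpsi}\mH/T_{\kpsi}(K\kpsi)$ as the normal space of the orbit $\mathcal{O}=K[\Psi]$ inside $\PPH$; the extra radial direction one sees in $\mH$ is a trivial one-dimensional piece and does not affect any momentum polytope. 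First I would record two identifications. From (\ref{momentum_def}) one has $\omega(X\cdot[\Psi],v)=(d\mu_{[\Psi]}(v),X)$ for $X\in\mkk$ and $v\in T_{[\Psi]}\PPH$, so, $\omega$ being non-degenerate, the symplectic orthogonal of the orbit tangent is $(T_{[\Psi]}\mathcal{O})^{\omega}=\ker d\mu_{[\Psi]}$, whence Sjamaar's symplectic slice is $Y_\Psi=\ker d\mu_{[\Psi]}/(\ker d\mu_{[\Psi]}\cap T_{[\Psi]}\mathcal{O})$, a representation of the isotropy group $K_{[\Psi]}$. Next, $K$-equivariance of $\mu$ gives $d\mu_{[\Psi]}(X\cdot[\Psi])=[\mu(\kpsi),X]$, which vanishes iff $X$ commutes with $\mu(\kpsi)$; hence $\ker d\mu_{[\Psi]}\cap T_{[\Psi]}\mathcal{O}=\mkk_{\mu(\kpsi)}\cdot[\Psi]=T_{[\Psi]}(K_{\mu(\kpsi)}[\Psi])$, while $d\mu_{[\Psi]}(T_{[\Psi]}\mathcal{O})=T_{\mu(\kpsi)}(K\cdot\mu(\kpsi))=\mathrm{im}\,\mathrm{ad}_{\mu(\kpsi)}$.

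The heart of the argument is the claim $\mathrm{im}\,d\mu_{[\Psi]}\subseteq\mathrm{im}\,\mathrm{ad}_{\mu(\kpsi)}$, i.e.\ that near $[\Psi]$ the momentum map cannot move transversally to the adjoint orbit of $\mu(\kpsi)$, and this is exactly where I would use the hypothesis. If $K_{[\Psi]}=K_{\mu(\kpsi)}$, then every $X\in\mkk_{\mu(\kpsi)}=\mkk_{[\Psi]}$ satisfies $X\kpsi=\i\,(\mu(\kpsi),X)\,\kpsi$, a scalar multiple of $\kpsi$; substituting this into (\ref{momentum_def}) and differentiating along an arbitrary $v$, the contribution of the normalising factor $\bk{\Psi}{\Psi}$ cancels the remaining term, so $(d\mu_{[\Psi]}(v),X)=0$ for all $v$. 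Thus $\mathrm{im}\,d\mu_{[\Psi]}$ is orthogonal, for the Hilbert-Schmidt form, to $\mkk_{\mu(\kpsi)}=\ker\mathrm{ad}_{\mu(\kpsi)}$; since $\mathrm{ad}_{\mu(\kpsi)}$ is skew for that form, $(\ker\mathrm{ad}_{\mu(\kpsi)})^{\perp}=\mathrm{im}\,\mathrm{ad}_{\mu(\kpsi)}$, and together with the obvious inclusion $\mathrm{im}\,\mathrm{ad}_{\mu(\kpsi)}=d\mu_{[\Psi]}(T_{[\Psi]}\mathcal{O})\subseteq\mathrm{im}\,d\mu_{[\Psi]}$ this gives $\mathrm{im}\,d\mu_{[\Psi]}=\mathrm{im}\,\mathrm{ad}_{\mu(\kpsi)}$, equivalently $\ker d\mu_{[\Psi]}+T_{[\Psi]}\mathcal{O}=T_{[\Psi]}\PPH$.

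With that in hand, the composite $\ker d\mu_{[\Psi]}\hookrightarrow T_{[\Psi]}\PPH\twoheadrightarrow T_{[\Psi]}\PPH/T_{[\Psi]}\mathcal{O}=N_\Psi$ is surjective, and its kernel is $\ker d\mu_{[\Psi]}\cap T_{[\Psi]}\mathcal{O}$, which by the identification above is exactly the denominator defining $Y_\Psi$; so it descends to an isomorphism $Y_\Psi\cong N_\Psi$. I would then note that this isomorphism is $K_{[\Psi]}$-equivariant, because $\ker d\mu_{[\Psi]}$ and $T_{[\Psi]}\mathcal{O}$ are both $K_{[\Psi]}$-stable subspaces of $T_{[\Psi]}\PPH$ (the orbit $\mathcal{O}$ is $K$-invariant and $\mu$ is equivariant with $K_{[\Psi]}$ fixing $\mu(\kpsi)$), and the map is induced by the inclusion and the quotient projection. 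Since $K_{[\Psi]}=K_{\mu(\kpsi)}$ by hypothesis, $Y_\Psi\cong N_\Psi$ as representations of $K_{\mu(\kpsi)}$, hence $\mP(Y_\Psi)=\mP(N_\Psi)$, and the asserted description of the local cone $\mC(Y_\Psi)$ --- the cone at $\mu(\kpsi)$ spanned by $\mP(N_\Psi)$ --- follows at once from the general description of local cones recalled before this lemma.

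The differential computation in the second step and the equivariance check are routine. The step I expect to demand the most care is the very first identification: in \cite{Sjamaar} (Theorem 6.3, Definition 6.4) $Y_\Psi$ is constructed through a symplectic cross-section, arranged so that the centraliser $K_{\mu(\kpsi)}$ acts on it in general, and one must verify that under our hypothesis this construction degenerates to the bare symplectic slice $\ker d\mu_{[\Psi]}/(\ker d\mu_{[\Psi]}\cap T_{[\Psi]}\mathcal{O})$, i.e.\ that no complement of $\mkk_{[\Psi]}$ inside $\mkk_{\mu(\kpsi)}$ and no adjoint-orbit direction transverse to $T_{[\Psi]}\mathcal{O}$ can survive in the cross-section --- which is exactly what the identity $\ker d\mu_{[\Psi]}+T_{[\Psi]}\mathcal{O}=T_{[\Psi]}\PPH$ records.
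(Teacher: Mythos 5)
The paper does not actually prove this lemma: it is imported verbatim from Sjamaar (Theorem 6.3 and Definition 6.4 of \cite{Sjamaar}), and the authors explicitly decline to unwind the construction of symplectic slices. So there is no in-paper proof to compare against, and your proposal should be judged as a self-contained derivation. On that basis it is essentially sound, and it correctly isolates the one place where the hypothesis $K_{[\Psi]}=K_{\mu(\kpsi)}$ enters: for $X$ in the projective stabiliser one has $X\kpsi=\i(\mu(\kpsi),X)\kpsi$, and differentiating (\ref{momentum_def}) then kills $(d\mu_{[\Psi]}(v),X)$, so $\mathrm{im}\,d\mu_{[\Psi]}\perp\ker\mathrm{ad}_{\mu(\kpsi)}$ and hence $\mathrm{im}\,d\mu_{[\Psi]}=\mathrm{im}\,\mathrm{ad}_{\mu(\kpsi)}$; combined with $\ker d\mu_{[\Psi]}\cap T_{[\Psi]}\mathcal{O}\cong\mk k_{\mu(\kpsi)}/\mk k_{[\Psi]}=0$, the dimension count gives $\ker d\mu_{[\Psi]}+T_{[\Psi]}\mathcal{O}=T_{[\Psi]}\PPH$ and the isomorphism of the slice with the normal space, equivariantly. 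Two caveats keep this at the level of a correct sketch rather than a complete proof. First, as you yourself flag, Sjamaar's $Y_\Psi$ is built from a symplectic cross-section carrying a $K_{\mu(\kpsi)}$-action, not defined outright as $\ker d\mu_{[\Psi]}/(\ker d\mu_{[\Psi]}\cap T_{[\Psi]}\mathcal{O})$; the verification that the two agree (which under your hypothesis reduces to noting that the $K_{\mu(\kpsi)}$-orbit of $[\Psi]$ inside the cross-section is a point, so the slice there is the full tangent space of the cross-section, which your identity $\mathrm{im}\,d\mu_{[\Psi]}=\mathrm{im}\,\mathrm{ad}_{\mu(\kpsi)}$ shows equals $\ker d\mu_{[\Psi]}$ up to the right quotient) is exactly the content being borrowed and deserves to be written out. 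Second, the final step --- that the local cone equals the cone at $\mu(\kpsi)$ spanned by the momentum polytope of the slice representation --- is itself Sjamaar's theorem, which the paper only recalls; you use it as a black box, which is legitimate here but should be stated as such. The remark that the radial direction $\CC\kpsi$ distinguishing $T_{\kpsi}\mH/T_{\kpsi}(K\kpsi)$ from the projective normal space is a trivial summand contributing nothing to the cone is correct and worth the half-sentence you give it.
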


\section{The local cone at the ground state and osculating spaces}\label{sec:cone_at_hw}

A weight vector, which is annihilated by all positive root operators is called a {\it highest weight vector}. When the representation is irreducible, the highest weight vector is unique up to scalar. It will be denoted by \ket{\lambda}, with $\lambda$ being the highest weight. By the Cartan-Weyl theorem the irreducible representations of compact connected groups (or equivalently their complexifications) are classified by their highest weights. 

As we explained in the Section \ref{sec:introduction}, the highest weight of a representation can be viewed as the ground state of a hamiltonian with no interactions between the particles, whose spectrum is non-degenerate. Indeed, the choice of a specific hamiltonian means fixing a maximal torus $T$. This is done by taking the weight vectors to be the eigenvectors of such a hamiltonian. Then, the highest weight vector can be chosen to be the eigenvector that minimises the energy.

In all our scenarios the representations are irreducible, and the highest weight vectors and the corresponding highest weights are:
\begin{eqnarray*}
\ket{\lambda}=\ket{1}\otimes\ket{1}\otimes\dots\otimes\ket{1},\ \lambda=\left((1,0,\dots,0),\dots,((1,0,\dots,0))\right){\rm\ for\ }\mH_D, \\
\ket{\lambda}=\ket{1}\vee\ket{1}\vee\dots\vee\ket{1},\ \lambda=(L,0,0,\dots,0) {\rm\ for\ }S^L(\CC^N),\\
\ket{\lambda}=\ket{1}\wedge\ket{2}\wedge\dots\wedge\ket{L},\ \lambda=(1,1,\dots,1,0,\dots,0){\rm\ for\ }\Lambda^L(\CC^N), \\
\ket{\lambda}=\vac,\ \lambda=\left(\frac{1}{2},\frac{1}{2},\dots,\frac{1}{2}\right){\rm\ for\ }\mc{F}_e,\\
\ket{\lambda}=a_N^\dagger\vac,\ \lambda=\left(\frac{1}{2},\frac{1}{2},\dots,\frac{1}{2},-\frac{1}{2}\right){\rm\ for\ }\mc{F}_o.
\end{eqnarray*}

All $T$-weights of the representation $\mH$ are obtained by adding sequences of negative roots to the highest weight:
\[ Supp(\mH)\subset \lambda + Cone_{\ZZ_{\geq 0}}(\Delta_-) \;. \]

Let us remark that, in the scenarios not involving bosons, there is only one weight that belongs to $\mkt_+$, namely the highest weight $\lambda$. Note that the local cone at the highest weight intersected with $\mkt_+$ is in general larger than the spectral polytope (Fig.\ref{cone_scheme}).

\begin{figure}[h]
\centering
\includegraphics[width=.5\textwidth]{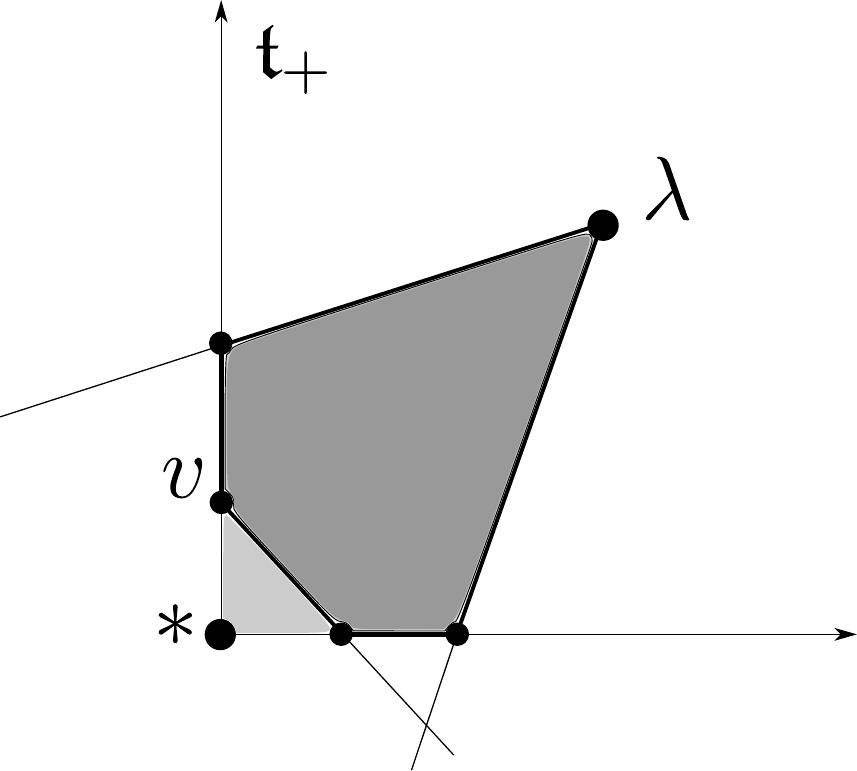}
\caption{A schematic picture showing the relation between the entire spectral polytope and the local cone at $\lambda$. The entire spectral polytope is the dark grey area. The positive Weyl chamber is the area between the arrows. The whole grey area is the cone at $\lambda$ intersected with the positive Weyl chamber. In order to obtain the spectral polytope, we have to cut a part of the grey area with the cone at vertex $v$, which lies on the boundary of $\mkt_+$.}
\label{cone_scheme}
\end{figure}

Since the choice of a particular vector along the eigenline is irrelevant, it is suitable to consider the projective space $\PP(\mH)$ and denote by $[\Psi]$ the point corresponding to the line spanned by a nonzero element $\kpsi\in\mH$. The projective $K$-orbit of the highest weight line has the form $K[\lambda]\cong K/K_{[\lambda]}$, given by the orbit-stabilizer theorem, where $K_{[\lambda]}$ is the projective stabiliser of $\ket{\lambda}$. The manifolds obtained this way share many interesting properties, they are known as flag varieties, and constitute the class of all simply connected homogeneous compact K\"ahler manifolds, when $K$ varies over all compact semisimple groups. We shall not discuss them at large but focus on some properties related to the projective embedding $K[\lambda]\subset\PP(\mH)$ and $K|\lambda\rangle\subset \mH$.

\begin{lemma}\cite{GS}
The projective stabiliser $K_{[\lambda]}$ of $[\lambda]\in\PP(\mH)$ equals the centraliser of the highest weight $\lambda\in\mk t_+$, i.e. $K_{[\lambda]}=K_{\lambda}$. Consequently, the projective orbit $K[\lambda]\in\PP(\mH)$ is isomorphic to the (co)adjoint orbit $K\lambda\subset\mk k$.
\end{lemma}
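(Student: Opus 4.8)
The plan is to establish the two inclusions $K_{[\lambda]}\subseteq K_\lambda$ and $K_\lambda\subseteq K_{[\lambda]}$ separately, and then deduce the orbit statement from the orbit--stabiliser theorem. For the inclusion $K_{[\lambda]}\subseteq K_\lambda$ I would use the $K$-equivariance of the momentum map together with the fact that $\mu(\ket{\lambda})=\lambda$; the latter holds because $\ket{\lambda}$ is a weight vector of weight $\lambda$, so its support $\{\lambda\}$ is trivially root-distinct and Lemma~\ref{lemma:rd} (equivalently, formula (\ref{mu-diagonal})) gives $\mu(\ket{\lambda})=\lambda$. Since $\mu$ is invariant under complex scaling, it descends to $\PP(\mH)$; hence if $k\in K_{[\lambda]}$ then $k[\lambda]=[\lambda]$, so $\lambda=\mu(k\ket{\lambda})=k^\dagger\mu(\ket{\lambda})k=\mathrm{Ad}_{k^{-1}}\lambda$, and applying $\mathrm{Ad}_k$ yields $\mathrm{Ad}_k\lambda=\lambda$, i.e. $k\in K_\lambda$.

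The reverse inclusion $K_\lambda\subseteq K_{[\lambda]}$ is the substantive one. First I would recall that $K_\lambda=Z_K(\lambda)$ is connected: it is the centraliser of the subtorus generated by $\exp(\RR\lambda)$, and centralisers of tori in compact connected groups are connected (see \cite{Broe-TomDie}). Hence it suffices to prove the infinitesimal statement $\mkk_\lambda\ket{\lambda}\subseteq\CC\ket{\lambda}$, or equivalently $\mkg_\lambda\ket{\lambda}\subseteq\CC\ket{\lambda}$ after complexifying. Writing $\mkg_\lambda=\mkt^\CC\oplus\bigoplus_{\alpha:\,(\alpha,\lambda)=0}\mkg^\alpha$, the Cartan part acts on $\ket{\lambda}$ by the scalar $(\lambda,\xi)$, a root operator $E_\alpha$ with $(\alpha,\lambda)=0$ and $\alpha\in\Delta_+$ annihilates $\ket{\lambda}$ because $\ket{\lambda}$ is a highest weight vector, and it remains only to treat $E_{-\beta}$ with $\beta\in\Delta_+$ and $(\beta,\lambda)=0$. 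For this I would invoke the $\mk{sl}_2$-triple $\{E_\beta,E_{-\beta},H_\beta=[E_\beta,E_{-\beta}]\}$: the vector $\ket{\lambda}$ is a highest weight vector for this copy of $\mk{sl}_2$ (it is killed by $E_\beta$), and $H_\beta$ acts on it by a scalar proportional to $(\lambda,\beta)=0$, so $\ket{\lambda}$ generates the trivial $\mk{sl}_2$-module and $E_{-\beta}\ket{\lambda}=0$. Therefore $\mkg_\lambda\ket{\lambda}\subseteq\CC\ket{\lambda}$, and by connectedness $K_\lambda\ket{\lambda}\subseteq\CC^\ast\ket{\lambda}$, i.e. $K_\lambda\subseteq K_{[\lambda]}$.

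With $K_{[\lambda]}=K_\lambda$ in hand, the orbit--stabiliser theorem gives $K[\lambda]\cong K/K_{[\lambda]}=K/K_\lambda\cong K\lambda$, the adjoint orbit through $\lambda$, which is identified with the coadjoint orbit via the invariant inner product on $\mkk$; concretely $\mu$ restricts to a $K$-equivariant bijection, hence a diffeomorphism, from $K[\lambda]$ onto $K\lambda$. The main obstacle is the second inclusion, and within it the vanishing $E_{-\beta}\ket{\lambda}=0$ for roots $\beta$ orthogonal to $\lambda$ combined with the appeal to connectedness of $K_\lambda$; the first inclusion and the concluding orbit identification are routine.
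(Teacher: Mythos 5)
The paper does not actually prove this lemma --- it is stated with a citation to Guillemin--Sternberg \cite{GS} and used as an input to Lemma \ref{lemma:Sja:locconelambda} --- so there is no in-text argument to compare against. Judged on its own, your proof is correct and complete, and it is the standard argument one would find in the cited sources. The easy inclusion $K_{[\lambda]}\subseteq K_\lambda$ via equivariance of $\mu$ and $\mu(\ket{\lambda})=\lambda$ is fine (and consistent with the paper's convention $\mu(U\kpsi)=U^\dagger\mu(\kpsi)U$). For the substantive inclusion you correctly identify the two ingredients that are usually glossed over: (i) connectedness of $K_\lambda$ as the centraliser of the subtorus generated by $\exp(\RR\,\iota H_\lambda)$, which legitimises passing to the Lie algebra, and (ii) the $\mk{sl}_2$-triple argument showing $E_{-\beta}\ket{\lambda}=0$ whenever $\beta\in\Delta_+$ satisfies $(\beta,\lambda)=0$, since $\ket{\lambda}$ is then an $\mk{sl}_2$-highest-weight vector of weight $0$ and so generates the trivial module. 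This recovers exactly the decomposition $\mk g_{[\lambda]}=\mk g^\lambda_+\oplus\mk g^\lambda_0$ with $\mk k_\lambda^\CC=\mk g^\lambda_0$ that the paper records later in Lemma \ref{Lemma CohOrbit}(ii). The concluding orbit identification via orbit--stabiliser and the invariant inner product is routine, as you say. I see no gap.
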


This enables us to apply Sjamaar's lemma \ref{lemma:Sja:locconelambda} in the computation of the local cone. The lemma involves the tangent and normal space to the orbit space, which are discussed in the next section. 

\subsection{The orbit of non-entangled states}\label{sec:coh_orbit}
In this section we review some key facts concerning the non-entangled states. As the main result of this section, we construct a basis of the tangent space at the ground state to the set of non-entangled states - formula (\ref{tangent-space}). 

For systems of distinguishable particles, a state is non-entangled if and only if it is separable, i.e. can be written as a simple tensor 
\begin{equation}\label{separable}
\kpsi=\ket{\phi_1}\otimes\ket{\phi_2}\otimes\dots\otimes\ket{\phi_L},
\end{equation}
where $\ket{\phi_i}$ is a vector from $\CC^{N_i}$. Any state, which cannot be written in the form (\ref{separable}) is entangled. Note that any separable state can be obtained from the highest weight vector by a $SLOCC$ operation, i.e.  $\ket{\phi_1}\otimes\ket{\phi_2}\otimes\dots\otimes\ket{\phi_L}=g_1\ket{0}\otimes g_2\ket{0}\otimes\dots\otimes g_L\ket{0}$ for some $g_1,\dots,g_L$. In other words, the set of separable states is the $G$-orbit through the highest weight vector. Extending this approach to other scenarios, we give a general geometric definition of the set of non-entangled stated.

\begin{definition}
The set of non-entangled states is the $G$-orbit $G\vert\lambda\rangle$ through the highest weight vector.
\end{definition}

\noindent For specific scenarios, we have the following forms of non-entangled states.
\begin{itemize}
\item System of $L$ fermions in $N$ modes: $\mH_F=\Lambda^L\left(\CC^N\right)$, the non-entangled states are Slater determinantal states
\[\kpsi=\ket{\phi_1}\wedge\ket{\phi_2}\wedge\dots\wedge\ket{\phi_L},\ \ket{\phi_i}\in\CC^N.\]
The set of non-entangled states in the projective space $\PP(\mH_F)$ is also the Pl\"{u}cker embedding of the Grassmannian $Gr(L,N)$.

\item System of $L$ bosons in $N$ modes: $\mH_B=S^L\left(\CC^N\right)$, the non-entangled states correspond to permanental states
\[\kpsi=\ket{\phi_1}\vee\ket{\phi_2}\vee\dots\vee\ket{\phi_L},\ \ket{\phi_i}\in\CC^N.\]
The set of non-entangled states in the projective space $\PP(\mH_B)$ is also the Veronese embedding of $\PP(\CC^N)$. In the case, when $N=2$, the set of non-entangled states is also called the set of spin-coherent states. Such a setting corresponds to one quantum spin with total angular momentum $j=L/2$.

\item Fermionic Fock spaces on $N$ modes: $\mF_e$ and $\mF_o$. The set of non-entangled states is the set of fermionic pure Gaussian states. In the mathematical literature these are called pure spinors. Their general form is \cite{chevalley}
\[\kpsi=e^A\tilde a_1^\dagger\tilde a_2^\dagger\dots\tilde a_k^\dagger\vac,\ k\leq N,\]
where $\tilde a_i^\dagger=e^{-B}a_i^\dagger e^B$. Matrices $B=\sum_{1\leq i<j\leq 2N}\frac{1}{2}B_{i,j}c_ic_j$ and $A=\sum_{1\leq i<j\leq 2N}\frac{1}{2}A_{i,j}c_ic_j$ are elements of group $Spin(2N)$.
\end{itemize}

\begin{remark}
Viewed in $\mH$, the set of non-entangled states consists of many $K$-orbits. In fact, $G\vert\lambda\rangle$ intersects each sphere in $\mH$ centered at $0$ at a single $K$-orbit and, as stated in the next lemma, the projective orbits are equal: $G[\lambda]=K[\lambda]\subset\PP(\mH)$. Since he momentum map $\mu$ is $K$-equivariant, and its definition includes a normalisation, we have 
\[\mu(G\vert\lambda\rangle) = \mu(K\vert\lambda\rangle) = K\lambda \quad,\quad K\lambda\cap\mk t_+=\{\lambda\} \;.\] 
In other words, all the non-entangled states have the same spectra of the one-particle reduced density matrices, given by the highest weight $\lambda$. 
\end{remark}

Recall that the tangent space to an orbit of a Lie group can be naturally identified with the quotient of the Lie algebra by the subalgebra of the stabiliser at the chosen point. Since we are considering the projective stabiliser of a weight vector, we have $\mk t\subset\mk{k}_\lambda$. Hence $\mk{k}_\lambda$ is preserved by the adjoint $\mk{t}$-action on $\mk{k}$ and so is its orthogonal complement which we denote by $\mk{m}$, so that we get
\[ \mk{k} = \mk{k}_\lambda \oplus \mk{m} \;\;,\;\; T_{[\lambda]}K[\lambda] \cong \mk{k}/\mk{k}_\lambda \cong \mk{m} \;.\]
It will be useful to describe the tangent space in terms of roots and to this end we consider the action of the complexified Lie group $G=K^\CC$ and its Lie algebra $\mk g$, where the root-operators belong.

Let $\i H_\lambda\in\mk{t}$ be the operator corresponding to $\lambda$. We consider its adjoint action on $\mk g$, $ad_{\i H_\lambda}(A)=[\i H_\lambda,A]$, and denote by $\mk g^\lambda_0$ its zero eigenspace (which is the centraliser of $H_\lambda$) and by $\mk{g}^\lambda_+,\mk{g}^\lambda_-$ the sum of the eigenspaces of $\iota H_\lambda$ with positive and negative eigenvalues, respectively. Let $\Delta=\Delta_0^\lambda\cup\Delta^\lambda_+\cup\Delta^\lambda_-$ be the corresponding partition of the root system. The eigenvectors of $ad_{\i H_\lambda}$ are the root operators, as $ad_{\i H_\lambda}(E_\alpha)=(\lambda,\alpha)E_\alpha$. Therefore, we have 

\[\fl\Delta_0^\lambda=\{\alpha\in\Delta:\ (\alpha,\lambda)=0\},\ \Delta_+^\lambda=\{\alpha\in\Delta:\ (\alpha,\lambda)>0\},\ \Delta_-^\lambda=\{\alpha\in\Delta:\ (\alpha,\lambda)<0\}.\]


\begin{lemma}\label{Lemma CohOrbit}\cite{KS82}
Let $M=K[\lambda]\subset\PP(\mH)$ be the projective $K$-orbit through the highest weight vector of an irreducible representation. Then

(i) $M$ is a complex algebraic subvariety of $\PP(\mH)$, preserved by the complex group $G$. It is the unique compact projective orbit of the complex group and the unique complex projective orbit of the compact group $K$.

(ii) The stabiliser $G_{[\lambda]}$ is a subgroup of $G$, containing the Borel subgroup $B$, so-called parabolic subgroup, so that $M$ can be also written as $G/G_{[\lambda]}$. The Lie algebra of $G_{[\lambda]}$ is the sum of the eigenspaces of $\iota H_\lambda$ with nonnegative eigenvalues:
\[\mk g_{[\lambda]} = \mk{g}^\lambda_+ \oplus \mk{g}^\lambda_0 \;.\]
Furthermore, $\mk{k}_\lambda^\CC=\mk{g}^\lambda_0$. This complex Lie algebra is called the reductive Levi component of the parabolic subgroup, while $\mk{g}^\lambda_+$ is the nilpotent radical.

\end{lemma}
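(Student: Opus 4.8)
The plan is to derive everything from classical highest-weight theory together with the Borel fixed-point theorem for connected solvable groups acting on complete varieties. The first step is to note that the line $\CC\ket{\lambda}$ is stable under the Borel subgroup $B$: the Cartan $\mk t^\CC$ acts on $\ket{\lambda}$ by the scalars $(\lambda,\xi)$ and every positive root operator annihilates $\ket{\lambda}$, so $\mk b\cdot\ket{\lambda}\subset\CC\ket{\lambda}$, and since $B$ is connected this integrates to $B\subset G_{[\lambda]}$. A closed subgroup of $G$ containing a Borel subgroup is by definition parabolic, and for such a $P$ the quotient $G/P$ is a complete (projective) variety; since an orbit of an algebraic group action in characteristic zero is a smooth locally closed subvariety isomorphic to $G/(\text{stabiliser})$, and the image of a complete variety under a morphism is closed, $M=G[\lambda]$ is a closed complex algebraic subvariety of $\PP(\mH)$, manifestly $G$-stable, with $M\cong G/G_{[\lambda]}$. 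This already yields the first assertion of (i) and the opening part of (ii).

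Next I would pin down $\mk g_{[\lambda]}$ by a direct computation in the root decomposition. Writing a general element as $H_\xi+\sum_{\alpha\in\Delta}c_\alpha E_\alpha$, its action on $\ket{\lambda}$ produces $(\lambda,\xi)\ket{\lambda}+\sum_\alpha c_\alpha E_\alpha\ket{\lambda}$, where $E_\alpha\ket{\lambda}$ lies in the weight space $\mH^{\lambda+\alpha}$, orthogonal to $\ket{\lambda}$; hence the image lies in $\CC\ket{\lambda}$ if and only if $E_\alpha\ket{\lambda}=0$ for every $\alpha$ with $c_\alpha\neq0$. Applying the $\mk{sl}_2$-triple $(E_\beta,E_{-\beta},H_\beta)$ of a positive root $\beta$ to $\ket{\lambda}$, which is its $\mk{sl}_2$-highest weight vector of weight $\langle\lambda,\beta^\vee\rangle\geq0$, one finds $E_{-\beta}\ket{\lambda}=0$ exactly when $(\lambda,\beta)=0$, while $E_\beta\ket{\lambda}=0$ always; thus $E_\alpha\ket{\lambda}=0$ iff $(\alpha,\lambda)\geq0$. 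Therefore $\mk g_{[\lambda]}=\mk t^\CC\oplus\bigoplus_{(\alpha,\lambda)\geq0}\mk g^\alpha=\mk g^\lambda_+\oplus\mk g^\lambda_0$, and since $\lambda$ is dominant this sum contains $\mk b$, reconfirming parabolicity. The summand $\mk g^\lambda_0$ is the centraliser of the semisimple element $\iota H_\lambda$, hence is reductive and equals the complexification of $\mk k_\lambda=\mk g^\lambda_0\cap\mk k$, so it is the Levi factor. Finally $\mk g^\lambda_+=\bigoplus_{(\alpha,\lambda)>0}\mk g^\alpha$ consists of nilpotent elements and is an ideal of $\mk g_{[\lambda]}$, because the $\mathrm{ad}_{\iota H_\lambda}$-eigenvalues add under bracket and a sum involving a strictly positive eigenvalue stays strictly positive; hence it is the nilpotent radical, completing (ii).

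For the uniqueness statements in (i) I would invoke the Borel fixed-point theorem: the connected solvable group $B$ acting on the complete variety $\PP(\mH)$ has a fixed point, and a $B$-fixed line is spanned by a $\mk b$-eigenvector, i.e. a highest-weight vector; by irreducibility any such vector generates $\mH$, so its weight is $\lambda$, and since $\dim\mH^\lambda=1$ it is proportional to $\ket{\lambda}$. Thus $[\lambda]$ is the \emph{only} $B$-fixed point of $\PP(\mH)$. Any closed $G$-orbit is complete, hence contains a $B$-fixed point, hence contains $[\lambda]$ and therefore equals $M$; so $M$ is the unique closed (equivalently, compact) $G$-orbit. The Iwasawa decomposition gives $G=KAN\subset KB\subset KP$, whence $G/P=K/(K\cap P)$ and so $K[\lambda]=G[\lambda]=M$ — the compact $K$-orbit is already the complex subvariety $M$. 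For the uniqueness of the complex projective $K$-orbit one shows that a $K$-orbit which is a complex submanifold of $\PP(\mH)$ is necessarily $G$-stable, hence, being closed as a compact $K$-orbit, a closed $G$-orbit, and therefore equal to $M$.

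I expect this last point — that a $K$-orbit which happens to be a complex submanifold of $\PP(\mH)$ is automatically invariant under the complexification $G$ — to be the only genuinely nontrivial step; the rest is bookkeeping with roots and standard parabolic structure theory. The route I would take is to compare, at a point of such an orbit, the fundamental vector fields of $\xi$ and of $\iota\xi$ for $\xi\in\mk k$: since the $G$-action on $\PP(\mH)$ is holomorphic, the assignment $\zeta\mapsto X_\zeta$ from $\mk g$ to vector fields is complex linear, so $X_{\iota\xi}=JX_\xi$; on a $K$-invariant complex submanifold $X_\xi$ is tangent and $J$ preserves the tangent bundle, hence $X_{\iota\xi}$ is tangent too, so the whole infinitesimal $\mk g$-action is tangent to the orbit. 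As $G$ is connected the orbit is $G$-stable, and closedness as a compact $K$-orbit then makes it a closed $G$-orbit, necessarily $M$.
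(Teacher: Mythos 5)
Your proof is correct. Note that the paper itself offers no proof of this lemma at all --- it is quoted verbatim from Kostant--Sternberg \cite{KS82} --- so there is nothing internal to compare against; what you have written is the standard self-contained argument that the cited reference (and Borel--Weil theory generally) supplies. The three ingredients you assemble are exactly the right ones: (a) the observation that $\mk b$ preserves $\CC\ket{\lambda}$, which integrates to $B\subset G_{[\lambda]}$ and gives parabolicity, completeness of $G/G_{[\lambda]}$, and closedness of $M$; (b) the $\mk{sl}_2$-triple computation showing $E_{-\beta}\ket{\lambda}=0$ iff $\langle\lambda,\beta^\vee\rangle=0$, which pins down $\mk g_{[\lambda]}=\mk g^\lambda_+\oplus\mk g^\lambda_0$ (your reduction to individual root operators is valid because the vectors $E_\alpha\ket{\lambda}$ lie in pairwise distinct weight spaces, all different from $\mH^\lambda$); and (c) the Borel fixed-point theorem plus the uniqueness of the highest weight line in an irreducible representation, which forces every closed $G$-orbit to contain $[\lambda]$. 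You were also right to flag the uniqueness of the \emph{complex} $K$-orbit as the one genuinely nontrivial point, and your argument via $X_{\iota\xi}=JX_\xi$ on a $K$-invariant complex submanifold is the correct mechanism; the only hypothesis it silently uses is that a compact $K$-orbit is a closed embedded submanifold, so that tangency of the fundamental vector fields of all of $\mk g=\mk k\oplus\iota\mk k$ integrates to $G$-invariance. One small point worth making explicit: your computation depends on the paper's (nonstandard-looking but internally consistent) convention that \emph{positive} root operators annihilate the highest weight vector, and you have used it consistently, so the signs in $\mk g^\lambda_\pm$ come out matching the paper's definition $\mathrm{ad}_{\iota H_\lambda}(E_\alpha)=(\lambda,\alpha)E_\alpha$.
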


\noindent As a consequence, we can write the tangent space to $M$ in the following simple basis.
\begin{lemma}
Let $M = G|\lambda\rangle$ denote the complex orbit of the highest weight vector in $\mH$. Then its tangent space is given by
\begin{equation}\label{tangent-space}
T_{\ket{\lambda}}G\ket{\lambda}=\SpC{\ket{\lambda}}\oplus\SpC{E_\alpha\ket{\lambda},\ \alpha\in\Delta_-^\lambda}.
\end{equation}
The weights of $T_{|\lambda\rangle}G|\lambda\rangle$ are $\lambda$ and the set $\lambda+\Delta^\lambda_-$.
\end{lemma}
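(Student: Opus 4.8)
The plan is to realise the tangent space as the image of the infinitesimal action. For the linear $G$-action on $\mH$ the tangent space to the orbit at $\ket{\lambda}$ is
\[ T_{\ket{\lambda}}G\ket{\lambda}=\mkg\cdot\ket{\lambda}:=\{\, X\ket{\lambda}\ :\ X\in\mkg\,\}, \]
the image of the differential at the identity of the orbit map $g\mapsto g\ket{\lambda}$. So it suffices to evaluate $X\ket{\lambda}$ on a spanning set of $\mkg$ --- namely $\mkt^\CC$ together with the root operators $E_\alpha$, $\alpha\in\Delta$ --- and to record which outputs are nonzero and how they are related. I would organise this by the sign of $(\lambda,\alpha)$, that is, by the partition $\Delta=\Delta_+^\lambda\cup\Delta_0^\lambda\cup\Delta_-^\lambda$ introduced above. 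Since $\lambda$ is dominant we have $(\lambda,\alpha)\ge 0$ for all $\alpha\in\Delta_+$, which forces $\Delta_+^\lambda\subseteq\Delta_+$ and $\Delta_-^\lambda\subseteq\Delta_-$.

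Next I would evaluate the generators. The Cartan part acts by $H_\xi\ket{\lambda}=(\lambda,\xi)\ket{\lambda}$, so $\mkt^\CC\cdot\ket{\lambda}=\SpC{\ket{\lambda}}$, using $\lambda\neq 0$ for a nontrivial irreducible representation. For the root operators I would run, for each $\beta\in\Delta_+$, the standard $\mathfrak{sl}_2$-argument with the subalgebra spanned by $E_\beta,E_{-\beta},H_\beta$: the vector $\ket{\lambda}$ is a highest weight vector for it, with $H_\beta$-eigenvalue $m_\beta=2(\lambda,\beta)/(\beta,\beta)\ge 0$. Hence $E_\beta\ket{\lambda}=0$ always (this is just the highest weight property, recovering the fact that the annihilation operators kill the ground state). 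If $m_\beta=0$, i.e.\ $\beta\in\Delta_0^\lambda$, then also $E_{-\beta}\ket{\lambda}=0$: otherwise $E_{-\beta}\ket{\lambda}$ would be a highest weight vector of negative weight $-2$ for this $\mathfrak{sl}_2$, impossible in a finite-dimensional module. If $m_\beta>0$, i.e.\ $\beta\in\Delta_+^\lambda$ (equivalently $-\beta\in\Delta_-^\lambda$), then the $\mathfrak{sl}_2$-submodule generated is nontrivial, so $E_{-\beta}\ket{\lambda}\neq 0$. Collecting cases: $E_\alpha\ket{\lambda}=0$ for $\alpha\in\Delta_+\cup\Delta_0^\lambda$, while $E_\alpha\ket{\lambda}\neq 0$ for $\alpha\in\Delta_-^\lambda$. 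In particular $(\mkg^\lambda_0\oplus\mkg^\lambda_+)\cdot\ket{\lambda}=\SpC{\ket{\lambda}}$, consistently with Lemma~\ref{Lemma CohOrbit}(ii).

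It remains to see that $\ket{\lambda}$ and the vectors $E_\alpha\ket{\lambda}$, $\alpha\in\Delta_-^\lambda$, are linearly independent, hence form a basis of $\mkg\cdot\ket{\lambda}$. This is immediate from the weight grading: $E_\alpha\ket{\lambda}\in\mH^{\lambda+\alpha}$, and the weights $\lambda$ and $\lambda+\alpha$ ($\alpha\in\Delta_-^\lambda$) are pairwise distinct, so nonzero vectors lying in distinct weight spaces are automatically independent. This yields
\[ T_{\ket{\lambda}}G\ket{\lambda}=\mkg\cdot\ket{\lambda}=\SpC{\ket{\lambda}}\oplus\SpC{E_\alpha\ket{\lambda},\ \alpha\in\Delta_-^\lambda}, \]
and reading off the weight spaces shows the $T$-weights occurring are $\lambda$ and the set $\lambda+\Delta_-^\lambda$. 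As a consistency check, the right-hand side has dimension $1+|\Delta_-^\lambda|$, matching $\dim_\CC G[\lambda]+1=\dim_\CC(G/G_{[\lambda]})+1$, the expected dimension of the affine cone $G\ket{\lambda}$ over the flag variety $G[\lambda]$ (the extra $1$ is the $\CC^*$-scaling direction, nontrivial since $\lambda\neq 0$). The only step here that is not pure bookkeeping with the $ad_{\i H_\lambda}$-eigenspace decomposition and the highest-weight property is the nonvanishing of $E_\alpha\ket{\lambda}$ for $\alpha\in\Delta_-^\lambda$; I would present the $\mathfrak{sl}_2$-reduction as the crux of the argument.
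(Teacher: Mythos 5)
Your proof is correct, and it reaches the same formula through the same underlying decomposition $\Delta=\Delta_+^\lambda\cup\Delta_0^\lambda\cup\Delta_-^\lambda$, but it is organised differently and is more self-contained than the paper's argument. The paper first splits off $\CC\ket{\lambda}$ using $T\ket{\lambda}=\CC^*\ket{\lambda}$, then identifies the tangent space of the projective orbit with $\mkg/\mkg_{[\lambda]}\cong\mkg^\lambda_-$ by \emph{citing} Lemma~\ref{Lemma CohOrbit}(ii) for the structure of the parabolic stabiliser, and finally embeds this quotient into $\mH$ via the action at $\ket{\lambda}$. You instead compute the image of the infinitesimal action $\mkg\cdot\ket{\lambda}$ generator by generator, and you supply your own proof of the one genuinely nontrivial input — that $E_{-\beta}\ket{\lambda}=0$ precisely when $(\lambda,\beta)=0$, and $E_{-\beta}\ket{\lambda}\neq 0$ when $(\lambda,\beta)>0$ — via the standard $\mathfrak{sl}_2$-triple argument. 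In effect you re-derive the relevant half of Lemma~\ref{Lemma CohOrbit}(ii) rather than quoting it. What your route buys is independence from the cited structural result and an explicit reason for the nonvanishing of the tangent vectors $E_\alpha\ket{\lambda}$, which the paper's quotient argument leaves implicit; what the paper's route buys is brevity and a cleaner conceptual link to the parabolic subgroup $G_{[\lambda]}$ that is reused elsewhere. Your linear-independence step via the weight grading and the dimension cross-check against $\dim_\CC G[\lambda]+1$ are both sound.
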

\begin{proof}
Note first that the line through $|\lambda\rangle$, i.e. the complex span $\CC\vert\lambda\rangle\subset\mH$ belongs to $T_{\ket{\lambda}}G\ket{\lambda}$. It is in fact (except for the point $0$) contained in the orbit $G|\lambda\rangle$, since $\lambda$ defines a nontrivial character of $T$ and $T|\lambda\rangle=\CC^*|\lambda\rangle$. Thus the tangent space to $G|\lambda\rangle$ splits as the sum of $\CC|\lambda\rangle$ and the tangent space to the projective orbit $G[\lambda]$.  Using the decomposition $\mk g=\mk g^\lambda_+\oplus\mk g^\lambda_0\oplus\mk g^\lambda_-$ and part (ii) of Lemma \ref{Lemma CohOrbit}, we obtain that the tangent space to $G[\lambda]$ is isomorphic to $\mk g/\mk g_{[\lambda]}=\mk g^\lambda_-$, which is in turn the span of $E_\alpha$ for $\alpha\in \Delta^\lambda_-$. This space is embedded in $\mH$ via the action at $|\lambda\rangle$, which yields the desired formula. The statement for the weights follows from the fact that any root-operator sends weight spaces to weight spaces, and 
$E_\alpha|\lambda\rangle$ has weight $\lambda+\alpha$. 
\end{proof}

\subsection{The normal space, osculating spaces and $j$-excitation spaces}

The entire vector space $\mH$ can be obtained from the highest weight vector $|\lambda\rangle$ by applying finite sequences of root operators $E_{\alpha}, \alpha\in\Delta_-$ and taking linear combinations. In the representation theoretic language, this amounts to the action of the universal enveloping algebra. We shall avoid the general terminology and confine ourselves to a concrete construction.

Consider the following sequence of subspaces of $\mH$, called the osculating spaces to the orbit $G\klambda$, obtained by acting on $\klambda$ by sequences of a given length: 
\[ O^j = O_{\klambda}^jG\klambda = \SpC{ X_1X_2...X_j \klambda \;:\; X_i\in\mk g} \;,\]
for any fixed $j\in\ZZ_{\geq 0}$. We also set $O_{\klambda}^0=\CC \klambda$. Geometrically, the $j$-th osculating space is spanned by the derivatives, up to order $j$, of (complex or real) analytic curves $t\mapsto |\Psi_t\rangle$ lying on $G\klambda$ and passing through $\klambda$.

\begin{lemma}
Denote for simplicity $O^j=O_{\klambda}^jG\klambda$. The following hold

(i) To obtain $O^j$ it is sufficient to apply only $j$-sequences of elements $E_\alpha$, with $\alpha\in\Delta_-$. The osculating spaces at $\klambda$ are preserved by $K_\lambda$ and form an exhaustion of $\mH$ by nested $K_\lambda$-subrepresentations:
\[ 0 \subset O^0 \subset O^1 \subset O^2 \subset ... \subset O^{j_{max}} = \mH \;, \]
with $O^0=\CC\klambda$ and $O^1=T_{\klambda} G\klambda$. 

(ii) Let $N_j$ denote the orthogonal complement of $O^{j-1}$ in $O^j$ for $j\geq 1$ and $N_0=O^0=\CC\klambda$. Then $\mH$ decomposes as a direct sum of $K_\lambda$-subrepresentations:
\[ \mH = N_0 \oplus N_1 \oplus N_2 \oplus ... \oplus N_{j_{max}}.\]

(iii) The normal space to $G\klambda$ at the point $\vert\lambda\rangle$ is given, upon identification with the orthogonal complement of the tangent space, by
\[ N_\lambda = \bigoplus\limits_{j\geq 2} N_j \;.\] 

\end{lemma}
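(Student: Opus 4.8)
The plan is to establish the three statements in sequence, building each on the previous one, with the representation-theoretic fact $Supp(\mH)\subset\lambda+Cone_{\ZZ_{\geq 0}}(\Delta_-)$ as the underlying engine.

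For part (i), I would first argue that applying $E_\alpha$ with $\alpha\in\Delta_+$ or $\alpha\in\Delta_0^\lambda$ produces nothing new beyond what shorter sequences already give. Since $\klambda$ is a highest weight vector, $E_\alpha\klambda=0$ for $\alpha\in\Delta_+$, and for $\alpha\in\Delta_0^\lambda$ the operator $E_\alpha$ preserves the $\mk g_0^\lambda=\mk k_\lambda^\CC$-module structure, so it sends each $O^j$ into itself (this requires an inductive argument: $O^j$ is $\mk g_{[\lambda]}$-stable because $O^0$ is, and $\mk g_{[\lambda]}$ normalizes itself). More precisely, I would show by induction on $j$ that $O^j$ is a $K_\lambda$-subrepresentation: the base case is $O^0=\CC\klambda$, which $K_\lambda$ stabilizes by definition of the projective stabilizer; for the inductive step, $X\in\mk g$ acting on $O^j$ followed by $k\in K_\lambda$ equals $k$ acting followed by $(Ad_k X)$, and $\mk g$ is $Ad_{K_\lambda}$-stable. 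Then, writing any length-$j$ sequence of arbitrary root operators and moving the $\Delta_+\cup\Delta_0^\lambda$ operators past the $\Delta_-^\lambda$ ones using the commutation relations (which only introduce operators of the same or shorter total ``lowering degree''), one reduces to pure $\Delta_-^\lambda$-sequences plus lower-order terms. The nesting $O^{j-1}\subset O^j$ is immediate (take $X_1=\mathbbm 1$, or rather absorb it), and the identifications $O^0=\CC\klambda$, $O^1=T_{\klambda}G\klambda$ follow from the previous lemma, formula (\ref{tangent-space}): $O^1=\SpC{\klambda}\oplus\SpC{E_\alpha\klambda:\alpha\in\Delta_-^\lambda}$, which is exactly $T_{\klambda}G\klambda$. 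That the exhaustion terminates at $\mH$ is because $\mH$ is irreducible and finite-dimensional, so $\bigcup_j O^j$ is a nonzero $\mk g$-submodule.

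For part (ii), this is purely linear algebra once (i) is in hand: the $O^j$ form a nested chain of finite-dimensional subspaces, $N_j:=O^j\ominus O^{j-1}$ is the orthogonal complement inside $O^j$, and $\mH=O^{j_{max}}=\bigoplus_j N_j$ by telescoping. The only content is that each $N_j$ is a $K_\lambda$-subrepresentation: since $K$ acts unitarily and $K_\lambda$ preserves both $O^{j-1}$ and $O^j$, it preserves the orthogonal complement $N_j$ of one in the other. I would state this briefly.

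For part (iii), I need the normal space $N_\lambda:=T_{\klambda}\mH/T_{\klambda}(G\klambda)$, identified via the Hilbert--Schmidt inner product with the orthogonal complement of $T_{\klambda}G\klambda=O^1$ in $\mH=T_{\klambda}\mH$. By part (ii), $\mH=\bigoplus_{j\geq 0}N_j$ and $O^1=N_0\oplus N_1$, so the orthogonal complement of $O^1$ is precisely $\bigoplus_{j\geq 2}N_j$. One subtlety worth a sentence: I should make sure the complex orbit $G\klambda$ and the compact orbit $K\klambda$ have the same tangent space at $\klambda$ up to the real-vs-complex issue — but since $T_{\klambda}(K\klambda)$ is a real form spanning the complex space $T_{\klambda}(G\klambda)=O^1$ over $\CC$, the orthogonal complements agree, and the normal space computation is unaffected. \textbf{The main obstacle} I anticipate is the careful bookkeeping in part (i): proving rigorously that the ``straightening'' of an arbitrary length-$j$ product of root operators against $\klambda$ into pure-$\Delta_-^\lambda$ form does not require longer sequences — this is where one genuinely uses that $\Delta_+$-operators annihilate $\klambda$ and that commutators of root operators lie in a controlled span, and it is the step most prone to an off-by-one or a hidden use of a longer sequence. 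Everything else is either cited (the structure of $\mk g_{[\lambda]}$ from Lemma \ref{Lemma CohOrbit}) or formal.
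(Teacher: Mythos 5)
Your proposal is correct and follows essentially the same route as the paper: the paper's part (i) is exactly your ``straightening'' argument, carried out by invoking the Poincar\'e--Birkhoff--Witt theorem to write any length-$j$ product as a sum of terms $E_{\alpha_1}\cdots E_{\alpha_p}H_{\beta_1}\cdots H_{\beta_q}E_{\gamma_1}\cdots E_{\gamma_r}$ with $p+q+r\leq j$, which annihilate $\klambda$ unless $r=0$, and parts (ii) and (iii) are dispatched by the same orthogonal-complement and $O^1=N_0\oplus N_1$ observations you make. The only cosmetic difference is that you reduce further to $\Delta_-^\lambda$-sequences (using $K_\lambda$-stability to absorb the $\Delta_0^\lambda\cap\Delta_-$ operators), whereas the lemma and the paper's proof stop at $\Delta_-$; this refinement is true but is deferred in the paper to the subsequent lemma on $\Lambda_j$.
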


\begin{proof}
The $K_\lambda$-invariance of $O^j$ follows by induction, the base case being the tangent space to the orbit. The fact that it suffices to employ only negative root vectors follows from the Poincar\'e-Birkhoff-Witt theorem (cf. \cite{Humphreys}) applied to the bases of $\mk g$ given by the root vectors and arbitrary elements form the torus. Indeed, using the commutation relations, any operator resulting from a $j$-sequence $X_1...X_j$ can be written as a sum of operators of the form $E_{\alpha_1}...E_{\alpha_p} H_{\beta_1} ... H_{\beta_q} E_{\gamma_{1}} ... E_{\gamma_r}$, with $p+q+r\leq j$ and $\alpha_i\in\Delta_-$, $\beta_i\in\Delta$, $\gamma_i\in\Delta_+$. Since the highest weight vector is fixed by the Borel subalgebra $\mk b$ and annihilated by the positive root vectors, we see that $E_{\alpha_1}...E_{\alpha_p} H_{\beta_1} ... H_{\beta_q} E_{\gamma_{1}} ... E_{\gamma_r}\klambda$ vanishes whenever $r\ne 0$, while, whenever $r=0$, it is proportional to $E_{\alpha_1}...E_{\alpha_p}\klambda$ and belongs to 
the weight space of 
the 
weight $\lambda+\alpha_1+...+\alpha_p$. Hence, Space $O^j$ is spanned by vectors $E_{\alpha_1}...E_{\alpha_p}\klambda$, $\alpha_1,...,\alpha_p\in\Delta_-$, $p\leq j$. These observations imply part (i). Part (ii) follows directly from part (i) and the definitions. Namely, if spaces $O^j$ are $K_\lambda$ invariant, then $N_j=O^j/O^{j-1}$ are also $K_\lambda$ invariant. By definition we have $N_i\cap N_j=\emptyset{\rm\ for\ }i\neq j$. For part (iii) note that $N_0\oplus N_1\cong O^1=T_{\klambda} G\klambda$. Hence, the complement is the normal space.
\end{proof}

\begin{definition}\label{def:N_j}
The spaces $N_j$ are called the $j$-excitation spaces of $\mH$, relative to the ground state $\klambda$.
\end{definition}

In the language of representation theory, and projective geometry, $N_j$ is known as the $j$-th normal space to the orbit. $N_2$ is the image of the Gau{\ss}ian second fundamental form of the embedding, and analogously $N_j$ is the image of a generalization called the $j$-fundamental form, see \cite{LM}.

For the scenarios that are considered in this paper, the description of $j$-excitation spaces simplifies, because the weight spaces are one-dimensional. Hence, we can apply the following lemma.
\begin{lemma}
Assume that $\mH$ is a weight-multiplicity-free representation. Then, the $j$-excitation spaces are determined completely by their weights and 
\begin{equation}\label{Nj_weights}
\fl \Lambda_j:= Supp(N_j)= \{\eta\in Supp(\mH) : \eta=\lambda+\alpha_1+\dots+\alpha_j, \alpha_1,\dots,\alpha_j\in\Delta_-^\lambda \} \;.
\end{equation}
\end{lemma}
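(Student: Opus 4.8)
The plan is to deduce the description of $N_j$ from the structure of the osculating spaces $O^j$ in the preceding lemma, together with the weight decomposition of $\mH$. The argument has three stages: recognize $N_j$ as a sum of weight spaces, cut down the lowering operators needed to build $O^j$ from $\Delta_-$ to $\Delta_-^\lambda$, and then separate the contributions of $O^j\setminus O^{j-1}$ by the grading of $\mk g$ attached to $\lambda$.

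I would first dispose of the statement that $N_j$ is determined by its weights. Each $O^j$ is $T$-invariant, since for $t\in T$ and $X_i\in\mk g$ one has $t\,(X_1\cdots X_j\klambda)=(Ad_t X_1)\cdots(Ad_t X_j)(t\klambda)\in O^j$, $\klambda$ being a $T$-eigenvector. Hence $N_j$, being the orthogonal complement of $O^{j-1}$ inside $O^j$ with $T$ acting unitarily, is $T$-invariant as well. Any $T$-invariant subspace $V\subseteq\mH$ splits as $V=\bigoplus_\eta(V\cap\mH^\eta)$, and weight-multiplicity-freeness forces each summand to be $0$ or all of $\mH^\eta$; thus $V=\bigoplus_{\eta\in Supp(V)}\mH^\eta$ is completely determined by $Supp(V)$. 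Taking $V=N_j$ proves the first assertion and reduces the rest to computing $\Lambda_j=Supp(N_j)$.

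Next I would upgrade the preceding lemma from $\Delta_-$ to $\Delta_-^\lambda$. From $O^j=O^{j-1}+\mk g\cdot O^{j-1}$ and the splitting $\mk g=\mk g^\lambda_-\oplus\mk g_{[\lambda]}$ of Lemma~\ref{Lemma CohOrbit}, it suffices to note that the parabolic $\mk g_{[\lambda]}$ preserves $O^{j-1}$: being the Lie algebra of the stabilizer $G_{[\lambda]}$ of $[\lambda]\in\PPH$, every $Y\in\mk g_{[\lambda]}$ satisfies $Y\klambda\in\SpC{\klambda}$, while $[Y,X_1\cdots X_{j-1}]$ is a sum of products of length $\le j-1$; hence $Y\cdot O^{j-1}\subseteq O^{j-1}$. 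Therefore $O^j=O^{j-1}+\mk g^\lambda_-\cdot O^{j-1}$, and unwinding the recursion from $O^0=\SpC{\klambda}$ gives $O^j=\sum_{q\le j}(\mk g^\lambda_-)^q\klambda$, the linear span of the vectors $E_{\alpha_1}\cdots E_{\alpha_q}\klambda$ with $\alpha_i\in\Delta_-^\lambda$ and $q\le j$; in particular $\mH=U(\mk g^\lambda_-)\klambda$. Since $E_{\alpha_1}\cdots E_{\alpha_q}\klambda$ is a weight vector of weight $\lambda+\alpha_1+\dots+\alpha_q$, this yields the inclusion $Supp(O^j)\subseteq\{\lambda+\alpha_1+\dots+\alpha_q:\ q\le j,\ \alpha_i\in\Delta_-^\lambda\}\cap Supp(\mH)$.

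Finally I would separate by level. In every scenario considered (they are minuscule, respectively symmetric powers of the defining representation) the pairing $(\alpha,\lambda)$ takes one fixed negative value $-c$ for all $\alpha\in\Delta_-^\lambda$; equivalently $\mk g^\lambda_-$ is abelian and $\mk g=\mk g^\lambda_-\oplus\mk g^\lambda_0\oplus\mk g^\lambda_+$ is a three-step grading. Then every weight of $(\mk g^\lambda_-)^q\klambda$ has $\lambda$-pairing $(\lambda,\lambda)-qc$, so the supports of $(\mk g^\lambda_-)^q\klambda$ for distinct $q$ are disjoint, $Supp(O^j)=\bigsqcup_{q\le j}Supp\big((\mk g^\lambda_-)^q\klambda\big)$, and hence $\Lambda_j=Supp(O^j)\setminus Supp(O^{j-1})=Supp\big((\mk g^\lambda_-)^j\klambda\big)$. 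As $\mH=U(\mk g^\lambda_-)\klambda$, any weight $\eta=\lambda+\alpha_1+\dots+\alpha_j$ of $\mH$ lies at level $j$ and therefore in $(\mk g^\lambda_-)^j\klambda$, so this last set is exactly the right-hand side of (\ref{Nj_weights}). The main obstacle is precisely this last stage: the purely formal steps above go through for any weight-multiplicity-free $\mH$, but the set in (\ref{Nj_weights}) is only well-posed once the families $\{\lambda+\alpha_1+\dots+\alpha_q:\alpha_i\in\Delta_-^\lambda\}$ are pairwise disjoint in $q$, which is the content of the three-step grading; one must therefore use (or verify case by case) that $\mk g^\lambda_-$ is abelian, so that the osculating filtration refines into the clean pieces $N_j=\bigoplus_{(\lambda-\eta,\lambda)=jc}\mH^\eta$.
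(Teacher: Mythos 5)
Your argument is correct and lands on the same key mechanism as the paper --- namely that root operators $E_\alpha$ with $\alpha\in\Delta_0^\lambda$ cannot create new excitation levels because they preserve the osculating filtration --- but you organize it differently and, in one respect, more carefully. The paper argues by induction on $j$ and disposes of the roots in $\Delta_0^\lambda\cap\Delta_-$ by a contradiction argument that invokes only the $K_\lambda$-invariance of $O^j$; your second stage replaces this with the observation that the full parabolic $\mk g_{[\lambda]}=\mk g^\lambda_+\oplus\mk g^\lambda_0$ preserves each $O^{j-1}$ (since it maps $\klambda$ into $\CC\klambda$ and commutators shorten words), which yields $O^j=\sum_{q\le j}(\mk g^\lambda_-)^q\klambda$ in one stroke and subsumes the paper's step. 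The more substantive difference is your third stage: the paper asserts that the span of the weight vectors with weights in $\Lambda_{j+1}$ sits inside $N_{j+1}$ with the word ``clearly'', whereas you correctly point out that this --- equivalently, the pairwise disjointness in $j$ of the sets on the right-hand side of the formula --- does not follow from weight-multiplicity-freeness alone and must be secured by a grading argument. This caveat is real: for the vector representation of $Sp(2n)$, which is weight-multiplicity-free (indeed minuscule), the weight $-e_1=\lambda+(-2e_1)=\lambda+(-e_1-e_2)+(e_2-e_1)$ has ambiguous level, so the formula as literally stated fails; the lemma is implicitly restricted to the paper's scenarios, where your level argument closes the gap. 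So your proof exposes and repairs an elision in the paper's own proof rather than introducing a new one.

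Two small corrections to your third stage. First, constancy of $(\alpha,\lambda)$ on $\Delta_-^\lambda$ implies that $\mk g^\lambda_-$ is abelian but is not equivalent to it. Second, the constancy claim fails for the mixed examples the paper uses for illustration, e.g. $S^2(\CC^2)\otimes S^3(\CC^2)$, where the two relevant values are $-2$ and $-3$; the levels are nevertheless well defined there because the two negative roots act on independent factors of the weight lattice. For the five principal scenarios (distinguishable particles, bosons, fermions, and the two half-spin components of the Fock space) your constancy claim does hold, with $c=1$ except for $S^L(\CC^N)$ where $c=L$, so the proof goes through as written for the cases the paper actually needs.
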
 
\begin{proof}
For weight-multiplicity-free representations weight vectors are in a one-to one correspondence with weights, hence we can consider only weights, and sequences of roots added to the highest weight, instead of sequences of root operators acting on the highest weight vector. The proof will be inductive. Firstly, note that all weights of $N_{j+1}$ are necessarily of the form $\lambda+\alpha_1+\dots+\alpha_j, \alpha_1,\dots,\alpha_{j+1}\in\Delta_-$. This is because applying to $\klambda$ sequences of negative root operators of length smaller than $j+1$ results with vectors from $O^{j}$, which we mod out by definition. Denote 
\[M_{j+1}:=\SpC{\ket{\lambda+\alpha_1+\dots+\alpha_j},\alpha_{j+1}\in\Delta_-}.\]
By the above remark, $j$-excitation spaces are equivalently given by 
\begin{equation}\label{Nj_quotient}
N_{j+1}=M_{j+1}/O^{j}.
\end{equation}
Let us begin with $N_1=O^1/O^0\cong T_{\klambda} G\klambda/\klambda$. By lemma \ref{tangent-space}, we have $N_1=\SpC{E_\alpha\ket{\lambda},\ \alpha\in\Delta_-^\lambda}$, which for weight-multiplicity-free representations means that $N_1=\SpC{\ket{\lambda+\alpha},\ \alpha\in\Delta_-^\lambda}$. Now suppose that $N_{j}$ is spanned by weights of the form (\ref{Nj_weights}). Clearly, the space spanned on the weight vectors with weights from $\Lambda_{j+1}$ is a subspace of $N_{j+1}$. For the sake of contradiction, suppose that in the basis of $N_{j+1}$ there is a weight vector, which is of the form $\ket{\lambda+\alpha_1+\dots+\alpha_{j+1}}$ with $\alpha_i\in\Delta_0^\lambda\cap\Delta_-$ for some $i\leq j+1$. Without any loss of generality we can assume that $i=j+1$. By definition $\ket{\lambda+\alpha_1+\dots+\alpha_{j}}$ belongs to $O^j$ and does not belong to $M_{j+1}$. By the $K_\lambda$-invariance of $O^j$, we have $E_{\alpha_{j+1}}\ket{\lambda+\alpha_1+\dots+\alpha_{j}}\in O^j$, hence by (\ref{Nj_quotient}), 
vector $\ket{\lambda+\alpha_1+\dots+\alpha_{j+1}}$ does not belong to the basis of $N_{j+1}$.
\end{proof}

\begin{remark}
 If $\eta\in\Lambda_j$ and $\xi\in\Lambda_k$ with $|j-k|>1$, then $\eta$ and $\xi$ are root-distinct.
\end{remark}

To determine the local momentum cone at $\lambda$, we are brought to consider the $K_\lambda$-action on the normal space $N_\lambda = \oplus_{j\geq 2} N_j$, and the respective momentum image. This representation is not irreducible except in special cases. It is necessarily reducible, whenever $j_{max}\geq 3$. In section \ref{sec:computational_results} we show how the $j$-excitation spaces decompose further into irreducible components for specific scenarios. As basic tool for the calculation of momentum images of reducible representations in terms of the momentum images of their summands we have the following.

\begin{theorem}[Momentum image for a direct sum of representations \cite{Wildberger}]\label{direct_sum}
Let $\mH_1$ and $\mH_2$ be two unitary representations of $K$ and let $\mH=\mH_1\oplus\mH_2$. Then,
\[\fl\mu(\mH)=\{X\in\mkk:\ X=t X_1+(1-t)X_2,\ 0\leq t\leq1,\ X_1\in\mu(\mH_1),\ X_2\in\mu(\mH_2)\}.\]
In other words, the momentum image of a direct sum of representations is the set of all line segments that connect the momentum images of the components.
\end{theorem}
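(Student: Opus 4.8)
The plan is to prove the set equality by establishing both inclusions directly from the definition of the momentum map, exploiting the fact that a normalised vector in $\mH_1\oplus\mH_2$ decomposes as $\kpsi = \cos\theta\, \ket{\Psi_1} + \sin\theta\, \ket{\Psi_2}$ with $\ket{\Psi_i}\in\mH_i$ normalised (the degenerate cases where one component vanishes are handled separately and trivially). First I would write out $\mu(\kpsi)$ using formula (\ref{momentum_def}): for all $X\in\mkk$,
\[
\left(\mu(\kpsi),X\right) = \frac{1}{i}\bpsi X\kpsi = \frac{1}{i}\left(\cos^2\theta\,\bra{\Psi_1}X\ket{\Psi_1} + \sin^2\theta\,\bra{\Psi_2}X\ket{\Psi_2} + 2\,\mathrm{Re}\!\left(\cos\theta\sin\theta\,\bra{\Psi_1}X\ket{\Psi_2}\right)\right).
\]
The crucial structural observation is that since $X$ acts block-diagonally on $\mH_1\oplus\mH_2$ (the representation is the direct sum), the cross term $\bra{\Psi_1}X\ket{\Psi_2}$ vanishes for every $X\in\mkk$. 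Hence $\left(\mu(\kpsi),X\right) = \cos^2\theta\,\left(\mu(\ket{\Psi_1}),X\right) + \sin^2\theta\,\left(\mu(\ket{\Psi_2}),X\right)$ for all $X$, and by non-degeneracy of the Hilbert-Schmidt form this gives $\mu(\kpsi) = t\,\mu(\ket{\Psi_1}) + (1-t)\,\mu(\ket{\Psi_2})$ with $t=\cos^2\theta\in[0,1]$. This proves the inclusion $\subseteq$.

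For the reverse inclusion $\supseteq$, I would run the same computation backwards: given $X_1\in\mu(\mH_1)$, $X_2\in\mu(\mH_2)$ and $t\in[0,1]$, pick normalised $\ket{\Psi_1},\ket{\Psi_2}$ with $\mu(\ket{\Psi_i})=X_i$ and set $\kpsi=\sqrt{t}\,\ket{\Psi_1}+\sqrt{1-t}\,\ket{\Psi_2}$, which is normalised because the two summands are orthogonal (they lie in orthogonal summands of $\mH$). The block-diagonality argument then shows $\mu(\kpsi) = t X_1 + (1-t) X_2$, so the segment point lies in $\mu(\mH)$. Assembling the two inclusions yields the claimed equality.

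The only genuinely non-routine point — and the one I would state carefully — is the vanishing of the cross term $\bra{\Psi_1}X\ket{\Psi_2}$: it rests entirely on $\mH$ being the \emph{direct sum} representation, so that each $X\in\mkk$ preserves both $\mH_1$ and $\mH_2$, making $X\ket{\Psi_2}\in\mH_2\perp\mH_1$. Everything else is bookkeeping with the normalisation and the quadratic expansion of $\bpsi X\kpsi$; I do not expect any subtlety there. (One may remark that the statement and proof extend verbatim by induction to any finite direct sum, giving that $\mu(\mH_1\oplus\dots\oplus\mH_m)$ is the convex-combination hull $\{\sum t_i X_i : t_i\geq 0,\ \sum t_i=1,\ X_i\in\mu(\mH_i)\}$, which is what is actually used when applying this to $N_\lambda=\bigoplus_{j\geq 2}N_j$.)
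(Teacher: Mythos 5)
Your proposal is correct and follows essentially the same route as the paper: decompose a normalised state as a combination of normalised vectors from the two summands, expand $\bpsi X\kpsi$, and observe that the cross terms vanish because every $X\in\mkk$ preserves each summand, yielding $\mu(\kpsi)=t\,\mu(\ket{\Psi_1})+(1-t)\,\mu(\ket{\Psi_2})$ with $t=|z_1|^2$. The paper writes the coefficients as complex numbers $z_1,z_2$ with $|z_1|^2+|z_2|^2=1$ rather than $\cos\theta,\sin\theta$, but since phases can be absorbed into the $\ket{\Psi_i}$ without changing their momentum images this is an immaterial difference, and your explicit treatment of the reverse inclusion is implicit in the paper's computation.
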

\begin{proof}
Any normalised vector $\ket{\Psi}\in\mH$ is of the form $\ket{\Psi}=z_1\ket{\Psi_1}+z_2\ket{\Psi_2},\ |z_1|^2+|z_2|^2=1$, for some $\ket{\Psi_i}\in\mH_i$, $\bk{\Psi_i}{\Psi_i}=1$. For any $X\in\mkk$, we have
\begin{eqnarray*}
\i (\mu(\ket{\Psi}),X)=(\overline{z_1}\bra{\Psi_1}+\overline{z_2}\bra{\Psi_2})X(z_1\ket{\Psi_1}+z_2\ket{\Psi_2})=|z_1|^2\bra{\Psi_1}X\ket{\Psi_1}+ \\ +|z_2|^2\bra{\Psi_2}X\ket{\Psi_2},
\end{eqnarray*}
where we used the fact that $\bra{\Psi_1}X\ket{\Psi_2}=\bra{\Psi_2}X\ket{\Psi_1}=0$. In other words,
\[(\mu(\ket{\Psi}),X)=|z_1|^2(\mu(\ket{\Psi_1}),X)+|z_2|^2(\mu(\ket{\Psi_2}),X){\rm\ for\ all\ }X\in\mkk.\]
\end{proof}
\noindent For brevity, we will call such a momentum image the ${\it join}$ of momentum images of the components.
\begin{definition}[Join of momentum images]\label{def:join}
The momentum image of a direct sum of two representations of group $K$ will be called the join of the momentum images of the components.
\begin{eqnarray*}
{\rm Join}(\mu(\mH_1),\mu(\mH_2)):=\{X\in\mkk:\ X=t X_1+(1-t)X_2, \\
 \ 0\leq t\leq1,\ X_1\in\mu(\mH_1),\ X_2\in\mu(\mH_2)\}.
\end{eqnarray*}
\end{definition}
\noindent Despite its simplicity, the above theorem is in general difficult to be applied directly to find the momentum polytope for a direct sum of representations. Note that we have to consider not only the lines joining $\mu(\mH_1)\cap\mkt$ and $\mu(\mH_2)\cap\mkt$ (the so-called momentum rosettes), but also the lines joining points that do not lie in $\mkt$, but intersect $\mkt$ at some point. Such lines are the main source of difficulties. However, in our computations we will consider a cone at $\lambda$, which stems only from one component of the normal space, namely from $N_2$. Remarkably, as we show in Section \ref{sec:computational_results}, in some low-dimensional cases such a cone intersected with the positive Weyl chamber is equal to the whole momentum polytope.

As noted in \cite{Sjamaar}, the local cone at $\lambda$ is contained in the cone spanned by all rays from $\lambda$ to weights from $Supp(N_\lambda)$. Such a cone gives is a crude approximation for the spectral polytope, which boils down to considering the convex hull of local spectra that correspond to computational basis states (weight vectors) that are more than one excitation away from $\ket{\lambda}$.

It is tempting to expect that for some classes of representations the momentum polytope is given just by the intersection of the local cone at $\lambda$ with $\mkt_+$. In such a case, any face of $\poly$ must necessarily contain either the vertex of $\mkt_+$ or the highest weight (or both). However, it turns out that for some physical scenarios the spectral polytope is more complex. For example, the description of the momentum polytope for $\mH_D=\CC^4\otimes\CC^4\otimes\CC^4$ given in \cite{VW} shows that there are some faces that contain neither $\lambda$ nor the vertex of $\mkt_+$. One such face is given by inequality
\begin{eqnarray*} \fl
-5\eta_1^{(1)}-\eta_2^{(1)}+3\eta_3^{(1)}+3\eta_4^{(1)}-5\eta_1^{(2)}+3\eta_2^{(2)}+3\eta_3^{(2)}-\eta_4^{(2)}+5\eta_1^{(3)}+\eta_2^{(3)}+ \\ -3\eta_3^{(3)}-3\eta_4^{(3)}\geq 15.
\end{eqnarray*}
Similarly, for $\Lambda^3(\CC^8)$ an exemplary face that contains neither $\lambda$ nor the vertex of $\mkt_+$ is \cite{Klyachko05}
\[5\eta_1-3\eta_2-3\eta_3+\eta_4+\eta_5+5\eta_6-7\eta_7+\eta_8\geq 3.\]

\section{Spectral polytope from doubly excited states}\label{sec:N2_poly}

Here we define a polytope stemming from doubly excited states, which we later compute for different scenarios in Section \ref{sec:computational_results}. As the central result of this section we show that the spectral polytope stemming from doubly excited states is contained in the whole spectral polytope.

\begin{definition}[Spectral polytope from doubly excited states]\label{N2_poly}
Let $\lambda$ be the highest weight of an irreducible representation of $K$ on $\mH$. Let $N_2\subset\mH$ be the $2$-excitation space from Definition \ref{def:N_j}. Denote the momentum polytope for the representation of the stabiliser of the highest weight, $K_\lambda$, on $N_2$ by 
\[\mP_{K_\lambda}(N_2):=\mu_{K_\lambda}(N_2)\cap(\mkt_\lambda)_+,\]
where $(\mkt_\lambda)_+$ is the positive Weyl chamber of $K_\lambda$. The spectral polytope from doubly excited states, $\mP_2(\mH)$, is the cone at $\lambda$ spanned on $\mP_{K_\lambda}(N_2)$, intersected with $\mkt_+$, i.e.
\[\mP_2(\mH):={\rm Cone}_\lambda(\mP_{K_\lambda}(N_2))\cap\mkt_+.\]
In other words, this is the intersection with $\mkt_+$ of the set of all rays that begin at $\lambda$ and intersect $\mP_{K_\lambda}(N_2)$. 
\end{definition}

It is a nontrivial problem to determine which points of $\mP_2(\mH)$ lie in the image of the momentum map. By definition, points in $\mu_{K_\lambda}(N_2)$ are contained in $\mu_K(\PPH)$. By the convexity of momentum map we have that ${\rm Conv}\left(\mu_{K_\lambda}(N_2)\cap\mkt_+,\lambda\right)$ is contained in $\mu_K(\mH)\cap\mkt_+$. However, the polytope $\mP_2(\mH)$ is larger than this set. In fact, $\mu_{K_\lambda}(N_2)\cap\mkt_+$ can be an empty set, whereas $\mP_2(\mH)$ is nonempty whenever $N_2$ is non-trivial. As an example, consider the representation of $SU(2)\times SU(2)$ on $\mH=S^2\left(\CC^2\right)\otimes S^3\left(\CC^2\right)$, which describes two separated systems of bosons, each system consisting of two modes. The spectral polytope and its relevant parts are shown on Fig.\ref{n2-nontrivial}. 
\begin{figure}[h]
\centering
\includegraphics[width=0.5\textwidth]{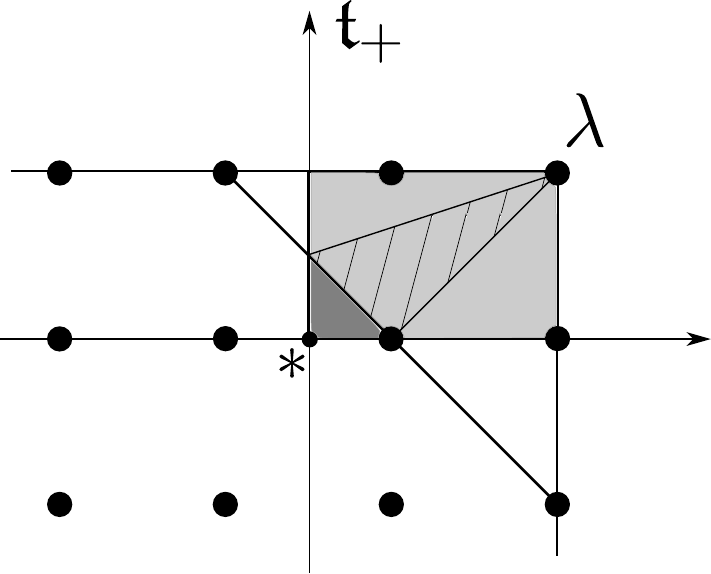}
\caption{Spectral polytope for the system $\mH=S^2\left(\CC^2\right)\otimes S^3\left(\CC^2\right)$. In the mode population notation, the highest weight vector is $\ket{2,0}\otimes\ket{3,0}$. The highest weight lies in the interior of $\mkt_+$, hence $K_\lambda=T$. $N_2=\SpC{\ket{2,0}\otimes\ket{1,2},\ket{1,1}\otimes\ket{2,1},\ket{0,2}\otimes\ket{3,0}}\cong \CC\oplus\CC\oplus\CC$. The image $\mu_{K_\lambda}(N_2)$ is the convex hull of weights from $Supp(N_2)$ - the diagonal solid line. Different parts of momentum polytope that are discussed in the proof of Theorem \ref{N2cone_contained} and in its preceding paragraph are marked with grey.}
\label{n2-nontrivial}
\end{figure}
The extreme weights are root-distinct, hence $\mu_K(\mH)$ is equal to their convex hull. The spectral polytope coincides with $\mP_2(\mH)$. Set ${\rm Conv}\left(\mu_{K_\lambda}(N_2)\cap\mkt_+,\lambda\right)$ is marked as the dashed light grey area. This area is much smaller than the spectral polytope. There is a larger part of the polytope, which is given by just by the positive part of the convex hull of $\mP_{K_\lambda}(N_2))$ and the highest weight (the entire grey area on Fig.\ref{n2-nontrivial}). As we show in the proof of Theorem \ref{N2cone_contained}, such a part is obtained as the local spectra of states from $\mH_\lambda:=\SpC{\ket{\lambda}}\oplus N_2$. However, in order to obtain $\mP_2(\mH)$ one has to add the dark-grey triangle in the corner adjacent to point $*$, which does not come from the momentum image of $\mH_\lambda$. Theorem \ref{N2cone_contained} asserts that there is no such region for the scenarios with fermions and distinguishable particles. However, such a region exists in the 
scenarios involving bosons. Hence, to prove a fact analogous to Theorem \ref{N2cone_contained} 
for bosonic scenarios, we compute the polytopes directly in section \ref{bosons} and compare them with $\mP_2(\mH)$.

\begin{theorem}\label{N2cone_contained}
For representations, which are weight-multiplicity-free and minuscule, the spectral polytope from doubly excited states is contained in the whole spectral polytope, i.e. \[\mP_2(\mH)\subset\poly.\]
Moreover, \[\mP_2(\mH)=\conv{\lambda,\mP_{K_\lambda}(N_2)}\cap\mkt_+.\]
\end{theorem}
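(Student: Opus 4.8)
The plan is to exploit the equation $\poly = \bigcap_{\ket{\Psi}} \mC(Y_\Psi)$ together with Lemma \ref{lemma:Sja:locconelambda} applied at the highest weight vector $\ket{\lambda}$, and then to identify the local cone $\mC(Y_\lambda)\cap\mkt_+$ with $\mP_2(\mH)$. The first task is to verify the hypothesis of Lemma \ref{lemma:Sja:locconelambda} at $\ket{\lambda}$: since $\ket{\lambda}$ is a weight vector, its projective stabiliser is $K_{[\lambda]}=K_\lambda$, and by the lemma preceding Section \ref{sec:coh_orbit} this equals the centraliser of $\mu(\ket{\lambda})=\lambda$. Hence the symplectic slice at $\ket{\lambda}$ is the full normal space $N_\lambda = \bigoplus_{j\geq 2} N_j$, and the local cone at $\lambda$ is $\mC(Y_\lambda) = {\rm Cone}_\lambda(\mu_{K_\lambda}(N_\lambda))$, regarding $N_\lambda$ as a $K_\lambda$-representation. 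First I would record that $\poly \subset \mC(Y_\lambda)\cap\mkt_+$ (equation \ref{hw_cone_containing_poly}) and that the reverse containment will follow once I show $\mC(Y_\lambda)\cap\mkt_+ = \mP_2(\mH)$, i.e. that only the $N_2$-summand contributes to the local cone.

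The crux is therefore the claim $\mu_{K_\lambda}(N_\lambda)$ and $\mu_{K_\lambda}(N_2)$ span the same cone at $\lambda$, i.e. ${\rm Cone}_\lambda(\mu_{K_\lambda}(N_\lambda)) = {\rm Cone}_\lambda(\mu_{K_\lambda}(N_2))$. Here is where minusculeness enters. For a minuscule representation all weights are extreme points of $\conv{Supp(\mH)}$, so no weight $\eta$ of $\mH$ can be written as $\lambda + $ a nontrivial nonnegative combination of $\Delta_-^\lambda$-roots unless $\eta-\lambda$ is itself one of those roots or a sum of exactly two of them lying on the boundary — more precisely, the geometry forces $Supp(N_j)$ for $j\geq 3$ to lie in the cone ${\rm Cone}_\lambda(Supp(N_2))$ already, indeed I expect $\lambda + \alpha_1 + \cdots + \alpha_j$ with $\alpha_i\in\Delta_-^\lambda$ to be a nonnegative combination of the weights $\lambda+\alpha+\beta$ of $N_2$ and of $\lambda$ itself, by a convexity argument on the chain $\lambda, \lambda+\alpha_1, \lambda+\alpha_1+\alpha_2,\dots$ in the minuscule weight polytope. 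Combined with Theorem \ref{direct_sum} (the momentum image of $N_\lambda = N_2\oplus(\bigoplus_{j\geq 3}N_j)$ is the join of the pieces) and the fact that each point of $\mu_{K_\lambda}(\bigoplus_{j\geq 3}N_j)$ is a convex combination of weights in $\bigcup_{j\geq3}Supp(N_j)$, this shows the join adds nothing to the cone at $\lambda$. Weight-multiplicity-freeness is used so that $\mu_{K_\lambda}(N_j)$ is simply $\conv{Supp(N_j)}$ after suitable root-distinctness bookkeeping (via Lemma \ref{lemma:rd} applied to the $K_\lambda$-root system, using the Remark that $\Lambda_j,\Lambda_k$ are root-distinct when $|j-k|>1$).

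With the cone identification in hand, $\mC(Y_\lambda)\cap\mkt_+ = {\rm Cone}_\lambda(\mu_{K_\lambda}(N_2))\cap\mkt_+$. To pass from $\mu_{K_\lambda}(N_2)$ to $\mP_{K_\lambda}(N_2)=\mu_{K_\lambda}(N_2)\cap(\mkt_\lambda)_+$ I would use $K_\lambda$-equivariance of $\mu_{K_\lambda}$: every ray from $\lambda$ meeting $\mu_{K_\lambda}(N_2)$ is $K_\lambda$-conjugate to one meeting $(\mkt_\lambda)_+$, and since $\lambda$ and $\mkt_+$ are $K_\lambda$-fixed (as $\mkt\subset\mk k_\lambda$, $\lambda$ centralised), the cone ${\rm Cone}_\lambda(\mu_{K_\lambda}(N_2))\cap\mkt_+$ equals ${\rm Cone}_\lambda(\mP_{K_\lambda}(N_2))\cap\mkt_+ = \mP_2(\mH)$ by Definition \ref{N2_poly}. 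This gives both $\mP_2(\mH)\subset\poly$ and the equality $\mP_2(\mH)={\rm Cone}_\lambda(\mP_{K_\lambda}(N_2))\cap\mkt_+$; the final ``Moreover'' statement $\mP_2(\mH)=\conv{\lambda,\mP_{K_\lambda}(N_2)}\cap\mkt_+$ then follows because a cone spanned at the apex $\lambda$ over a bounded polytope $P$ coincides, after intersecting with the pointed chamber $\mkt_+$ in which both $\lambda$ and the relevant part of $P$ sit, with the convex hull $\conv{\lambda,P}$ — here one checks that no ray from $\lambda$ through $P$ exits and re-enters $\mkt_+$, which is exactly where I expect the main technical obstacle: controlling the interaction of the apex cone with the chamber walls. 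The minuscule hypothesis (only $\lambda$ itself lies in $\mkt_+$ among the weights, in the non-bosonic scenarios) is what makes this last point clean, and is precisely the reason the bosonic cases must be excluded and handled separately.
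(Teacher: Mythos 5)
Your proposal has the logic of the main containment running in the wrong direction, and this is a fatal gap rather than a presentational issue. Equation (\ref{hw_cone_containing_poly}) says $\poly\subset\mC(Y_\lambda)\cap\mkt_+$: the local cone at the highest weight is an \emph{outer} approximation of the spectral polytope (cf.\ Fig.~\ref{cone_scheme}, where the cone is visibly larger than $\poly$ away from $\lambda$). If you succeeded in identifying $\mC(Y_\lambda)\cap\mkt_+$ with $\mP_2(\mH)$, you would have proved $\poly\subset\mP_2(\mH)$ --- the opposite of the asserted inclusion $\mP_2(\mH)\subset\poly$. An inner bound cannot be extracted from the intersection formula $\poly=\bigcap_\Psi\mC(Y_\Psi)$ by analysing a single local cone; one must actually exhibit states whose marginals realise every point of $\mP_2(\mH)$. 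The paper does exactly this: it considers the subspace $\mH_\lambda=\SpC{\ket{\lambda}}\oplus N_2$ and computes, for $\kpsi=a\klambda+\ket{\phi}$ with $\ket{\phi}\in N_2$, that $\bra{\Psi}E_\alpha\kpsi=\bra{\phi}E_\alpha\ket{\phi}$ (using orthogonality of the excitation spaces), whence $\mu_K(\kpsi)$ is the normalised combination of $|a|^2\lambda$ and $\mu_{K_\lambda}(\ket{\phi})$; varying $a$ and $\ket{\phi}$ sweeps out $\conv{\lambda,\mP_{K_\lambda}(N_2)}\cap\mkt_+$ inside $\mu_K(\mH)\cap\mkt_+=\poly$. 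This construction is entirely absent from your proposal.

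Your ``crux'' claim --- that for minuscule weight-multiplicity-free representations only the $N_2$-summand contributes to the local cone, i.e.\ ${\rm Cone}_\lambda(\mu_{K_\lambda}(N_\lambda))={\rm Cone}_\lambda(\mu_{K_\lambda}(N_2))$ --- is moreover false in general. The representation $\CC^2\otimes\CC^2\otimes\CC^3$ is minuscule and weight-multiplicity-free, yet the paper computes $\mP_2\subsetneq\poly$ there (two extra vertices appear in $\poly$); since $\poly\subset\mC(Y_\lambda)\cap\mkt_+$, your claimed cone identity would force $\poly\subset\mP_2$ and hence $\mP_2=\poly$, a contradiction. The higher excitation spaces genuinely enlarge the local cone. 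What the minuscule hypothesis actually buys --- and this is the one place where your proposal and the paper overlap in spirit --- is the ``Moreover'' statement: every ray $(1-t)\lambda+t\sum_\eta a_\eta\eta$ with $\eta\in\Lambda_2$ leaves the positive Weyl chamber at some $t\leq 1$, because $(\lambda,\gamma)/(\gamma,\gamma)\in\{0,1\}$ for simple $\gamma$ while $(\lambda-\eta,\gamma)/(\gamma,\gamma)=(\alpha+\beta,\gamma)/(\gamma,\gamma)$ is a nonnegative integer, hence $\geq 1$ whenever nonzero. You gesture at this ray analysis but do not carry it out, and you frame it as ruling out rays ``re-entering'' $\mkt_+$ rather than as the exit-time bound $t\leq1$ that is actually needed to collapse the cone onto the convex hull.
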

\begin{proof}
Recall that polytope $\mP_2(\mH)$ is constructed by considering rays from $\lambda$ to $\mP_{K_\lambda}(N_2)\subset\conv{\Lambda_2}\cap(\mkt_\lambda)_+$. Every such ray is of the form
\begin{equation}\label{ray}
(1-t)\lambda+t\sum_{\eta\in\Lambda_2}a_\eta\eta,\ \sum_{\eta\in\Lambda_2}a_\eta=1,\ a_\eta\geq0,\ t\in\RR_+.
\end{equation}
We will next show that every such ray intersects the positive Weyl chamber at $t\leq 1$. Denote the set of simple roots by $\Pi$. Element of the Cartan algebra $\xi$ is in the positive Weyl chamber if and only if $(\xi,\gamma)\geq 0$ for all $\gamma\in\Pi$. Therefore, ray given by formula (\ref{ray}) intersects the boundary of the positive Weyl chamber if and only if for some $\gamma\in\Pi$ we have
\[(1-t)(\lambda,\gamma)+t\sum_{\eta\in\Lambda_2}a_\eta(\eta,\gamma)=0.\]
Equivalently, 
\[\frac{(\lambda,\gamma)}{(\gamma,\gamma)}-t\sum_{\eta\in\Lambda_2}a_\eta\frac{(\lambda-\eta,\gamma)}{(\gamma,\gamma)}=0.\]
Recall that minuscule representations are such that all the weights are extreme. This means that $\frac{(\lambda,\gamma)}{(\gamma,\gamma)}\in\{0,1\}$ for all $\gamma\in\Pi$. Therefore, we can restrict the set of the considered simple roots to $\Pi-\Pi_0^\lambda$, where $\Pi_0^\lambda=\Pi\cap\Delta_0^\lambda$. By definition of $\Lambda_2$, we have $\lambda-\eta=\alpha+\beta$ for some $\alpha,\beta\in\Delta_+^\lambda$. Recall that for any two roots, we have 
\[\frac{(\alpha,\gamma)}{(\gamma,\gamma)}\in\ZZ.\]
Because $\alpha,\beta$ and $\gamma$ are positive roots, we have
\[\frac{(\alpha+\beta,\gamma)}{(\gamma,\gamma)}\in\ZZ_{\geq 0}.\]
In particular, if there exist $\gamma\in\Pi$ and $\alpha\in\Delta_+^\lambda$ such that $(\alpha,\gamma)\neq0$ (which is a necessary condition for the ray to have a nonzero intersection with the boundary of the positive Weyl chamber), we have $\frac{(\alpha+\beta,\gamma)}{(\gamma,\gamma)}\geq1$ for all $\beta\in\Delta_+^\lambda$. This in turn implies that
\[\sum_{\eta\in\Lambda_2}a_\eta\frac{(\lambda-\eta,\gamma)}{(\gamma,\gamma)}\geq1,\]
hence the ray intersects the boundary of the positive Weyl chamber at 
\[t=\left(1-\sum_{\eta\in\Lambda_2}a_\eta\frac{(\eta,\gamma)}{(\gamma,\gamma)}\right)^{-1}\leq1.\]
In the above formula, we used the fact that $(\lambda,\gamma)/(\gamma,\gamma)=1$. Therefore, the intersection of the $N_2$-cone with the positive Weyl chamber takes place within the convex hull of $\mP_{K_\lambda}(N_2)$ and the highest weight, i.e.
\[\mP_2(\mH)=\conv{\lambda,\mP_{K_\lambda}(N_2)}\cap\mkt_+.\]
Consider the space $\mH_\lambda:=\SpC{\ket{\lambda}}\oplus N_2$. We will show that $\mu_K(\mH_\lambda)= \mP_2(\mH)$. Write vectors from $\mH_\lambda$ as $\ket{\Psi}=a\ket{\lambda}+\ket{\phi}$, where $\ket{\phi}\in N_2$ and $a\in\CC$. Then, we have
\[\mu_K(\kpsi)=\frac{1}{|a|^2+\bk{\phi}{\phi}}\left(|a|^2\lambda+\mu_{K_\lambda}(\ket{\phi})\right).\]
To see this, consider the non-diagonal components of the momentum image of $\kpsi$, i.e. $(\mu(\kpsi),E_\alpha)$.
\begin{equation}\label{maps_relation}
\fl\bra{\Psi}E_\alpha\kpsi=|a|^2\bra{\lambda}E_\alpha\klambda+\overline a\blambda E_\alpha\ket{\phi}+a\bra{\phi} E_\alpha\klambda+\bra{\phi}E_\alpha\ket{\phi}=\bra{\phi}E_\alpha\ket{\phi},
\end{equation}
where we used the fact that $\bra{\lambda}E_\alpha\klambda\propto\bk{\lambda}{\lambda+\alpha}=0$ and $\blambda E_\alpha\ket{\phi}=0$, because $E_\alpha\ket{\phi}\in N_1\oplus N_2\oplus N_3$. Moreover, if $\alpha\in\Delta-\Delta_0$, then $E_\alpha\ket{\phi}\in N_1\oplus N_2$, hence in such a case we have $\bra{\phi}E_\alpha\ket{\phi}=0$. This means that the non-diagonal components of $\mu_K$ and $\mu_{K_\lambda}$ are the same. For the diagonal components, we obtain formula (\ref{maps_relation}) by writing $\ket{\phi}$ as a combination of weight vectors from $N_2$. Clearly, by choosing different values of parameter $a$ and all possible vectors $\ket{\phi}$ mapped to $\mkt_+$, we obtain all points from the convex hull of $\lambda$ and $\mu_{K_\lambda}(N_2)\cap\mkt_+$. 
\end{proof} 

As we show in theorem \ref{thm:spherical}, the polytope $\mP_2(\mH)$ is equal to the momentum polytope for a certain class of representations, called {\it spherical} representations. A representation is spherical, when every fibre of the momentum map $\mu^{-1}(\xi)$, $\xi\in\mkt$, is a single $K$-orbit, or, equivalently, when the Borel subgroup $B$ of $G$ has an open orbit in the projective space $\PP(\mH)$, \cite{Brion1987,Kac-nilp-orb}. A classification of spherical representations is given in \cite{Knop-MultFreeSpa}. In particular, all two-particle scenarios are spherical \cite{HKS13}.

\begin{theorem}\label{thm:spherical}
Assume $\mH$ is a spherical representation. Then the second osculating space of the coherent orbit fills the entire representation space, i.e. $O^2_{|\lambda\rangle}G|\lambda\rangle = \mH$. In other words, the second fundamental form is surjective onto the normal space. Consequently, the polytope from doubly excited states equal the momentum polytope, i.e. $\poly=\mP_2(\mH)$.
\end{theorem}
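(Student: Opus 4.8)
I would split the argument into the geometric statement $O^2_{\ket{\lambda}}G\ket{\lambda}=\mH$, which is the substance, and the polytope identity $\poly=\mP_2(\mH)$, which follows formally once the former is known.

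Assume first that $O^2=\mH$. Then part (iii) of the lemma on $j$-excitation spaces forces the normal space to collapse, $N_\lambda=N_2$, so $\mH=\CC\ket{\lambda}\oplus N_1\oplus N_2$ with $T_{\ket{\lambda}}G\ket{\lambda}=\CC\ket{\lambda}\oplus N_1$ and normal space $N_2$. Since $\mu(\ket{\lambda})=\lambda$ and $K_{[\lambda]}=K_\lambda$ equals the centraliser of $\mu(\ket{\lambda})$, Lemma \ref{lemma:Sja:locconelambda} applies at $\ket{\lambda}$ and identifies the local cone there as $\mC(Y_\lambda)={\rm Cone}_\lambda\big(\mP_{K_\lambda}(N_\lambda)\big)={\rm Cone}_\lambda\big(\mP_{K_\lambda}(N_2)\big)$; by Definition \ref{N2_poly} this says $\mC(Y_\lambda)\cap\mkt_+=\mP_2(\mH)$. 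Because $\lambda$ is a vertex of $\poly$ (all weights of $\mH$ lie in $\lambda+{\rm Cone}(\Delta_-)$), inclusion (\ref{hw_cone_containing_poly}) gives $\poly\subseteq\mC(Y_\lambda)\cap\mkt_+=\mP_2(\mH)$. For the opposite inclusion $\mP_2(\mH)\subseteq\poly$ I would invoke Theorem \ref{N2cone_contained} — the spherical representations for which we actually apply this are weight-multiplicity-free and minuscule, so it applies — or, more intrinsically, use that for a spherical $\mH$ the momentum polytope of $\PP(\mH)$ is already cut out inside $\mkt_+$ by the single local cone at $\lambda$. Either way $\poly=\mP_2(\mH)$.

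For the core statement I would turn $O^2=\mH$ into an assertion about the $\i H_\lambda$-grading of $\mH$. By the lemma computing $\Lambda_j=Supp(N_j)$, the space $N_j$ is spanned by weight vectors of weights $\lambda+\alpha_1+\dots+\alpha_j$ with $\alpha_i\in\Delta_-^\lambda$, and every $\alpha\in\Delta_-^\lambda$ satisfies $(\alpha,\lambda)<0$; hence each step $N_{j-1}\to N_j$ strictly decreases the $\i H_\lambda$-eigenvalue. When the nilpotent radical $\mk g^\lambda_+$ sits in a single $\i H_\lambda$-eigenvalue — i.e. the parabolic $G_{[\lambda]}$ has abelian nilradical, which one checks holds in the cases we need (Segre, Veronese-type and spinor embeddings) — the eigenvalue on $N_j$ is $(\lambda,\lambda)-jc$ for a fixed $c>0$, so $\mH$ decomposes into exactly $j_{max}+1$ distinct $\i H_\lambda$-eigenspaces and $O^2=\mH$ becomes equivalent to: $\i H_\lambda$ takes at most three values on $Supp(\mH)$. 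The problem is thus reduced to showing that a spherical representation has at most three osculating levels.

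To finish I would feed in the meaning of sphericity: $\mH$ is spherical $\Leftrightarrow$ $\bigoplus_n {\rm Sym}^n(\mH^*)$ is multiplicity-free as a $G$-module (Vinberg--Kimelfeld), while the ideal of the cone $\overline{G\ket{\lambda}}\subset\mH$ is generated by its quadratic part (Kostant). A fourth osculating level would produce a weight $\lambda-(\alpha+\beta+\gamma)$ with $\alpha,\beta,\gamma\in\Delta_+^\lambda$, and the plan is to convert such a weight into a $U$-covariant generator of $\CC[\mH]^U$ of degree at least $3$ and then, by tracking how products of these generators reproduce $G$-types in ${\rm Sym}^\bullet(\mH^*)$, into a repeated irreducible summand, contradicting multiplicity-freeness. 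As a safe alternative I would simply run through the classification of irreducible spherical representations (Kac; Benson--Ratcliff; Leahy) — $\CC^{N_1}\otimes\CC^{N_2}$, $S^2(\CC^N)$, $\Lambda^2(\CC^N)$, the defining representations, $G_2$ on $\CC^7$, the spin representations of $Spin(7),Spin(9),Spin(10)$, and so on — in each of which the number of $\i H_\lambda$-levels is read off by hand and is at most three. I expect the genuine obstacle to be exactly this last implication, ``spherical $\Rightarrow$ at most three osculating levels'', obtained without the classification, i.e. bounding the higher normal spaces $N_j$, $j\ge 3$, directly from the Borel-orbit condition; the reduction in the previous paragraph, the identification of the local cone, and the appeals to Sjamaar's results are comparatively routine.
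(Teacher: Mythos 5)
Your ``safe alternative'' is, in substance, the paper's own proof: the paper establishes $O^2_{\klambda}G\klambda=\mH$ (equivalently $N_\lambda=N_2$) precisely by ``a direct inspection of the list of spherical representations'', then obtains $\poly\subset\mP_2(\mH)$ from the identification of the local cone at $\lambda$ with ${\rm Cone}_\lambda(\mP_{K_\lambda}(N_2))$ together with inclusion (\ref{hw_cone_containing_poly}), and gets the reverse inclusion from $\mu_K(\SpC{\klambda}\oplus N_2)=\mP_2(\mH)$ as in the proof of Theorem \ref{N2cone_contained} --- exactly your formal part. Your preferred intrinsic route (multiplicity-freeness of ${\rm Sym}^\bullet(\mH^*)$ plus quadratic generation of the ideal of the cone over $G\klambda$, forcing at most three osculating levels) is only a plan, and you correctly flag it as the genuine obstacle; the paper does not attempt it, so nothing is lost by falling back on the classification, but you should be clear that the intrinsic implication remains unproved. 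Two cautions on the formal part. First, it is not true that every spherical representation in play is minuscule: $S^2(\CC^N)$ is spherical, yet its weights $e_i+e_j$ ($i\neq j$) are midpoints of the extreme weights $2e_i,2e_j$, so your blanket appeal to Theorem \ref{N2cone_contained} does not literally apply there; what survives is the numerical input actually used in that proof, namely $(\lambda,\gamma)/(\gamma,\gamma)\in\{0,1\}$ for simple roots $\gamma$, which still holds for $\lambda=2e_1$, and the paper itself treats the bosonic case separately in Section \ref{bosons} by observing that the full momentum image is already convex. Second, your ``more intrinsic'' alternative for the inclusion $\mP_2(\mH)\subset\poly$ --- that for spherical $\mH$ the momentum polytope is cut out inside $\mkt_+$ by the single local cone at $\lambda$ --- is precisely the assertion being proved and cannot be assumed.
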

\begin{proof}
By a direct inspection of the list of spherical representations, we note that for such representations $N_\lambda=N_2$. Then, ${\rm Cone}_\lambda(\mP_{K_\lambda}(N_2))$ is the local cone at $\lambda$, hence by equation (\ref{hw_cone_containing_poly}) $\poly\subset\mP_2(\mH)$. On the other hand, we have $\mP_2(\mH)\subset\poly$ by the same reasoning as in the proof of theorem \ref{N2cone_contained}. Namely, for subspace $\mH_\lambda=\SpC{\ket{\lambda}}\oplus N_2\subset\mH$, we have $\mu_K(\mH_\lambda)= \mP_2(\mH)$.
\end{proof}


\section{Applications and examples}\label{sec:computational_results}
In this section we compute all spectral polytopes, which are given by the polytopes from doubly entangled states. We also give a simple example of a system, where the polytope is larger than the one stemming from the doubly excited states.

\subsection{Distinguishable particles}
Consider first the natural representation of $U(N)$ on $\CC^N$, i.e. a one particle scenario. The spectral polytope is just a point - the highest weight, i.e.
\[\mP(\CC^N)=(1,0,\dots,0).\]
The stabiliser of the highest weight is $U(1)\times U(N-1)$, where $U(1)$ acts by scalars on $\ket{1}$ and $U(N-1)$ acts on the space $\SpC{\ket{2},\ket{3},\dots,\ket{N}}$. The positive root operators of the stabiliser are $\kb{i}{j},\ 2\leq i<j\leq L$. 

For a system of distinguishable particles, the stabiliser of the highest weight is equal to the product of one-particle stabilisers
\[K_\lambda=\bigtimes_{k=1}^LU(1)\times U(N_k-1).\]
The set of positive root operators from $K_\lambda$ is of the following form
\[\bone\otimes\dots\bone\otimes \kb{i_k}{j_k}\otimes\bone\otimes\dots\otimes\bone,\ 2\leq i_k<j_k\leq N_k, 1\leq k\leq L.\] 
For the single particle scenario, the only nontrivial excitation spaces are $N_0=\SpC{\ket{1}}$ and $N_1=\SpC{\ket{2},\dots,\ket{N}}$. It is easy to see, as the set of negative root operators for roots that do not belong to $\Delta_0$ is equal to $\{\kb{n}{1},\ 2\leq n\leq N\}$.  For finding the excitation spaces of tensor products of representations, we will use the fact that the number of excitations adds under the tensor product.
\begin{lemma}\cite{LM}
Let $\mH=\mH_1\otimes\mH_2$ be the tensor product of two representations. Denote by $N_j^{(m)}$ the $j$-excitation space of $\mH_m$, $m=1,2$. Then, we have
\[N_m^{(1)}\otimes N_n^{(2)}\subset N_{m+n},\ N_j=\bigoplus_{m+n=j}N_m^{(1)}\otimes N_n^{(2)}.\]
\end{lemma}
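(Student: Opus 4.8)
The plan is to work directly from the definition of the osculating spaces and exploit the product structure: for $K=K_1\times K_2$ one has $\mk{g}=\mk{g}_1\oplus\mk{g}_2$, the two summands act on the two tensor factors of $\mH=\mH_1\otimes\mH_2$ through mutually commuting operators ($X$ acting as $X\otimes\bone$ and $Y$ as $\bone\otimes Y$), and the highest weight vector factorises as $\klambda=\ket{\lambda_1}\otimes\ket{\lambda_2}$.

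First I would establish the osculating-space version of the statement, namely $O^j=\sum_{p+q=j}O^p_1\otimes O^q_2$, where $O^p_m$ denotes the $p$-th osculating space of $\mH_m$. For the inclusion ``$\subseteq$'', a spanning vector $X_1\cdots X_j\klambda$ with $X_i\in\mk{g}$ is expanded by writing each $X_i=X_i^{(1)}+X_i^{(2)}$ according to the splitting; each resulting monomial, after commuting all $\mk{g}_1$-operators to the left of all $\mk{g}_2$-operators, has the form $(Y_1\cdots Y_p\ket{\lambda_1})\otimes(Z_1\cdots Z_q\ket{\lambda_2})$ with $p+q=j$, $Y_i\in\mk{g}_1$, $Z_i\in\mk{g}_2$, hence lies in $O^p_1\otimes O^q_2$. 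For ``$\supseteq$'', conversely, such a vector equals $(Y_1\otimes\bone)\cdots(Y_p\otimes\bone)(\bone\otimes Z_1)\cdots(\bone\otimes Z_q)\klambda$, a length-$j$ sequence of elements of $\mk{g}$ applied to $\klambda$. (One could alternatively first reduce to negative root vectors via the earlier lemma and use $\Delta_-=\Delta_-^{(1)}\sqcup\Delta_-^{(2)}$, but the commutativity argument is self-contained.)

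Next I would pass to the orthogonal decompositions. By construction $O^p_m=\bigoplus_{a=0}^{p}N_a^{(m)}$ is an orthogonal direct sum, so the induced decomposition $\mH_1\otimes\mH_2=\bigoplus_{a,b}N_a^{(1)}\otimes N_b^{(2)}$ is orthogonal for the tensor inner product. Substituting into the formula for $O^j$ and using that osculating spaces are nested, the index condition ``there exist $p,q\geq 0$ with $p+q=j$, $a\leq p$, $b\leq q$'' collapses to $a+b\leq j$, giving
\[ O^j=\bigoplus_{a+b\leq j}N_a^{(1)}\otimes N_b^{(2)} \;. \]
Taking orthogonal complements yields $N_j=O^j\ominus O^{j-1}=\bigoplus_{a+b=j}N_a^{(1)}\otimes N_b^{(2)}$, and the inclusion $N_m^{(1)}\otimes N_n^{(2)}\subseteq N_{m+n}$ is the corresponding summand. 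Since $K_\lambda=K_{\lambda_1}\times K_{\lambda_2}$ in the product scenario, every space appearing is a $K_\lambda$-subrepresentation and the decomposition is one of $K_\lambda$-representations.

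The argument presents no serious obstacle; the only points requiring care are the combinatorial bookkeeping when collapsing the double sum using nestedness of the $O^p_m$, and verifying that the reordering of operators is legitimate — that is, that $\mk{g}_1$- and $\mk{g}_2$-operators genuinely commute on $\mH_1\otimes\mH_2$, which is immediate from $(X\otimes\bone)(\bone\otimes Y)=X\otimes Y=(\bone\otimes Y)(X\otimes\bone)$.
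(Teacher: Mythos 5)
Your argument is correct. Note, however, that the paper offers no proof of this lemma at all --- it is quoted from Landsberg--Manivel \cite{LM} as a known fact about the osculating filtration of a Segre product --- so there is no internal proof to compare against; what you have written is a self-contained verification of the cited result. The two steps that carry the weight are exactly the ones you flag: (a) the identity $O^j=\sum_{p+q=j}O^p_1\otimes O^q_2$, which rests on the fact that $\mk{g}_1$ and $\mk{g}_2$ act through commuting operators $X\otimes\bone$ and $\bone\otimes Y$ and that $\klambda=\ket{\lambda_1}\otimes\ket{\lambda_2}$ (this factorisation of the highest weight vector is the correct framing for the product group $K_1\times K_2$, which is the setting in which the paper applies the lemma); and (b) the collapse of the index set, where the condition ``$\exists\, p+q=j$ with $a\leq p$, $b\leq q$'' is equivalent to $a+b\leq j$, together with the orthogonality of $\bigoplus_{a,b}N_a^{(1)}\otimes N_b^{(2)}$ for the product inner product, so that passing from $O^j=\bigoplus_{a+b\leq j}N_a^{(1)}\otimes N_b^{(2)}$ to $N_j=O^j\ominus O^{j-1}=\bigoplus_{a+b=j}N_a^{(1)}\otimes N_b^{(2)}$ is immediate. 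One small point worth making explicit in the ``$\supseteq$'' direction: a general element of $O^p_1$ is spanned by words of length \emph{at most} $p$ (by the PBW reduction in the paper's osculating-space lemma), so the resulting word on $\klambda$ has length $p'+q'\leq j$ rather than exactly $j$; this is harmless because the osculating spaces are nested, but it is the place where nestedness is genuinely used rather than the exact-length spanning set you wrote down.
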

\noindent Therefore, for a system of $L$ particles we have
\[N_j=\bigoplus_{1\leq k<l\leq L}\left(\bigoplus_{m+n=j}N_m^{(k)}\otimes N_n^{(l)}\right).\]
In particular, for $N_2$ we only need to consider products of $1$-excitation spaces of the components, i.e.
\[\fl N_2=\bigoplus_{1\leq k<l\leq L}\CC^{(N_k-1)}\otimes \CC^{(N_l-1)}=\bigoplus_{1\leq k<l\leq L}\SpC{\ket{m}^{(k)}\otimes\ket{n}^{(l)},\ 2\leq m\leq N_k,\ 2\leq n\leq N_l}.\]

\paragraph*{Two particles} Consider the general two-particle scenario, i.e. $\mH_D=\CC^{M}\otimes \CC^{N},\ N>M$. Recall that the spectral polytope is in this case given by inequalities
\begin{equation}\label{2part_solution}
\fl \eta_1^{(1)}\geq \eta_2^{(1)}\geq\dots\geq\eta_M^{(1)}\geq0,\ \eta^{(2)}_i=\eta^{(1)}_i{\rm\ for\ }1\leq i\leq M,\ \eta^{(2)}_i=0{\rm\ for\ }M+1\leq i\leq N,
\end{equation}
plus the normalisation condition $\sum_{i=1}^M\eta_i^{(1)}=1$.
We will show that these inequalities can be reproduced just by considering the polytope from doubly excited states. The first inequalities are just the inequalities for the positive Weyl chamber of $U(M)$, so the difficulty lies in showing that  $ \eta^{(2)}_i=\eta^{(1)}_i{\rm\ for\ }1\leq i\leq M$ and $\eta^{(2)}_i=0{\rm\ for\ }M+1\leq i\leq N$. Firstly, note that in the two-particle scenario, we have $N_j=0$ for $j\geq 3$. Space $N_2$ is of the form
\[N_2=\CC^{(M-1)}\otimes \CC^{(N-1)}=\SpC{\ket{m}\otimes\ket{n},\ 2\leq m\leq M,\ 2\leq n\leq N}.\]
Therefore, we are coming back to the problem of describing the momentum polytope, but for smaller one-particle spaces. We will next proceed inductively and consider the sequence of $2$-excitation spaces nested in each other, i.e. $N_2\supset N_2'\supset N_2''\supset\dots\supset N_2^{\rm final}=\CC\otimes\CC^{(N-M+1)}$. We do it by picking the highest weight in each $2$-excitation space and considering its stabiliser contained in the stabiliser of the highest weight from the preceding space. For example, the highest weight in $N_2$ is $\ket{\lambda'}=\ket{2}\otimes\ket{2}$, $K_{\lambda'}=U(1)\times U(M-1)\times U(1)\times U(N-1)\subset K_\lambda$ and the positive root operators of $K_{\lambda'}$ are
\[\kb{i_1}{j_1}\otimes\bone,\ \bone\otimes\kb{i_2}{j_2},\ 3\leq i_1<j_1\leq M,\ 3\leq i_2<j_2\leq N.\]
Proceeding in this way, we end up with representation $N_2^{\rm final}$ of $U(1)\times U(1)\times U(N-M+1)$, whose momentum image is just the properly shifted momentum image for the one-particle case of size $N-M+1$.
\begin{eqnarray*}
\ket{\lambda'}=\ket{2}\otimes\ket{2},\ N_2'=\SpC{\ket{m}\otimes\ket{n},\ 3\leq m\leq N_k,\ 3\leq n\leq N_l}, \\
\vdots \\
\ket{\lambda^{\rm final}}=\ket{M-1}\otimes\ket{M-1},\ N_2^{\rm final}=\SpC{\ket{M}\otimes\ket{n},\ M\leq n\leq N}.
\end{eqnarray*}
The representation $N_2^{\rm final}$ yields a polytope, which is just a point - the momentum image if the highest weight, i.e.
\[\mP_2^{\rm final}=\mu(\ket{M}\otimes\ket{M})=((0,\dots,0,1),(0,\dots,0,0,1,0,\dots,0))\]
where $1$ is on the $M$-th coordinate in both vectors. According to the definition (\ref{N2_poly}), the spectral polytope of the penultimate $2$-excitation space is the cone at $\lambda^{\rm final}=((0,\dots,1,0),(0,\dots,0,1,0,0,\dots,0))$
spanned by $\mP_2^{\rm final}$. This is the line
\[((0,\dots,1-t_1,t_1),(0,\dots,0,1-t_1,t_1,0,\dots,0)),\ t\in\RR_+.\]
The intersection with the proper positive Weyl chamber imposes the conditions that $1-t_1\geq t_1$. Hence, the penultimate spectral polytope is described by $t_1$ ranging from $0$ to $1/2$. In this way, we recover condition $\eta^{(2)}_{M-1}=\eta^{(1)}_{M-1},\ \eta^{(2)}_{M}=\eta^{(1)}_{M},\ \eta^{(2)}_i=0{\rm\ for\ }M+1\leq i\leq N$ and $\eta^{(1)}_{M-1}\geq\eta^{(1)}_{M}$. As a next step, we consider the cone at weight $\mu(\ket{M-2}\otimes\ket{M-2})$ spanned by the penultimate spectral polytope. This is the set described by
\begin{eqnarray*}
\fl \eta^{(1)}_{M-2}=1-t_2,\ \eta^{(1)}_{M-1}=t_2(1-t_1),\ \eta^{(1)}_{M}=t_2t_1,\ \eta^{(1)}_{i}=0{\rm\ for\ }1\leq i\leq M-3, \\
\eta^{(2)}=(\eta^{(1)},0,\dots,0),\ 0\leq t_1\leq\frac{1}{2},\ t_2\in\RR_+.
\end{eqnarray*}
Again, the intersection with the positive Weyl chamber gives $0\leq t_2\leq\frac{1}{(2-t_1)}$. At the end of this procedure, we get a polytope parametrised by $t_1,t_2,\dots,t_{M-1}$, $0\leq t_i\leq\frac{1}{(2-t_{i-1})}$, $0\leq t_1\leq\frac{1}{2}$, whose points are
\begin{eqnarray*}
\fl \eta^{(1)}_{1}=1-t_{M-1},\ \eta^{(1)}_{2}=t_{M-1}(1-t_{M-2}),\ \eta^{(1)}_{3}=t_{M-1}t_{M-2}(1-t_{M-3}),\dots, \\ \fl \eta^{(1)}_{M}=\prod_{i=1}^{M-1}t_i,\ \eta^{(2)}=(\eta^{(1)},0,\dots,0).
\end{eqnarray*}
Note that the normalisation $\sum_{i=1}^M\eta_i^{(1)}=1$ is satisfied automatically. Clearly, inequalities $(\ref{2part_solution})$ are satisfied by the above points. They can be saturated by choosing $t_i=\frac{1}{(2-t_{i-1})}$ for some $i$. Therefore, the polytope given by inequalities $(\ref{2part_solution})$ is identical to the polytope defined via the above parametrisation.

\paragraph*{$L$ qubits} The spectral polytope for a system of $L$ qubits is given by the inequalities
\begin{equation}\label{qubits_solution}
\eta^{(k)}_{2}\leq\sum_{l\neq k}\eta^{(l)}_{2}{\rm\ and\ }\eta^{(k)}_{1}\geq\eta^{(k)}_{2}{\rm\ for\ }1\leq k\leq L
\end{equation}
with normalisation $\eta^{(k)}_{1}+\eta^{(k)}_{2}=1$, cf. \cite{HSS03}. Let us next show that these inequalities can be obtained from the spectral polytope stemming from doubly excited states. System of $L$-qubits is the only scenario, where the highest weight is regular, i.e. lies in the interior of the positive Weyl chamber. This means that the stabiliser of the highest weight is equal to the maximal torus, $K_\lambda=T\subset K$. The momentum image stemming from an action of a torus is easy to compute, as it is just the convex hull of weights (see lemma \ref{lemma:torus_poly}). We will next show that inequalities $\eta^{(k)}_{2}\leq\sum_{l\neq k}\eta^{(l)}_{2}$ describe the cone at the highest weight. Denote the weights in $\Lambda_2$ by $\lambda_{i,j}$. They are of the form (we omit the tensor product symbols)
\begin{eqnarray*}
\fl \lambda_{i,j}:=\mu_T(\ket{1}\dots\ket{1}\ket{2}^{(i)}\ket{1}\dots\ket{1}\ket{2}^{(j)}\ket{1}\dots\ket{1})= \\ =((1,0),\dots,(1,0),(0,1),(1,0),\dots,(1,0),(0,1),(1,0),\dots,(1,0)).
\end{eqnarray*}
According to lemma \ref{lemma:torus_poly}, the polytope $\mP_{K_\lambda}(N_2)$ is the convex hull of $\Lambda_2$. By a straightforward calculation, one can check that for a convex combination of weights $\sum_{i<j}a_{i,j}\lambda_{i,j}$, $\sum_{i<j}a_{i,j}=1$, $a_{i,j}\geq 0$ we have
\[\eta_2^{(1)}=\sum_{j=2}^{L}a_{1,j},\ \eta_2^{(k)}=\sum_{j=1}^{k-1}a_{j,k}+\sum_{j=k+1}^{L}a_{k,j},\ 2\leq k\leq L-1,\ \eta_2^{(L)}=\sum_{j=1}^{L-1}a_{j,L}.\]
Any point from the $N_2$-cone is of the form
\[(1-t)\lambda+t\sum_{i<j}a_{i,j}\lambda_{i,j}, t\in\RR_+.\]
Because $\lambda=((1,0),\dots,(1,0))$, the $\eta_2^{(l)}$-coordinates of the cone are just the $\eta_2^{(l)}$-coordinates of the convex hull of $\Lambda_2$ multiplied by $t$. Consider the expression $\left(\sum_{l\neq k}t\eta^{(l)}_{2}\right)-t\eta_2^{(k)}$. Since all $a_{i,j}$s add up to one, we have $\sum_{l=1}^L\eta_2^{(l)}=2$. Hence, 
\[\left(\sum_{l\neq k}t\eta^{(l)}_{2}\right)-t\eta_2^{(k)}=2t(1-\eta_2^{(k)}).\]
The above expression is always non-negative, as $\eta_2^{(k)}$ is a sum of some subset of $a_{i,j}$s. Moreover, the above expression is equal to zero if and only if the only nonzero coefficients are $a_{j,k},\ j<k$ and $a_{k,j},\ j>k$ (for the considered $k$). Such convex combinations of weights describe the external faces of the $N_2$-cone. Hence, we have shown that points from intersection of the $N_2$-cone and $\mk t_+$ can saturate the inequalities for the spectral polytope, which means that the polytopes are identical. See figure \ref{3qubits} to see how the intersection of cones gives the spectral polytope for $3$ qubits.
\begin{figure}[h]
\centering
\includegraphics[width=0.7\textwidth]{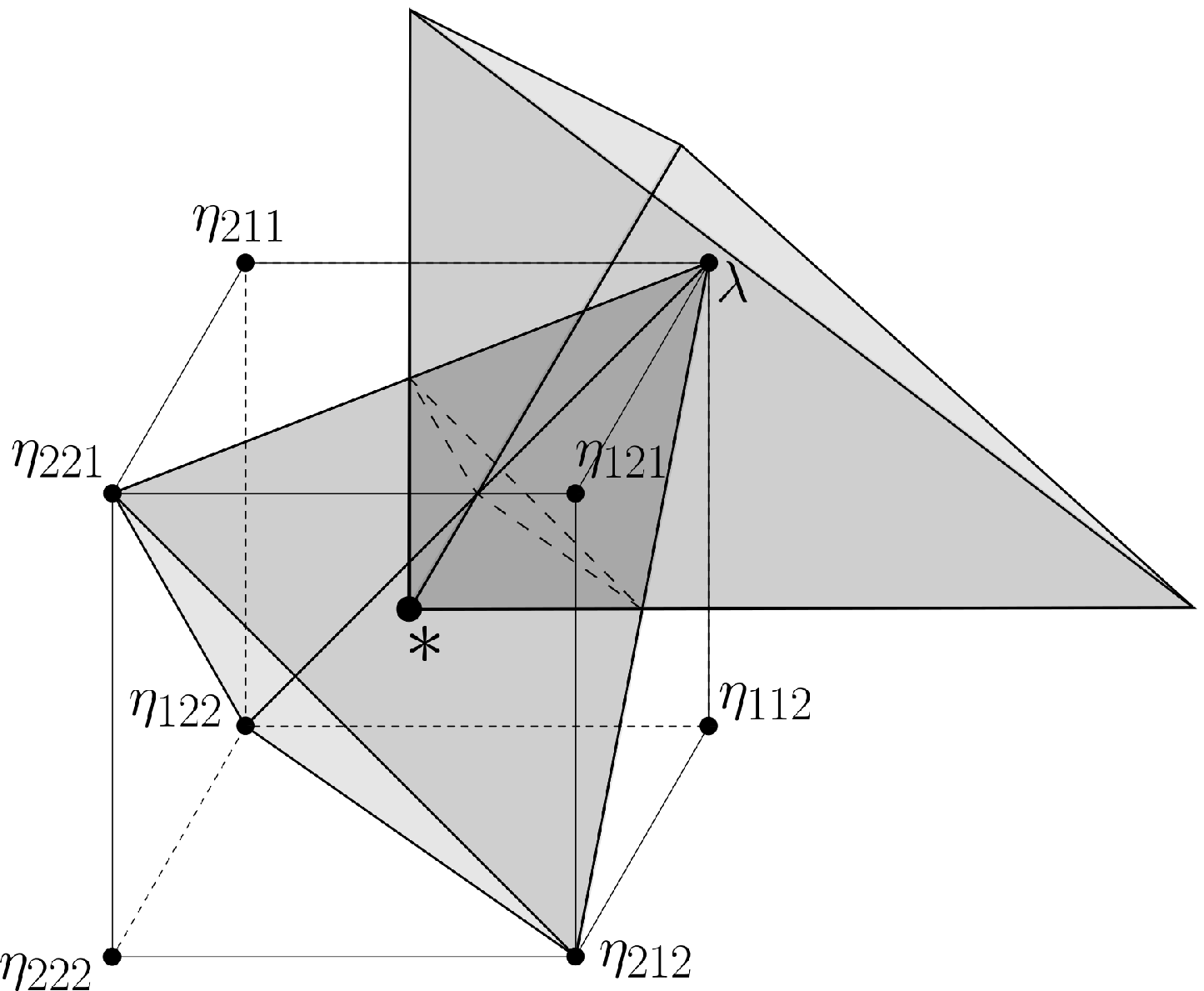}
\caption{The spectral polytope for three qubits as an intersection of two cones. The cone at vertex $*$ is the positive Weyl chamber, i.e. the sector, where the eigenvalues of the one-qubit reduced density matrices are ordered decreasingly. The cone at $\lambda$ is the local cone at the highest weight. The dashed line denotes the intersection of cones' boundaries.}
\label{3qubits}
\end{figure}

\paragraph*{System of size $2\times2\times 3$} 
As an example, where the polytope from doubly excited states is smaller than the whole spectral polytope, we consider the $2\times2\times 3$ system. This example also illustrates how to compute the the momentum image of a direct sum of representations using Theorem \ref{direct_sum}. The two-excitation space is 
\[N_2=\left(\CC\otimes\CC\right)\oplus\left(\CC\otimes\CC^2\right)\oplus\left(\CC\otimes\CC^2\right),\]
where the components are the irreducible components of the representation of $K_\lambda=(U(1)\times U(1))\times (U(1)\times U(1))\times (U(1)\times U(2))$ on $N_2$. They are spanned by weight vectors corresponding to weights
\begin{eqnarray*}
\eta_{221}:=\mu(\ket{2}\ket{2}\ket{1})=((0,1),(0,1),(1,0,0)), \\
\eta_{122}=((1,0),(0,1),(0,1,0)),\ \eta_{123}=((1,0),(0,1),(0,0,1)), \\
\eta_{212}=((0,1),(1,0),(0,1,0)),\ \eta_{213}=((0,1),(1,0),(0,0,1))
\end{eqnarray*}
respectively. The join of $\mu\left(\CC\otimes\CC\right)$ with $\mu\left(\left(\CC\otimes\CC^2\right)\oplus\left(\CC\otimes\CC^2\right)\right)$ intersected with $\mkt_\lambda$ is easy to compute, because $\mu\left(\CC\otimes\CC\right)$ is a single point, namely $\eta_{221}$. Hence, the result is
\begin{eqnarray*}
 \fl {\rm Join}\left(\mu\left(\CC\otimes\CC\right),\mu\left(\left(\CC\otimes\CC^2\right)\oplus\left(\CC\otimes\CC^2\right)\right)\right)\cap(\mkt_\lambda)_+= \\ =\conv{\eta_{221},\mu\left(\left(\CC\otimes\CC^2\right)\oplus\left(\CC\otimes\CC^2\right)\right)\cap(\mkt_\lambda)_+}.
 \end{eqnarray*}
 For the image of $\left(\CC\otimes\CC^2\right)\oplus\left(\CC\otimes\CC^2\right)$, we have the following result.
 \begin{lemma}
 The $K_\lambda$-momentum polytope of $\left(\CC\otimes\CC^2\right)\oplus\left(\CC\otimes\CC^2\right)$ is given by
 \[ \fl {\rm Join}\left(\mu\left(\CC\otimes\CC^2\right),\mu\left(\CC\otimes\CC^2\right)\right)\cap(\mkt_\lambda)_+=\conv{\eta_{122},\eta_{212},\frac{1}{2}(\eta_{212}+\eta_{123})}.\]
 \end{lemma}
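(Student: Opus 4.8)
The plan is to compute the two momentum images $\mu(\CC\otimes\CC^2)$ explicitly, form their join via Theorem~\ref{direct_sum}, and then pin down the intersection with $(\mkt_\lambda)_+$ by reducing everything to the possible spectra of a convex combination of two rank-one projectors on $\CC^2$. First I would unwind the identifications. In the first summand the two weight vectors have weights $\eta_{122}$ and $\eta_{123}$, which agree on all tensor factors except the third; there the copy of $\CC^2$ is the defining representation of the $U(2)$-block of $K_\lambda$ on $\SpC{\ket{2},\ket{3}}$, while every remaining factor of $K_\lambda$ acts by a character. Since $\mu$ is $K_\lambda$-equivariant and involves a normalisation, the abelian factors contribute the fixed values $(1,0)$ and $(0,1)$, while the $U(2)$-block of $\mu$ on this summand equals the momentum of a unit vector of $\CC^2$, i.e.\ a rank-one orthogonal projector of trace one. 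Thus $\mu(\CC\otimes\CC^2)=\{((1,0),(0,1),(0,P)):P=P^\dagger=P^2,\ \tr P=1\}$ for the first summand, and symmetrically $\mu(\CC\otimes\CC^2)=\{((0,1),(1,0),(0,Q)):Q=Q^\dagger=Q^2,\ \tr Q=1\}$ for the second, whose weight vectors have weights $\eta_{212}$ and $\eta_{213}$.

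By Theorem~\ref{direct_sum}, a point of the join is $tX_1+(1-t)X_2$ with $t\in[0,1]$ and $X_1,X_2$ as above, which equals $((t,1-t),(1-t,t),(0,M))$ with $M=tP_1+(1-t)P_2$ a positive, trace-one operator on $\SpC{\ket{2},\ket{3}}$. The join is stable under $K_\lambda$-conjugation, and the Weyl group of $K_\lambda$ acts only by the transposition of the coordinates labelled $2,3$, so intersecting with $(\mkt_\lambda)_+$ amounts to diagonalising $M$ and ordering its eigenvalues, i.e.\ replacing $(0,M)$ by $(0,\lambda_1,1-\lambda_1)$ with $\lambda_1\geq\frac{1}{2}$. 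The key elementary step is to determine which pairs $(t,\lambda_1)$ actually occur: writing $P_1=\kb{u}{u}$, $P_2=\kb{v}{v}$ and $c=|\bk{u}{v}|^2\in[0,1]$, one computes $\tr M^2=t^2+(1-t)^2+2t(1-t)c$, hence $\lambda_1(1-\lambda_1)=t(1-t)(1-c)$; letting $c$ range over $[0,1]$ for fixed $t$ shows that exactly the values $\lambda_1\in[\max(t,1-t),1]$ are realised. Consequently the $K_\lambda$-momentum polytope of $(\CC\otimes\CC^2)\oplus(\CC\otimes\CC^2)$ is the image of the planar triangle $\{(t,\lambda_1):0\leq t\leq 1,\ \max(t,1-t)\leq\lambda_1\leq 1\}$ under the affine, injective map $(t,\lambda_1)\mapsto((t,1-t),(1-t,t),(0,\lambda_1,1-\lambda_1))$.

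Finally I would read off the vertices: the corners $(1,1)$, $(0,1)$ and $(\frac{1}{2},\frac{1}{2})$ of that triangle are sent to $\eta_{122}$, $\eta_{212}$ and $((\frac{1}{2},\frac{1}{2}),(\frac{1}{2},\frac{1}{2}),(0,\frac{1}{2},\frac{1}{2}))=\frac{1}{2}(\eta_{212}+\eta_{123})$ respectively, so the image is $\conv{\eta_{122},\eta_{212},\frac{1}{2}(\eta_{212}+\eta_{123})}$, as claimed. The only real obstacle is the middle step: one must check carefully that intersecting the $K_\lambda$-stable join with $(\mkt_\lambda)_+$ is genuinely the operation of diagonalising and ordering the $U(2)$-block, and the sharp range $\lambda_1\geq\max(t,1-t)$ --- equivalently, the fact that the third vertex is the single point $\frac{1}{2}(\eta_{212}+\eta_{123})$ rather than an entire edge --- is precisely what the projector computation pins down.
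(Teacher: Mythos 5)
Your proof is correct, and it reaches the stated triangle by a genuinely different route from the paper's. The paper models the two momentum images explicitly as two-spheres (the $U(2)$-coadjoint orbits of rank-one projectors, centred at $z_2=\pm1$ with radius $\sqrt{2}/2$) and computes ${\rm Join}\cap\mkt_\lambda$ head-on: it takes the line through a point $p_1=(u,-1,v,w)$ of one sphere and a point $p_2=(z_1,z_2,0,0)$ of the Cartan subalgebra, demands that this line meet the second sphere, and checks that the resulting family of curves $z_2=z_1(z_1-u)/(r^2-z_1u)$ sweeps out the shaded region of the figure. You instead exploit the $\mathrm{Ad}(K_\lambda)$-invariance of the join --- which holds because each summand is a $K_\lambda$-subrepresentation, so each $\mu\left(\CC\otimes\CC^2\right)$ is conjugation-stable and hence so is the set of their convex combinations, and every adjoint orbit meets $(\mkt_\lambda)_+$ in exactly one point --- to replace the intersection with $(\mkt_\lambda)_+$ by diagonalising and ordering the $U(2)$-block; you then determine the attainable spectra of $M=tP_1+(1-t)P_2$ from the identity $\lambda_1(1-\lambda_1)=t(1-t)(1-c)$ with $c=|\bk{u}{v}|^2$. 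This substitutes a one-line trace computation for the paper's ``the family of curves fills the shaded region'' step (which is left to the reader there) and yields the sharp bound $\lambda_1\ge\max(t,1-t)$ cleanly, so that the image is the affine image of an explicit planar triangle. What the paper's version buys in exchange is a concrete description of which joining segments actually pass through the Cartan subalgebra, which is the template it suggests for handling joins elsewhere. Both computations land on the same triangle, with third vertex $\frac{1}{2}(\eta_{212}+\eta_{123})=\frac{1}{2}(\eta_{122}+\eta_{213})$.
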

 \begin{proof}
 The intersection of the momentum image of a representation with the torus is contained in the convex hull of weights 
 \[\mu_{K_\lambda}\left(\left(\CC\otimes\CC^2\right) \oplus\left(\CC\otimes\CC^2\right)\right) \cap\mk{t} \subset\conv{\eta_{122},\eta_{123},\eta_{212},\eta_{213}},\]
 The above convex hull is two-dimensional (it is depicted on Fig.\ref{223_cartan}). 
 \begin{figure}[h]
\centering
\includegraphics[width=0.35\textwidth]{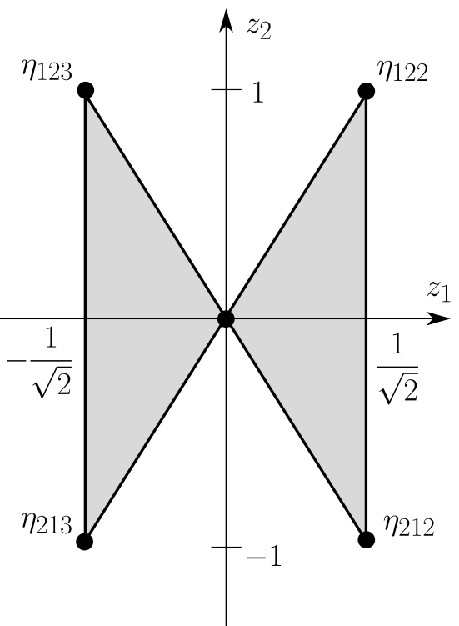}
\caption{The join of momentum images for $\left(\CC\otimes\CC^2\right)\oplus \left(\CC\otimes\CC^2\right)$.}
\label{223_cartan}
\end{figure}
 The positive Weyl chamber is the region $z_1\geq 0$. The whole considered algebra is four-dimensional, where the two additional dimensions come from the non-diagonal components of $\mku(2)$. The momentum images of the components are the properly shifted momentum images for the $U(2)$ action on $\CC^2$, which are known to be two-dimensional spheres. The spheres are centered at $z_2=\pm 1,\ z_1=0$ and have radii
 \[r=\frac{1}{2}|\eta_{122}-\eta_{123}|=\frac{1}{2}|\eta_{212}-\eta_{213}|=\frac{\sqrt{2}}{2}.\]
The points on the spheres are of the form
\[(u,\pm1,v,w),\  u^2+v^2+w^2=r^2,\]
where the first two coordinates are $z_1,\ z_2$. In order to compute the join of the spheres, we consider the line that starts at point $p_1$ on the sphere centered at $z_2=-1$ and contains a point $p_2$ from the Cartan algebra, i.e.
 \[\fl l:\ (1-t)p_2+t p_2,\ t\in\RR_+,\ p_1=(u,-1,v,w),\ u^2+v^2+w^2=r^2,\ p_2=(z_1,z_2,0,0).\]
 Line $l$ intersects the hyperplane $z_2=1$ at $t$ such that $(-1)(1-t)+z_2t=1$, i.e. at $t=\frac{2}{1+z_2}$. For such a value of parameter $t$, the intersection point reads
 \[p=\frac{1}{1+z_2}\left((z_2-1)u+2z_1,1,v,w\right).\]
 The condition for $p$ to belong to the second sphere reads 
 \[\frac{1}{(1+z_2)^2}\left(((z_2-1)u+2z_1)^2+(v^2+w^2)(z_2-1)^2\right)=r^2,\]
 which yields the following equation for $z_2$ as a function of $z_1$ and $u$.
 \[z_2=\frac{z_1(z_1-u)}{r^2-z_1u},\ u\in]-r,r[,\ z_2=\pm\frac{1}{r}z_1{\rm\ for}\ u=\pm r.\]
 It is straightforward to check that the above family of curves fills the shaded region on Fig.\ref{223_cartan}.
 \end{proof}
 \noindent In order to find the polytope generated from doubly excited states, one has to compute the intersection
 \[\mc{P}_2=\conv{\eta_{122},\eta_{212},\frac{1}{2}(\eta_{212}+\eta_{123}),\eta_{221},\lambda}\cap\mkt_+.\]
 The result reads 
\begin{eqnarray*}
\fl  \mc{P}_2={\rm Conv}{\Bigg(}\lambda,\left(\left(\frac{1}{2},\frac{1}{2}\right),\left(\frac{1}{2},\frac{1}{2}\right),\left(1,0,0\right)\right),\left(\left(\frac{1}{2},\frac{1}{2}\right),\left(1,0\right),\left(\frac{1}{2},\frac{1}{2},0\right)\right), \\ \left(\left(1,0\right),\left(\frac{1}{2},\frac{1}{2}\right),\left(\frac{1}{2},\frac{1}{2},0\right)\right),\left(\left(\frac{1}{2},\frac{1}{2}\right),\left(\frac{1}{2},\frac{1}{2}\right),\left(\frac{1}{3},\frac{1}{3},\frac{1}{3}\right)\right), \\ \left(\left(\frac{2}{3},\frac{1}{3}\right),\left(\frac{2}{3},\frac{1}{3}\right),\left(\frac{1}{3},\frac{1}{3},\frac{1}{3}\right)\right){\Bigg)}.
\end{eqnarray*}
Comparing the vertices of polytope $\mc{P}_2$ and the vertices of the whole momentum polytope computed in \cite{VW}, one can see that 
\begin{eqnarray*}\fl\poly=\conv{\mc{P}_2,\left(\left(\frac{1}{2},\frac{1}{2}\right),\left(\frac{3}{4},\frac{1}{4}\right),\left(\frac{1}{2},\frac{1}{4},\frac{1}{4}\right)\right),\left(\left(\frac{3}{4},\frac{1}{4}\right),\left(\frac{1}{2},\frac{1}{2}\right),\left(\frac{1}{2},\frac{1}{4},\frac{1}{4}\right)\right)}.
\end{eqnarray*}

\subsection{Fermions}\label{fermions}
Consider $L$ fermions on $N$ modes. The stabiliser of the highest weight is $K_\lambda=U(L)\times U(N-L)$, where $U(L)$ acts on the first $L$ basis vectors and $U(N-L)$ acts on the last $N-L$ basis vectors of $\CC^N$. The positive root operators of $\mkk^\CC_\lambda$ are $E_{i,j}$, where $1\leq i<j\leq L$ or $L+1\leq i <j\leq N$. The $j$-excitation spaces have been computed for this scenario in \cite{LM} and they are of the form
\begin{equation}\label{Nj_fermions}
N_j=\Lambda^{L-j}(\CC^{L})\otimes\Lambda^{j}(\CC^{N-L}),\ j\in\{0,1,\dots L\}.
\end{equation}

\paragraph*{Two fermions}
The case of two fermions is described by a spherical representation, which means that the polytope from doubly excited states is equal to the whole polytope. For two fermions in $N$ modes, we have
\[N_2=\CC\otimes\Lambda^2(\CC^{N-2})=\SpC{\ket{i}\wedge\ket{j}:\ 3\leq i<j<N}.\]
Similarly, as in the case of two distinguishable particles, we will iterate the procedure to find the momentum polytope of $N_2$, which is the convex hull of $\ket{\lambda'}=\ket{3}\wedge\ket{4}$ and the momentum polytope of
\[N_2'=\CC\otimes\Lambda^2(\CC^{N-4})=\SpC{\ket{i}\wedge\ket{j}:\ 5\leq i<j<N}.\]
In the last iteration, for $N$ even we end up with 
\[\ket{\lambda^{\rm final}}=\ket{N-3}\wedge\ket{N-2},\ N_2^{\rm final}=\CC=\SpC{\ket{N-1}\wedge\ket{N}},\]
end for $N$ odd we have
\begin{eqnarray*}
\ket{\lambda^{\rm final}}=\ket{N-4}\wedge\ket{N-3},\\ N_2^{\rm final}=\CC^3=\SpC{\ket{N-2}\wedge\ket{N-1},\ket{N-2}\wedge\ket{N},\ket{N-1}\wedge\ket{N}}.
\end{eqnarray*}
We compute the polytopes similarly to the case of two distinguishable particles. In the case of even $N$, the polytope from the last step is the line segment
\[(1-t_1)(0,\dots,0,1,1,0,0)+t_1(0,\dots,0,0,0,1,1),\ 0\leq t_1\leq \frac{1}{2}.\]
The penultimate polytope is parametrised by
\[\fl(0,\dots,0,1-t_2,1-t_2,t_2(1-t_1),t_2(1-t_1),t_2t_1,t_2t_1),\ 0\leq t_2\leq \frac{1}{2-t_1},\ 0\leq t_1\leq \frac{1}{2}.\]
Finally, for $N=2K$, we get
\begin{eqnarray*}
\fl \eta_{2k+1}=\eta_{2k+2},\ \eta_{2k+1}=(1-t_{K-k})\prod_{l=K-k+1}^Kt_l\ {\rm for\ }k=0,1,\dots,K-2,\\\fl \eta_{2K-1}=\eta_{2K}=\prod_{l=1}^Kt_k,\ 0\leq t_{i+1}\leq \frac{1}{2-t_i},\ 0\leq t_1\leq \frac{1}{2}.
\end{eqnarray*}
For $N$ odd we put $K=(N-1)/2$ and $\eta_N=0$. The obtained parametrisation is essentially a reformulation of the well-known inequalities for the polytope, which are $\eta_{2k+1}=\eta_{2k+2}$, $k=0,1,\dots,K-1$ and $\eta_{1}\geq\eta_{2}\geq\dots\eta_N$.

\paragraph*{Three fermions}	
The first fermionic scenario, where the spectral polytope is not given only by equalities (except for the inequalities for the positive Weyl chamber), are three fermions on six levels. In this subsection, we will also consider the scenario with seven levels. These scenarios complete the list of all fixed-particle scenarios, where the spectral polytope is given by the polytope from doubly excited states. 
Let us begin with $\mH_F=\Lambda^3(\CC^6)$. The highest weight is $\ket{\lambda}=\ket{1}\wedge\ket{2}\wedge\ket{3}$. According to equation (\ref{Nj_fermions}), the space of doubly excited states in this case is of the form 
\[\fl N_2=\CC^3\otimes\Lambda^2(\CC^3)\cong \CC^3\otimes\CC^3=\SpC{\ket{a}\wedge\ket{i}\wedge\ket{j}:\ a\in\{1,2,3\},\ 4\leq i<j\leq 6}.\]
Let us denote by $\eta_{ijk}$ the weight vector $\ket{i}\wedge\ket{j}\wedge\ket{k}$. The polytope corresponding to the action of $K_\lambda$ on $N_2$ is the spectral polytope for two distinguishable particles on $3$ levels, which is 
\[\mu_{K_\lambda}(N_2)\cap(\mkt_\lambda)_+=\conv{\eta_{145},\eta_{246},\eta_{356}}\cap(\mkt_\lambda)_+.\]
The intersection with $\mkt_+$ of the cone at $\lambda$ spanned by $\mu_{K_\lambda}(N_2)\cap(\mkt_\lambda)_+$ is of the following form
\begin{eqnarray*}
\mP_2=\conv{\eta_{123},\eta_{145},\eta_{246},\eta_{356}}\cap\mkt_+= \\ \fl = \conv{(1,1,1,0,0,0),\left(1,\frac{1}{2},\frac{1}{2},\frac{1}{2},\frac{1}{2},0\right),\left(\frac{3}{4},\frac{3}{4},\frac{1}{2},\frac{1}{2},\frac{1}{4},\frac{1}{4}\right),\left(\frac{1}{2},\frac{1}{2},\frac{1}{2},\frac{1}{2},\frac{1}{2},\frac{1}{2}\right)}.
\end{eqnarray*}
In can be checked by a direct computation that the well-known inequalities for the polytope that can be found in \cite{BD,Klyachko04,Ruskai07} yield the same set of vertices. It is quite surprising that even though the $N_3$ space is nontrivial (it is one-dimensional), the whole polytope can be obtained only by considering the doubly excited states. By a similar strategy, we next show that this is also the case for $\Lambda^3(\CC^7)$. The doubly excited states are in this scenario of the following form.
\[N_2=\CC^3\otimes\Lambda^2(\CC^4)=\SpC{\ket{a}\wedge\ket{i}\wedge\ket{j}:\ a\in\{1,2,3\},\ 4\leq i<j\leq 7}.\]
In order to find the polytope $\mu_{K_\lambda}(N_2)\cap(\mkt_\lambda)_+$, we repeat the procedure for the highest weight of $K_\lambda$ in $N_2$, which is $\ket{\lambda'}=\ket{1}\wedge\ket{4}\wedge\ket{5}$. Then, space $N_2'$ is the sum of tensor products of proper $N_j$-spaces of the components
\begin{eqnarray*}
N_2'=\left(N_1(\CC^3)\otimes N_1(\Lambda^2(\CC^4))\right)\oplus\left(N_0(\CC^3)\otimes N_2(\Lambda^2(\CC^4))\right)= \\ =\left(\CC^2\otimes\CC^2\otimes\CC^2\right)\oplus\left(\CC\otimes\CC\otimes\CC\right).
\end{eqnarray*}
The first component is just the $3$-qubit polytope and the second component is the complex span of $\ket{1}\wedge\ket{6}\wedge\ket{7}$. Hence, in order to compute $\mu_{K_\lambda'}(N_2)\cap(\mkt_\lambda')_+$, which, by Theorem \ref{direct_sum}, is the join of the momentum images of the components, it is enough to take the convex hull of the $3$-qubit polytope with $\eta_{167}$. This is because the momentum image of the second component is just a single point. The $3$-qubit polytope is spanned by the vertices
\[\eta_{246},\frac{1}{2}(\eta_{246}+\eta_{257}),\frac{1}{2}(\eta_{246}+\eta_{347}),\frac{1}{2}(\eta_{246}+\eta_{356}),\frac{1}{2}(\eta_{246}+\eta_{357}).\] 
The intersection of the convex hull of the above vertices and weights $\eta_{167}$, $\eta_{145}$, $\lambda=\eta_{123}$ with the positive Weyl chamber has the following vertices:
\begin{eqnarray*}
\lambda, \left(\frac{3}{7},\frac{3}{7},\dots,\frac{3}{7},\frac{3}{7}\right),\left(\frac{2}{3},\frac{2}{3},\frac{1}{3},\dots,\frac{1}{3}\right),\left(\frac{5}{7},\frac{5}{7},\frac{3}{7},\frac{3}{7},\frac{3}{7},\frac{1}{7},\frac{1}{7}\right), \\ \left(\frac{1}{2},\frac{1}{2},\frac{1}{2},\frac{1}{2},\frac{1}{2},\frac{1}{4},\frac{1}{4}\right),
\left(\frac{3}{5},\frac{3}{5},\frac{3}{5},\frac{3}{5},\frac{1}{5},\frac{1}{5},\frac{1}{5}\right)
\end{eqnarray*}
plus vertices of the form $(v,0)$, where $v$ is a vertex of $\mP(\Lambda^3(\CC^6))$, and vertices of the form $(1,v)$, where $v$ is a vertex of $\mP(\Lambda^2(\CC^6))$. It is a matter of a straightforward calculation to show that the well-known inequalities for the spectral polytope of $\Lambda^3(\CC^7)$ \cite{Klyachko04,BD} give the same set of vertices.

The two-step procedure that we used to compute the spectral polytope for $\Lambda^3(\CC^7)$ can be generalised to $\Lambda^3(\CC^N)$ in a straightforward way. In the first step, we have
\[N_2=\CC^3\otimes\Lambda^2(\CC^{N-3}),\]
and in the second step 
\[N_2'=\left(\CC^2\otimes\CC^2\otimes\CC^{N-5}\right)\oplus\left(\CC\otimes\CC\otimes\Lambda^2\left(\CC^{N-6}\right)\right).\]
spectral polytopes of the components are known. The polytope of the second component is a properly shifted polytope of two fermions on $N-6$ levels. The polytope of the first component can be found in \cite{VW}. However, it is difficult to determine the join of the two momentum images. In order to obtain some subset of the intersection of the join with $\mkt_+$, one can compute the join of the momentum rosettes of the components, which is a straightforward task. Such a procedure simplifies for sufficiently large $N$, as all the polytopes $\mP(\CC^2\otimes\CC^2\otimes\CC^{M})$ for $M\geq 4$ are isometric \cite{VW}. This means that vertices of $\mP(\CC^2\otimes\CC^2\otimes\CC^{M+1})$ are of the form $(v,0)$, where $v$ is a vertex of $\mP(\CC^2\otimes\CC^2\otimes\CC^{M})$ for $M\geq 4$.

\subsection{Bosons}\label{bosons}
In the case of bosons, the quantum marginal problem is trivial, as for $L\geq2$ the whole momentum image is convex, namely
\[\fl\mu(S^L(\CC^N))\cap\mkt=\conv{\mu(\ket{L,0,\dots ,0}),\mu(\ket{0,L,0,\dots,0}),\dots,\mu(\ket{0,\dots, 0,L})}.\]
The momentum image is a $N$-simplex. The spectral polytope is the intersection of the simplex with the positive Weyl chamber. This intersection is the cone at the highest weight intersected with $\mkt_+$. The cone at the highest weight is generated by rays from the highest weight to the weights from $N_2$:
\[(1-t)\lambda+t\mu(\ket{L-2,0,\dots,0,2,0,\dots,0}),\ t\geq0.\]
The intersection of the cone with $\mkt_+$ takes place for $t\geq1$, i.e. outside the convex hull of $\lambda$ and $\Lambda_2$, but the spectral polytope is still equal to $\mc{P}_2$. This is because the entire momentum image is convex.

\subsection{Fermionic Fock space}
As we explained in section \ref{sec:preliminaries}, the $Spin(2N)$ representation on the Fock space has two irreducible components. Therefore, we will consider two momentum maps, each stemming from an irreducible component. The stabiliser of the highest weight is in $\mc{F}_e$ is $K_\lambda=U(N)$, where the positive root operators of $K_\lambda$ are of the form $a_ia_j^\dagger$, $1\leq i<j\leq N$. The $N_j$ spaces are the components with the number of particles equal to $2j$, i.e.
\[\fl N_j(\mc{F}_e)=\Lambda^{2j}(\CC^N)=\SpC{\ket{a_1}\wedge\ket{a_2}\wedge\dots\wedge\ket{a_{2j}}:\ 1\leq a_1<\dots<a_{2j}\leq N}.\]
For $\mc{F}_o$, the stabiliser of the highest weight is $K_\lambda=U(N)$, with the positive root operators of the form $a_ia_j^\dagger$, $1<i<j\leq N-1$ and $a_ia_N$, $1\leq i\leq N-1$. The $N_j$ spaces are
\begin{eqnarray*}
N_j(\mc{F}_o)\cong \Lambda^{2j-1}(\CC^{N-1})\oplus\Lambda^{2j}(\CC^{N-1})=\\=\SpC{\ket{a_1}\wedge\dots\wedge\ket{a_{2j-1}}:1\leq a_1<\dots<a_{2j-1}\leq N-1}\oplus \\ \oplus\ \SpC{\ket{a_1}\wedge\dots\wedge\ket{a_{2j}}\wedge\ket{N}:1\leq a_1<\dots<a_{2j}\leq N-1}.
\end{eqnarray*}

The representations with $N\leq 5$ are spherical, so the spectral polytope is equal to the polytope from the doubly excited states. The case $N=1$ is trivial, as the components are one-dimensional, i.e. they consist of the highest weight spaces, and their momentum images are single points $\mu(\klambda)=\lambda$. For $N=2$ we have
\[\mc{F}_e=\SpC{\vac}\oplus\SpC{a_1^\dagger a_2^\dagger\vac}\cong \CC\oplus\CC.\]
There are two weight vectors, who are root-neighbours, which means that the line segment connecting the two weights is not in the momentum image. This means that $\mu(\mc{F}_e)\cap\mkt_+=\left(\frac{1}{2},\frac{1}{2}\right)$. Similarly, $\mu(\mc{F}_o)\cap\mkt_+=\left(\frac{1}{2},-\frac{1}{2}\right)$. For $N=3$, we have
\begin{eqnarray*}
\fl \mc{F}_e=\SpC{\vac,\ket{1}\wedge\ket{2},\ket{1}\wedge\ket{3},\ket{2}\wedge\ket{3}},\ \ 
\mc{F}_o=\SpC{\ket{1},\ket{2},\ket{3},\ket{1}\wedge\ket{2}\wedge\ket{3}}.
\end{eqnarray*}
Again, any two of the above weight vectors are root-neighbours, so the momentum polytopes are single points, i.e. $\mu(\mc{F}_e)\cap\mkt_+=\left(\frac{1}{2},\frac{1}{2},\frac{1}{2}\right)$, $\mu(\mc{F}_o)\cap\mkt_+=\left(\frac{1}{2},\frac{1}{2},-\frac{1}{2}\right)$. The case with $N=4$ is the first case, where spaces $N_2$ are nontrivial. 
\[N_2(\mc{F}_e)=\SpC{\ket{1}\wedge\ket{2}\wedge\ket{3}\wedge\ket{4}},\ N_2(\mc{F}_o)=\SpC{\ket{1}\wedge\ket{2}\wedge\ket{3}}.\]
Spaces $N_2$ are one-dimensional, hence the spectral polytopes are just the line segments.
\begin{eqnarray*}
\fl \mu(\mc{F}_e)\cap\mkt_+=\left\{(1-t)\left(\frac{1}{2},\frac{1}{2},\frac{1}{2},\frac{1}{2}\right)+t\left(-\frac{1}{2},-\frac{1}{2},-\frac{1}{2},-\frac{1}{2}\right):\ 0\leq t\leq\frac{1}{2}\right\}, \\
\fl \mu(\mc{F}_e)\cap\mkt_+=\left\{(1-t)\left(\frac{1}{2},\frac{1}{2},\frac{1}{2},-\frac{1}{2}\right)+t\left(-\frac{1}{2},-\frac{1}{2},-\frac{1}{2},\frac{1}{2}\right):\ 0\leq t\leq\frac{1}{2}\right\},
\end{eqnarray*}
where the range of $t$ is such that the line segment is contained in $\mkt_+$. Equivalently, the spectral polytopes are given by equations
\begin{eqnarray*}
\fl \mP(\mc{F}_e)=\left\{(\eta_1,\dots,\eta_4)\in\RR^4:\ \eta_1=\eta_2=\eta_{3}=\eta_4,\ 1\leq\eta_1\leq\frac{1}{2}\right\}, \\
\fl \mP(\mc{F}_o)=\left\{(\eta_1,\dots,\eta_4)\in\RR^4:\ \eta_1=\eta_2=\eta_{3},\ \eta_4=-\eta_{3},\ 1\leq\eta_1\leq\frac{1}{2}\right\}.
\end{eqnarray*}
For $N=5$, the $N_2$-spaces are 
\[N_2(\mc{F}_e)=\Lambda^4(\CC^5)\cong\CC^5,\ N_2(\mc{F}_o)=\Lambda^3(\CC^4)\oplus\Lambda^4(\CC^4)\cong\CC^4\oplus\CC.\]
It is straightforward to check that in both cases any two weights are root-neighbours. Therefore, the $K_\lambda$-momentum images of $N_2$ are the highest weights for the respective representations of $K_\lambda$, i.e.
\begin{eqnarray*}
\mu_{K_\lambda}(N_2(\mc{F}_e))\cap(\mkt_\lambda)_+=\left(\frac{1}{2},-\frac{1}{2},-\frac{1}{2},-\frac{1}{2},-\frac{1}{2}\right),\\
\mu_{K_\lambda}(N_2(\mc{F}_o))\cap(\mkt_\lambda)_+=\left(\frac{1}{2},-\frac{1}{2},-\frac{1}{2},-\frac{1}{2},\frac{1}{2}\right).
\end{eqnarray*}
Therefore, the spectral polytopes are the line segments from the highest weight to the momentum image of $N_2$, intersected with $\mkt_+$. The results read
\begin{eqnarray*}
\fl \mP(\mc{F}_e)=\left\{(\eta_1,\dots,\eta_5)\in\RR^5:\ \eta_1=\frac{1}{2},\ \eta_2=\eta_3=\eta_{4}=\eta_5,\ 1\leq\eta_2\leq\frac{1}{2}\right\}, \\
\fl \mP(\mc{F}_o)=\left\{(\eta_1,\dots,\eta_5)\in\RR^5:\ \eta_1=\frac{1}{2},\ \eta_2=\eta_3=\eta_{4},\ \eta_5=-\eta_{4},\ 1\leq\eta_2\leq\frac{1}{2}\right\}.
\end{eqnarray*}
The spin representations of $Spin(2N)$ are not spherical for $N\geq 6$. Hence, the polytope from doubly excited states may not be equal to the entire spectral polytope. However, the $\mP_2$ polytopes for $N=6$ and $N=7$ are easy to compute, as they are the convex hulls of some known polytopes for the scenarios with a fixed number of fermions and the highest weight. The question of whether in these cases we have $\mP_2=\mP$ remains open.

\section{Summary and outlook}
In this work, we describe the set of quantum marginals, which are the one-particle reduced density matrices stemming from a pure state. By the local unitary symmetry of the problem, the set of all one-particle reduced density matrices that are compatible with some pure quantum state is fully described by their spectra. Notably, the set of admissible spectra forms a polytope, which is called the spectral polytope. Formulating the problem of the description of the spectral polytope in terms of representation theory of compact Lie groups allows us to treat a variety of physical scenarios in a consistent way.  In particular, we applied our methodology to the description of the set of correlation matrices in the fermionic Fock space. Using the local description of the spectral polytope around its distinguished vertex (the highest weight), we gave a general construction of two polytopes that bound the spectral polytope. The polytope, which is the upper bound involves all the $j$-excitation spaces. Direct 
computation of this polytope is of interest in quantum chemistry \cite{benavides2016natural,benavidesharmonium}, as the facets of the polytope bound the spectral polytope sharply, hence they yield some generalised Pauli constraints for systems of $L$ fermions in $N$ modes. The generalised Pauli constraints give rise to the phenomenon of pinning and quasipinning of fermionic occupation numbers \cite{schilling2013natural,schilling2015hubbard,benavidespinning1,benavidespinning2,tennie2016pinning1,tennie2016pinning2,tennie2017influence}. These constraints are expected to be easier to compute than the constraints for the entire spectral polytope. To this end, it is crucial to develop new tools that allow one to compute the {\it joins} of momentum maps (see Definition \ref{def:join}). The construction of the polytope being the lower bound involves only a small part of the entire Hilbert space, namely the doubly excited states and its description is more tractable. We classified quantum systems, where both bounds 
coincide giving the whole spectral polytope. In particular, we obtained new results regarding the spectral polytopes in the fermionic Fock space. As the examples computed in Section \ref{sec:computational_results} show, the spectral polytopes of $j$-excitation spaces have an inductive structure in the sense, that they correspond to smaller quantum systems of analogous type. In particular, in order to compute the local cone at the ground state for distinguishable particles, one needs the knowledge of the spectral polytopes for systems of distinguishable particles with a smaller number of modes. A similar phenomenon is true for systems of fermions. It would be interesting to explore further this behaviour to obtain other new bounds for the spectral polytope.

Another way to develop further the methodology introduced in this paper would be the description of the local cones around other vertices of the spectral polytope. We expect that the spectral polytope is the intersection of a small number of cones, while the remaining vertices come from the intersections of the cones.

The results of this paper are also useful for finding new classes of maximally entangled states. In the SLOCC classification the maximally entangled states are the states that are mapped by the momentum map to the vertex of the positive Weyl chamber. In all scenarios considered in this paper the polytope from doubly excited states contains the vertex of the positive Weyl chamber. Therefore, there are some classes of doubly excited states that are maximally entangled in the sense of the said SLOCC classification.

\ack
We would like to thank Marek Ku\'s for encouragement and Adam Sawicki and Micha\l\ Oszmaniec for many fruitful discussions. We thank Peter Heinzner for the discussions during our stay in Bochum in August 2016. TM would like to thank the University of G\"{o}ttingen for hospitality during his stay in the summer of 2016, where this work was partially done. TM was supported by DAAD Short-Term Reserach Grant no. 57214227, Polish Ministry of Science and Higher Education ``Diamentowy Grant'' no. DI2013 016543 and ERC grant QOLAPS. V.V.T. is supported by the DFG grant Sachbeihilfe DFG-AZ: TS 352/1-1.

\section*{References}

\end{document}